\definecolor{dullmagenta}{rgb}{0.4,0,0.4} 
\tikzset{
  @arc through/.style 2 args={
    to path={
      \pgfextra
        \pgfextract@process\pgf@tostart{\tikz@scan@one@point\pgfutil@firstofone(\tikztostart)\relax}%
        \pgfextract@process\pgf@tothrough{\tikz@scan@one@point\pgfutil@firstofone#1}%
        \pgfextract@process\pgf@totarget{\tikz@scan@one@point\pgfutil@firstofone(\tikztotarget)\relax}%
        \pgfextract@process\pgf@topointMidA{\pgfpointlineattime{.5}{\pgf@tostart}{\pgf@tothrough}}%
        \pgfextract@process\pgf@topointMidB{\pgfpointlineattime{.5}{\pgf@totarget}{\pgf@tothrough}}%
        \pgfextract@process\pgf@tocenter{%
          \pgfpointintersectionoflines{\pgf@topointMidA}
            {\pgfmathrotatepointaround{\pgf@tothrough}{\pgf@topointMidA}{90}}
            {\pgf@topointMidB}{\pgfmathrotatepointaround{\pgf@tothrough}{\pgf@topointMidB}{90}}}%
        \pgfcoordinate{arc through center}{\pgf@tocenter}%
        \pgfpointdiff{\pgf@tocenter}{\pgf@tostart}%
        \pgfmathveclen@{\pgfmath@tonumber\pgf@x}{\pgfmath@tonumber\pgf@y}%
        \edef\pgf@toradius{\pgfmathresult pt}
        \pgfmathanglebetweenpoints{\pgf@tocenter}{\pgf@tostart}%
        \let\pgf@tostartangle\pgfmathresult
        \pgfmathanglebetweenpoints{\pgf@tocenter}{\pgf@totarget}%
        \let\pgf@toendangle\pgfmathresult
        \ifdim\pgf@tostartangle pt>\pgf@toendangle pt\relax
          \pgfmathsetmacro\pgf@tostartangle{\pgf@tostartangle-360}%
        \fi
        #2%
          \pgfmathsetmacro\pgf@toendangle{\pgf@toendangle-360}%
        \fi
      \endpgfextra
      arc [radius=+\pgf@toradius, start angle=\pgf@tostartangle, end angle=\pgf@toendangle] \tikztonodes
    }},
  arc through ccw/.style={@arc through={#1}{\iffalse}},
  arc through cw/.style={@arc through={#1}{\iftrue}},
}
\newcommand\mpwS[1]{{\let\helpcmd\sout\parhelp#1\par\relax\relax} }
\newcommand{\Z}{\mathbb{Z}}
\newcommand{\R}{\mathbb{R}}
\newcommand{\C}{\mathbb{C}}
\newcommand{\ph}{{\phantom{.\!\!}}}
\newcommand{\settt}[1]{\mathtt{#1}}
\newcommand{\grp}[1]{\mathsf{#1}}
\newcommand{\spc}[1]{\mathcal{#1}}
\newcommand{\Cspace}{{\rm C}}
\def\d{{\rm d}}
\def \vec#1{{\bm #1}}
\newcommand{\Lin}{\mathsf{Lin}}
\def\>{\rangle}
\def\<{\langle}
\def\kk{\>\!\>}
\def\bb{\<\!\<}
\newcommand{\wc}{{\rm wc}}
\newcommand{\ent}{{\rm ent}}
\newcommand{\cov}{{\rm cov}}
\newcommand{\map}[1]{\mathcal{#1}}
\newcommand{\Tr}{\operatorname{Tr}}
\newcommand{\id}{{I}} 
\newcommand{\St}{{\mathsf{St}}}
\newcommand{\app}{\text{Appendix}}
\newcommand{\Mspace}{\vspace{0.18cm}} 
\newtheorem{theo}{Theorem}
\newtheorem{lemma}{Lemma}
\newtheorem{prop}{Proposition}
\newcommand{\red}[1]{{\color{red} #1}}
\def\qed{$\blacksquare$ \newline}
\begin{document} 
\title{Optimal Universal Quantum Error Correction via Bounded Reference Frames} 

\author{Yuxiang Yang} \affiliation{Institute for Theoretical Physics, ETH Z\"urich, Switzerland}
\affiliation{QICI Quantum Information and Computation Initiative, Department of Computer Science, The University of Hong Kong, Pokfulam Road, Hong Kong}

\author{Yin Mo}
\affiliation{QICI Quantum Information and Computation Initiative, Department of Computer Science, The University of Hong Kong, Pokfulam Road, Hong Kong}

\author{Joseph M. Renes} \affiliation{Institute for Theoretical Physics, ETH Z\"urich, Switzerland}

\author{Giulio Chiribella}
\affiliation{QICI Quantum Information and Computation Initiative, Department of Computer Science, The University of Hong Kong, Pokfulam Road, Hong Kong}
\affiliation{Department of Computer Science, Parks Road, Oxford, OX1 3QD, UK}
\affiliation{Perimeter Institute for Theoretical Physics, Waterloo, Ontario N2L 2Y5, Canada}
\affiliation{The University of Hong Kong Shenzhen Institute of Research and Innovation, 5/F, Key Laboratory Platform Building, No.6, Yuexing 2nd Rd., Nanshan, Shenzhen 518057, China}

\author{Mischa P. Woods} \affiliation{Institute for Theoretical Physics, ETH Z\"urich, Switzerland}
\email{mischa@phys.ethz.ch}

\begin{abstract}  
Error correcting codes with a universal set of transversal gates are a desideratum for quantum computing. Such codes, however, are ruled out by the Eastin-Knill theorem.
Moreover, the theorem also rules out codes which are covariant with respect to the action of transversal unitary operations forming continuous symmetries. 
In this work, starting from an arbitrary code, we construct \emph{approximate} codes which are covariant with respect to the entire group of local unitary gates in dimension $d$ $(<\infty)$, using quantum reference frames. We show that our codes are capable of efficiently correcting different types of erasure errors. When only a small fraction of the $n$ qudits upon which the code is built are erased, our covariant code has an error that scales as $1/n^2$, which is reminiscent of the Heisenberg limit of quantum metrology. When every qudit has a chance of being erased, our covariant code has an error that scales as $1/n$. We show that the error scaling is optimal in both cases. Our approach has implications for fault-tolerant quantum computing, reference frame error correction, and the AdS-CFT duality.
\end{abstract}

\maketitle
 
\section{Introduction}\label{sec:introduction}
Reliable universal quantum computation requires fault-tolerant error correction~\cite{Campbell2017}, 
the 
ability to correct errors with gates that are themselves noisy. 
Achieving such quantum error correcting codes (QECCs) is a notoriously challenging task, due to fundamental limitations such as quantum no cloning.\Mspace

One of the earliest proposals to achieve fault-tolerance  for universal quantum computation was the idea of implementing all the logical gates `transversally'~\cite{Shor1996,Gottesman2006}, which is the following idea: Given a qudit logical space and an $n$-qudit physical space, find an encoder $\map{E}_\cov$ which maps all logical gates $V\in \grp{SU}(d)$, the group of unitaries in $d$ dimensions, to a tensor product of physical gates:
\begin{align}\label{covariance}
\map{E}_\cov\circ\map{V}_{\rm L}=\left(\map{V}_1\otimes\cdots\otimes\map{V}_n\right)\circ\map{E}_\cov
\end{align}
where $\map{V}_{\rm L}(\cdot)=V(\cdot)V^\dag$ is the unitary channel corresponding to logical gate $V_{\rm L}$, and $\map{V}_m$ $(m=1,\ldots,n)$ is either the unitary channel $V(\cdot)V^\dag$ or the identity channel, on the $m$th physical qudit.\footnote{One may feel that Eq.~\eqref{covariance} should be relaxed to hold only for a finite, universal gate set. However, observe that since the transversality condition is preserved under gate composition, and the set of gates generated by a universal gate set is dense in $\grp{SU}(d)$, these two conditions are effectively equivalent.}
Therefore, in such proposals, all gates required to achieve universal  computation  would be realised at the physical level by applying a tensor product of local gates. This structure is naturally suited to error correction, since an error in one of the physical qudits does not easily propagate to other physical qudits | keeping errors local so that it can be efficiently corrected afterwards.\Mspace

However, as shown by Easting and Knill, there cannot exist a code $\map{E}$ satisfying Eq.~\eqref{covariance}, with a finite dimensional code space that \emph{perfectly} corrects local errors~\cite{eastin2009restrictions}. Moreover, their result actually holds more generally for any set of qudit gates which form a continuous symmetry, that is to say, any set of gates $\mathcal{V}$ which act on qudits and form a Lie subgroup of $\grp{SU}(d)$~\cite{eastin2009restrictions, woods2020continuousgroups,faist2019continuous}.\Mspace

The code space is  simply the image, in the physical space, of the encoding map, while a code that perfectly corrects is simply one for which the decoder $\map{D}_\cov$ can correct any error $\map{C}$ and still decode perfectly any logical state $\rho_{\rm L}$:
\begin{align}
\map{D}_\cov\circ\map{C}\circ\map{E}_\cov\circ \rho_{\rm L}=\rho_{\rm L}.
\end{align}
The Eastin-Knill theorem does not rule out the ability to correct local errors for a non-universal set of gates, however. A case in point is the set of Clifford gates, which can be implemented transversally but lack one crucial gate in order to form a universal set. Supplementing this set with the missing gate is the idea behind one of the frontrunner proposals for universal quantum computation, using so-called `magic states'~\cite{bravyi2005universal}. Several other schemes to relax the transversality condition also exist~\cite{knill1996threshold,bombin2007topological,paetznick2013universal,jochym2014using,bombin2015gauge,yoder2016universal,Browneaay4929}.\Mspace

In this work, we demonstrate a different approach to circumvent the Eastin-Knill theorem. In particular, all gates can be applied transversally under various local error models in our scheme.
The key ingredients are quantum reference frames and randomness. We now introduce the former before explaining their relevance to our scheme. \Mspace

In physics, all observations are reported relative to a reference frame. While the reference frames have traditionally been \emph{treated according to the laws of classical physics}, the usage of quantum states to encode reference frame information (a Cartesian coordinate system, for instance) in quantum superpositions has been shown to be advantageous in problems involving reference frame alignment and overcoming super-selection rules \cite{peres2001transmission,bagan2001aligning,chiribella2004efficient,bartlett2007reference,PhysRevA.69.052326,2013Marvian,PhysRevA.78.022304}.
Quantum reference frames also play a crucial role in demystifying a number of controversies and paradoxes \cite{Bartlett2006,Angelo_2011,Angelo_2012,2011.01951}, in addition to unifying various different paradigms \cite{1912.00033}.  There has been a renaissance very recently, due to generalisation of quantum reference frame transformations to a ``superposition of coordinate transformations" which consistently describe the physics without appealing to an external, absolute reference frame \cite{Giacomini2019,1809.05093,PhysRevLett.123.090404,Vanrietvelde2020changeof,delahamette2021perspectiveneutral,hoehn2021quantum,delaHamette2020quantumreference}.\Mspace

Quantum reference frames in the context of QECCs was first explored in Ref.~\cite{hayden2017error}, where classical idealised reference frames\footnote{An idealised reference frame, also know as a ``perfect'' reference frame, is one whose orientation can be deterministically obtained via measurement which is contrary to a quantum reference frame whose orientation is subject to quantum uncertainty. See \cite{bartlett2007reference} for definition and \cite{woods2020continuousgroups} for further insight.} were employed. The hypothetical setup allows for exact decoding, and does not violate the Eastin-Knill theorem due to the use of an infinite dimensional code space.
While impractical due to infinite dimensionality, the approach in Ref.~\cite{hayden2017error} paved the way for another approach to circumvent the Eastin-Knill theorem: using finite dimensional reference frames and decoders which only recover approximately. 
This route was followed in Ref.~\cite{woods2020continuousgroups}, where finite quantum reference frames have been used to allow for a single Abelian family of transversal gates. This construction, while useful, does not allow for universal quantum computation with transversal gates.

Here we further develop this approach and design a new family of quantum reference frames to achieve QECC constructions with all logical gates being transversal. Quantum reference frames have an inherent quantum uncertainty which unavoidably leads to  a small error in the decoding. Consequently, our quantum error correcting code is only approximate. On a theoretical level, this is essential to circumvent the Eastin-Knill theorem while achieving our objective of a universal quantum gate set. On a practical level, this error can be made smaller than any chosen tolerance by increasing the size of the quantum reference frame.\Mspace

 The physical space necessary for our implementation of the QECC consists of two parts, the \emph{computational space} and the \emph{reference frame space}. Roughly speaking, the computational space is where the logical information is stored and the sequence of logical gates needed for the computation are applied on various copies. The reference frames on the other hand, play the role of recording information about which gate was applied | analogously to how gyroscopes record a Cartesian coordinate system. See Fig. \ref{fig:setup pic}.\Mspace

 At an abstract level, our encoder works by choosing a quantum reference frame which can record the ``coordinates'' of any logical state. This is to say, states which are distinguishable by measurements under transformations of $\grp{SU}(d)$ | the special group of unitary transformations in $d$ dimensions. To implement any logical gate under this encoding, we can simply apply the gate transversally on the computational space in tandem with updating our reference frames with a corresponding transformation.\Mspace

However, before this paradigm can be implemented, one needs to perform the encoding | this is where the randomness comes into play. We pick a logical gate at random, apply it transversally to the computational system and a copy to the reference frame. The reference frame information is retrieved during the decoding stage via measurement.\Mspace

\begin{figure}[]
	\includegraphics[width=1\linewidth]{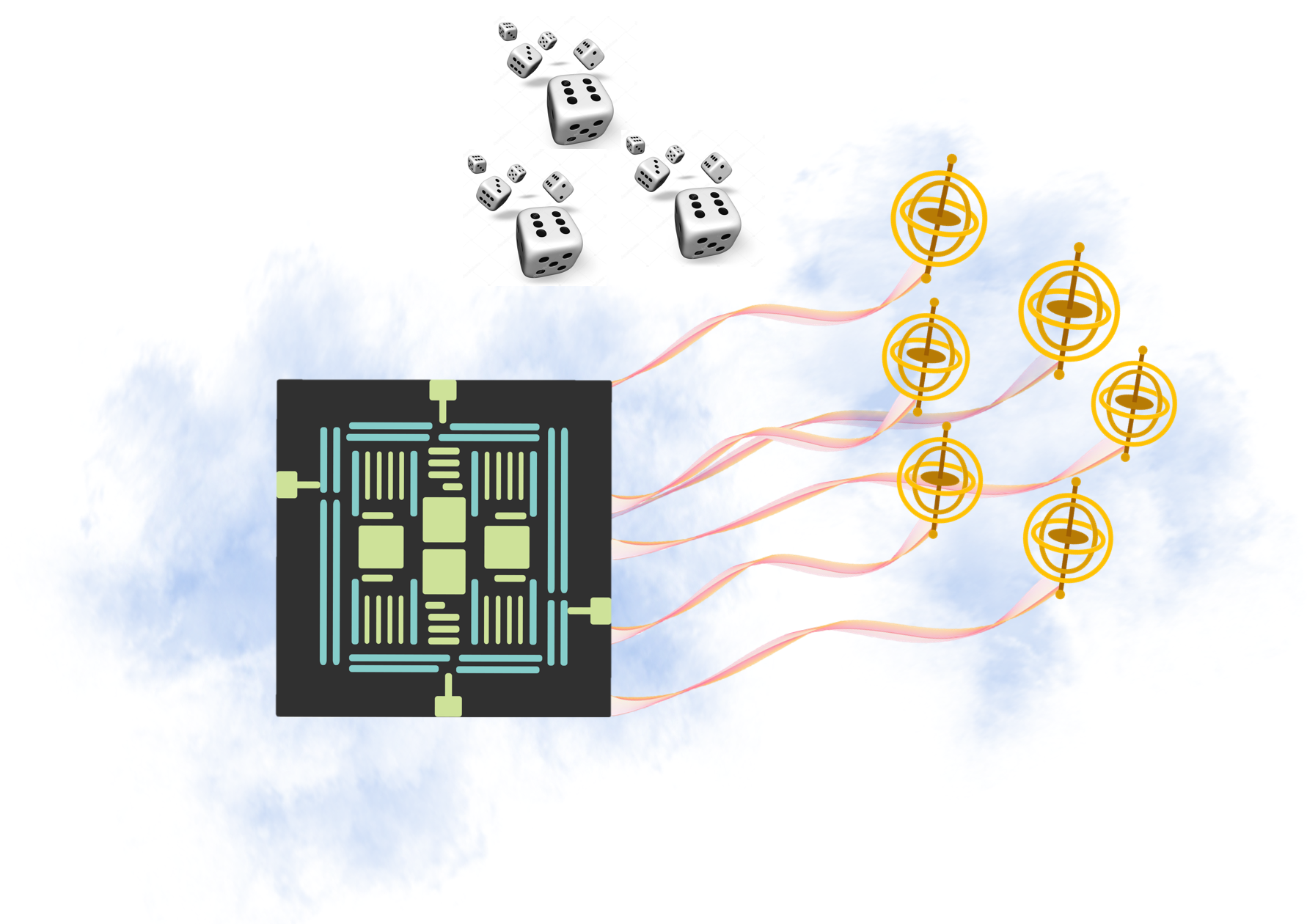}
	\caption{\textbf{Error correction via reference frames}. The physical space of the computer is over $n$ qudits which are divided into two subspaces: the computational space (black and green) and a reference frame space (yellow and brown). The former is where the gates needed for the algorithm are applied transversally, while the latter is where the reference frames are located. The two systems are only classically correlated at any given moment in our protocols. \label{fig:setup pic}}
\end{figure}

In the regime of error correction via finite quantum reference frames, the error of approximate recovery is dependent on the size of the reference frames, leading to a trade-off relation.
It is thus of prime importance to identify the optimal scaling of the approximation error with respect to the size of the reference frame, so that the cost of making the recovery errors admissible can be determined. 
Here we analyse the  error scaling of our QECC construction in two different commonly-studied error models and show that our approach is optimal in both cases. Besides the analytical analysis, we also conduct numerical simulation that further justifies our analysis in various practical error models.

In a nutshell, our contribution is to show that the use of quantum reference frames allows for approximate QECC where all gates necessary for universal quantum computation can be applied transversally, and to identify the optimal tradeoff between the approximation error and the size of the reference frames.
This was the initial idea of the pioneers of quantum error correction~\cite{Shor1996,Gottesman2006}, who, in light of the Eastin-Knill theorem, invented distinct methods alternative to finding a transversal gate set for quantum error correction. Our work thus offers a new test bed for quantum error correction whose full potential can now be further explored.  An immediate question is whether the encoder and the decoder in our construction can be implemented fault-tolerantly. As discussed in more details later, there are several paths that potentially lead to this ultimate purpose, though the concrete approach is still unknown. This is thus an important open question left for investigation in future research.

\Mspace

\section{Results}
\noindent {\bf Overview.} We start by introducing the relevant definitions to define our encoder, decoder, and characterisation of the decoding error. We then consider two errors models: the \emph{i.i.d.\null{} error model} and the \emph{weak error model} and fully characterise the performance of our QECC under these different scenarios. Details appear in the appendices.\Mspace

\noindent {\bf Construction of Covariant Encoder and Decoder.} A QECC is characterised by an encoder-decoder pair $(\map{E},\map{D})$. The encoder $\map{E}$ is a quantum channel mapping a logical qudit, whose dimension we denote by $d$, to a state of a physical register consisting of several qudits. The decoder $\map{D}$ corrects possible errors in the physical register and maps the state back to a logical qudit.
Since our protocol uses reference frames and randomness to convert an arbitrary encoder $\map{E}$ and decoder $\map{D}$ to a covariant encoder-decoder pair,  we refer to $\map{E}$ and $\map{D}$ as the \emph{subroutine} encoder and decoder. The $n$-qudit physical register is divided into two parts, the \emph{computational register} \Cspace, consisting in $n_\Cspace$ qudits and a \emph{reference frame register} {\rm R}, consisting in the remaining $n_{\rm R}=n-n_\Cspace$ qudits. The subroutine decoder, $\map{D}$, maps to the computational register only.\Mspace

In order to be resilient to errors, the reference frame register may consist of $s_{\rm R}\ge1$ identical blocks, each in the quantum state $|\psi\>$. Here we choose $|\psi\>$ to be an entangled state of a bipartite system $\spc{H}\otimes\spc{H}'$\,\, ($\spc{H}'\simeq\spc{H}$), as they
offer superior performance \cite{acin2001optimal,chiribella2004efficient,chiribella2005optimal}. Each part of the bipartite system further consists of several qudits. This reference frame is a finite dimensional quantum state $|\psi\>$. Rotating every qudit in the part of $|\psi\>$ on $\spc{H}$ by the same (unknown) unitary $U$ in $\grp{SU}(d)$ (while keeping the part on $\spc{H}'$ unchanged) yields a state that, when measured appropriately, allows one to deduce $U$ with high accuracy. The better the quality of $|\psi\>$, the smaller the uncertainty in $U$ is.\Mspace

\begin{figure} 
	\begin{subfigure}[b]{0.5\textwidth}
		\includegraphics[width=1\linewidth]{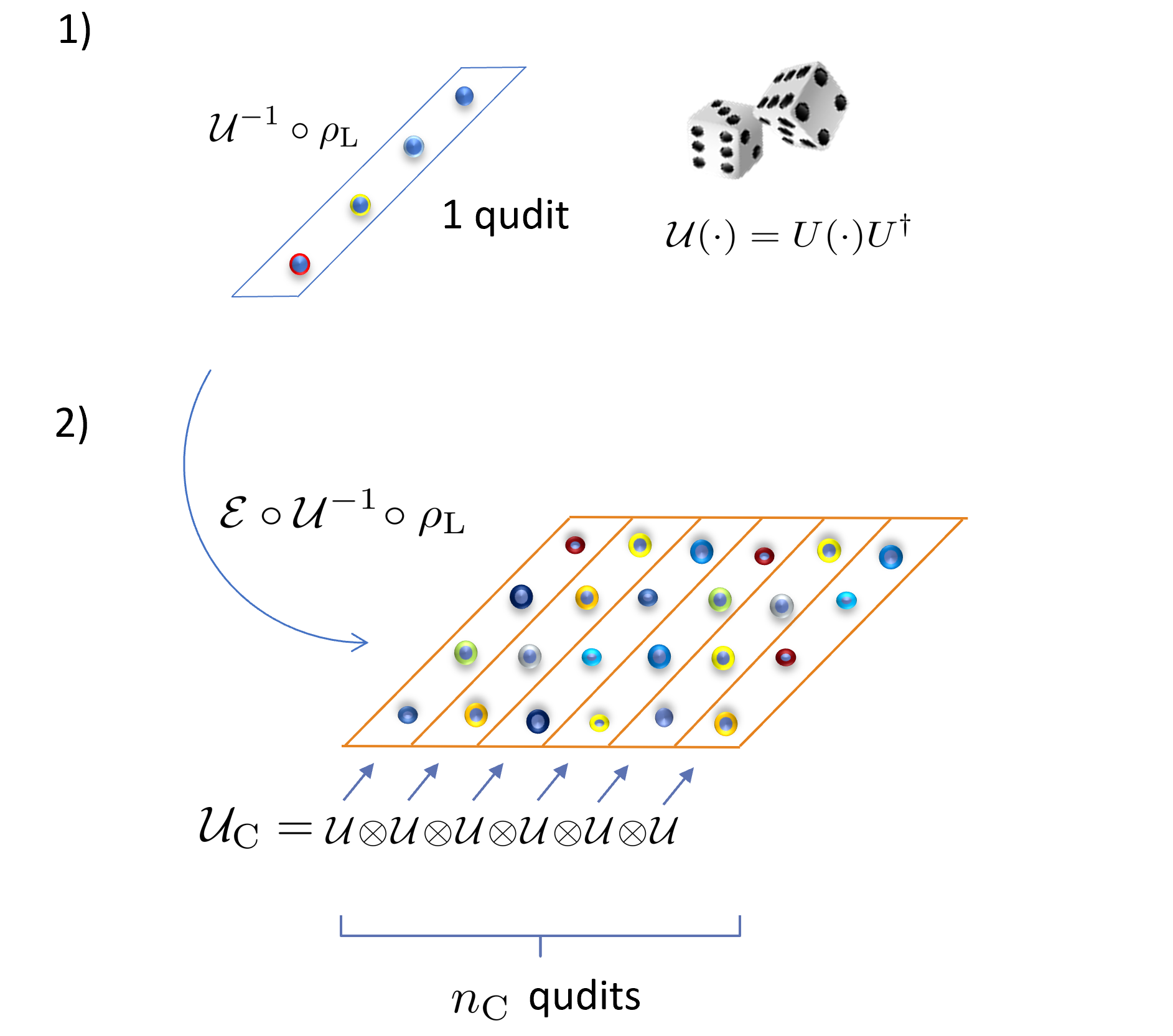}
	\end{subfigure}
	
	\begin{subfigure}[b]{0.48\textwidth}
		\includegraphics[width=1\linewidth]{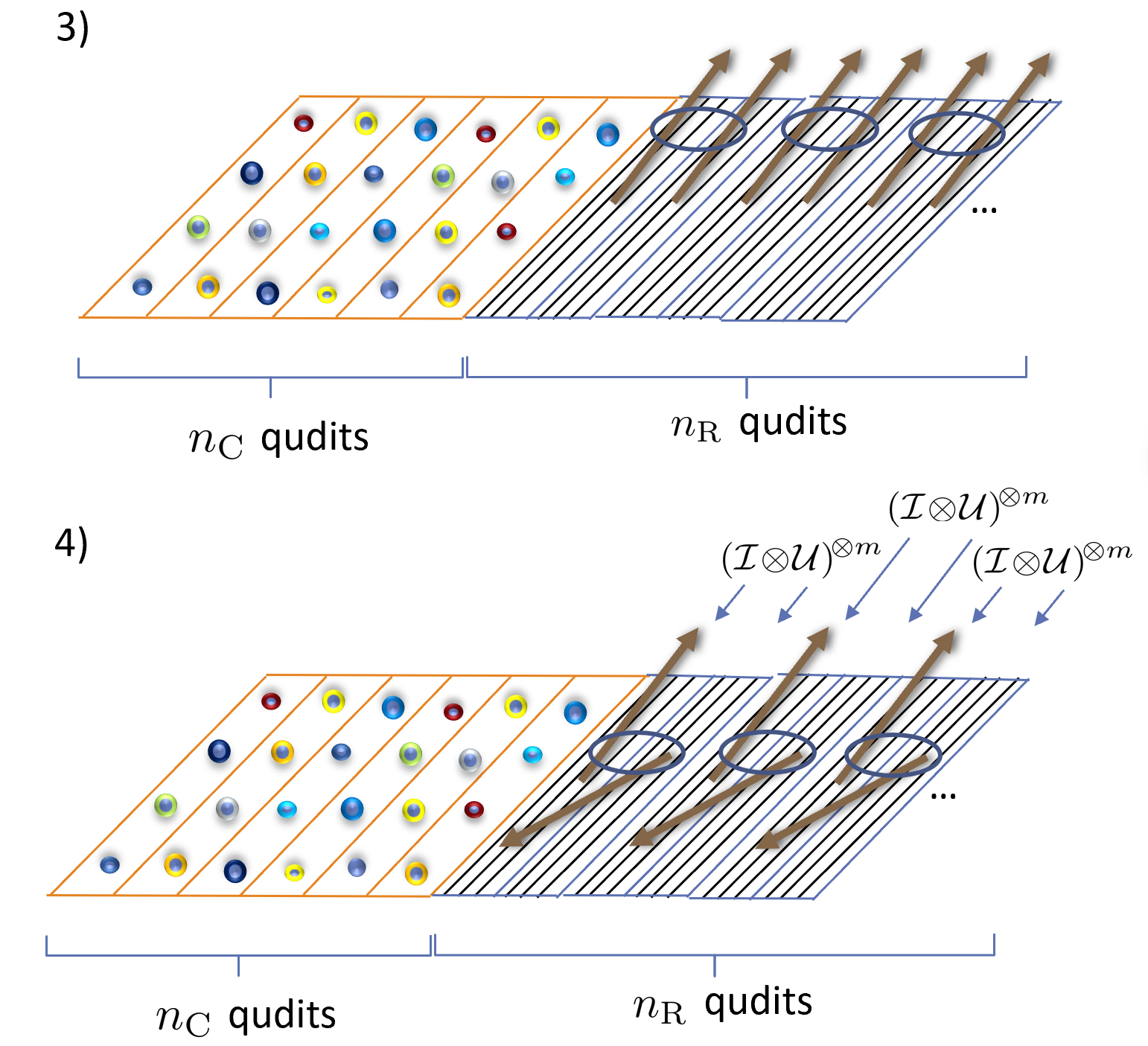}
	\end{subfigure} 

	\caption[l]{\textbf{Covariant encoder construction}. %
	(The following is an in-principle method to implement our QECC. For practical implementations, see section \emph{Towards fault-tolerant QECCs}.) Given an arbitrary channel $\map{E}$ encoding one logical qudit into $n_{\rm C}$ qudits, and a corresponding decoder $\map{D}$, our covariant encoding protocol runs as follows: 1) Choose a unitary $U\in\grp{SU}(d)$ at random (in practice, this can be achieved by sampling from a large enough finite set of gates) and apply its inverse unitary channel $\map{U}^{-1}_{\rm L}$ to the logical state $\rho_{\rm L}$ one wishes to encode.  2) Perform $\map{U}_\Cspace\circ\map{E}$, where $\map{U}_{\Cspace}$ is the unitary channel corresponding to the representation of $U$ in the computational space, and $\map{U}_\Cspace$ are local gates which will be transversal for the new code. 3) Such an operation alone does not result in a useful code, as the identification of the particular $U$ is needed to later correct errors and decode. Thus, to store $U$ efficiently, we append $n_{\rm R}$ quantum reference frames and 4), apply it to a quantum reference frame, resulting in the unitary's identity being encoded into  the reference frame's state. The outcome of steps 1) to 4) is a new code with a covariant encoder $\map{E}_\cov$ which outputs an encoded state on the original computational and reference frame registers, \Cspace{} and {\rm R}.}\label{eq:fig encoder}
\end{figure}

Our covariant encoder is defined as
\begin{align}\label{E-cov}
\map{E}_\cov(\cdot)=\int\d U~\map{U}_\Cspace\circ\map{E}\circ\map{U}_{\rm L}^{-1}(\cdot)\otimes\map{U}_{\rm R}(\Psi)\,,
\end{align} 
where $\Psi:=|\Psi\>\<\Psi|$ with $|\Psi\>=|\psi\>^{\otimes s_{\rm R}}$ is the initial state of the reference frame register and $\d U$ is the Haar measure on $\grp{SU}(d)$. The unitary channels in Eq. \eqref{E-cov} are defined by $\map{U}_\Cspace:=\map{U}^{\otimes n_\Cspace}$ and $\map{U}_{\rm L}:=\map{U}$ the actions of an element $U\in\grp{SU}(d)$ on the computational register and the logical register, and by $\map{U}_{\rm R}:=(\map{U}\otimes\map{I})^{\otimes n_{\rm R}/2}$ its action on the reference frame register (as one part of the reference frame state is rotated and the other part remains unchanged), where $\map{U}(\cdot):=U(\cdot)U^\dag$. See Fig. \ref{eq:fig encoder} for an example of how to construct the encoder.\Mspace

The covariant decoder consists of first measuring the reference frame with a POVM $\{M_{\hat{U}}\d\hat{U}\}$, which yields an estimate $\hat U$, and then applying  $\hat{\map{U}}_\Cspace^{-1}$. Here $\d\hat{U}$ is again the Haar measure. We then perform the subroutine decoder $\map{D}$ of the original code to correct errors, and finally we redo $\hat{\map{U}}_{\rm L}$ to obtain the final, decoded output. This gives rise to the channel 
\begin{align}\label{D-cov}
\map{D}_\cov(\cdot)= \int \d\hat{U}~\left(\hat{\map{U}}_{\rm L}\circ\map{D}\circ\hat{\map{U}}_\Cspace^{-1}\otimes\map{M}_{\hat{U}}\right) (\cdot)\,,
\end{align}
where the map $\map{M}_{\hat{U}}(\cdot):=\Tr[(\cdot)M_{\hat{U}}]$ outputs the probability density function of obtaining measurement outcome estimate $\hat U$. \Mspace 

One may be concerned about the practicality of constructing Eqs.\ \eqref{E-cov} and \eqref{D-cov}, since sampling from the Haar measure is essentially to randomly apply all unitary gates, which might be computationally hard. In fact, we show later that, using the technique of unitary $t$-designs \cite{PhysRevLett.98.130502,Dahlsten_2007,PhysRevA.80.012304,CMPHarrow,Diniz2011,PhysRevA.78.062329,CMPBrandao,PhysRevX.7.021006}, we only require to sample from a discrete set of unitary gates in a computationally efficient manner.\Mspace

Errors arise in this procedure both from any noise that occurs between encoder-decoder, as well as in the measurement of the reference frame itself. 
Nevertheless, we show that the overall error vanishes with the number of qudits upon which we construct the covariant code.\Mspace

As promised, the encoder in Eq.~\eqref{E-cov} satisfies the transversality condition, Eq.~\eqref{covariance}, thanks to the well-known invariance properties of the Haar measure. In our particular case, it takes the form:
\begin{align}
\map{E}_\cov\circ\map{V}_{\rm L}=(\map{V}_\Cspace\otimes\map{V}_{\rm R})\circ\map{E}_\cov
\end{align}
for any $V\in\grp{SU}(d)$, where $\map{V}_\Cspace:=\map{V}^{\otimes n_\Cspace}$ and $\map{V}_{\rm L}:=\map{V}$, $\map{V}_{\rm R}:=(\map{V}\otimes\map{I})^{\otimes n_{\rm R}/2}$ with $\map{V}:=V(\cdot) V^\dag$, and $\map{I}$ the identity channel. Notice that since our decoder is also covariant, i.e.\null{} $
\map{D}_\cov\circ(\map{V}_\Cspace\otimes\map{V}_{\rm R})=\map{V}_{\rm L}\circ\map{D}_\cov\,$, our entire code is covariant: $\map{D}_\cov\circ\map{E}_\cov\circ\map{V}_{\rm L}=\map{V}_{\rm L}\circ\map{D}_\cov\circ\map{E}_\cov\,$.

\medskip


\noindent {\bf Reference Frame structure.}
There is a trade-off: the optimal pair of the initial reference frame state $\Psi$ and the POVM $\{M_{\hat{U}}\d\hat{U}\}$, which yields the smallest decoding errors when no noise is present, tends not to be very resilient to noise. Conversely, those which are highly resilient to noise, tend to lead to substantial decoding errors. As such, the optimal reference frame choice depends on the noise model. We will only provide the general structure considered here, and specialise later as specific noise models are introduced. \Mspace

We divide the $n_{\rm R}$ reference frame register qudits into $s_{\rm R}$ blocks of identical systems, i.e., $\spc{H}_{\rm R}=\spc{H}_{{\rm R},1}\otimes\spc{H}_{{\rm R},2}\otimes\cdots\otimes\spc{H}_{{\rm R},s_{\rm R}}$ and construct identical states on each of them, $\Psi=\psi^{\otimes s_{\rm R}}$. Each block consists in $2m$ qudits each, such that $2m\, s_{\rm R} = n_{\rm R}$. We assume that the error occurring on each of the reference frames is detectable, i.e., it acts as follows for all $k$: it takes the state on $\spc{H}_{{\rm R},k}$ to an error space $\spc{H}_{{\rm err},k}$ that is orthogonal to $\spc{H}_{{\rm R},k}$. This is the case, for instance, for erasure errors. 
\Mspace
 
The measurement performed on the reference frame during the decoding stage, is a two step process: We first perform binary projective measurements onto $\spc{H}_{{\rm R},k}$ and $\spc{H}_{{\rm err},k}$ for all blocks $k=1,\ldots,s_{\rm R}$, followed by discarding those blocks whose state is in the error space. We then measure the remaining blocks to obtain an estimate $\hat U$, which is used to decode the information in the computational space [see Eq.~(\ref{D-cov})].
\Mspace

\noindent {\bf Decoding Error and Noise Structure.} There are two contributing factors to errors in our covariant scheme: noise occurring between encoding and decoding, and errors in the decoding due to the finiteness of the reference frames. Here we show that our protocol is able to recover from both of these. Since these errors hinder our ability to decode, we refer to them as \emph{decoding errors} and quantify it via the diamond norm:
\begin{align}\label{error-diamond 2}
\epsilon_\cov\!:=\max_{V\in\grp{SU}(d)}\frac12 \left\|\map{D}_\cov\circ\map{C}\circ(\map{V}_\Cspace\otimes\map{V}_{\rm R})\circ\map{E}_\cov\,-\,\map{V}_{\rm L}\right\|_\diamond\!,
\end{align}
where $\map{C}$ is the noise channel being considered, and the subscript in $\epsilon_\cov$ is to remind the reader that the error is originating from making an encoder-decoder pair covariant. We assume it to be a statistical mixture over channels on the computation and reference frame registers, i.e.\null{} 
\begin{align}\label{eq:error channel structure}
\map{C}= \sum_j p_j\, \map{C}_{j,\Cspace}\otimes\map{C}_{j,{\rm R}},
\end{align}
where $\map{C}_{j,\Cspace/{\rm R}}$ are channels characterising the error on the computational/reference frame register, and $\{p_j\}$ is a probability distribution.
We also need to assume that the errors are covariant, namely $\map{C}\circ (\map{V}_\Cspace\otimes\map{V}_{\rm R})=(\map{V}_\Cspace\otimes\map{V}_{\rm R})\circ\map{C}$ for all $V\in\grp{SU}(d)$. For example, erasure errors are of this form.\Mspace

In \app{}~\ref{sec-general-qec}, we show that the total decoding error satisfies a bound of the form $\epsilon_{\cov}\leq \sum_j p_j C_j$, where $C_j$ only depends on either the computational error $\map{C}_{j,\Cspace}$ or the reference frame error $\map{C}_{j,{\rm R}}$; whichever error is dominant.\Mspace

\noindent {\bf Error bounds for the Weak Error Model.} Our first result concerns the case in which relatively few errors occur. Specifically, the error is that at most $n_{\rm e}$ qudits among the $n$ qudit systems composing the computational register and the reference frame register, are randomly lost:
\begin{align}\label{weak-error-model 2}
\map{C}=\sum_{\settt{s}} p_{\settt{s}}\left(\map{C}_{\rm e}\right)_{\settt{s}},
\end{align}
where the summation runs over all subsets $\settt{s}\subset\{1,2,\ldots,n\}$ of cardinality at most $n_{\rm e}$, and $\left(\map{C}_{\rm e}\right)_{\settt{s}}$ denotes the erasure of qudits whose labels are in the set $\settt{s}$, and $\{p_{\settt{s}}\}$ is a probability distribution.\Mspace

Any quantum error-correcting code over qudits of distance at least $k+1$ can perfectly correct $k$ erasures~\cite{Gottesman2006}. For instance, the polynomial codes of Aharonov and Ben-Or are $[[2k+1,1,k+1]]_d$ stabilizer codes with this property~\cite{aharonov_fault-tolerant_1997}. In these cases, we employ one of the perfect codes as the non-covariant subroutine $(\map{E},\map{D})$ of our code. Since we only need to encode one logical qudit, the perfect code requires only $n_\Cspace=O(1)$ computational qudits.

This is to say: the subroutine code $(\map{E},\map{D})$ we use requires only $O(1)$ computational qudits and corrects perfectly up to $n_{\rm e}$ erasure errors on the computational register.  Next we arrange the reference frame to perform well against this type of noise. To this purpose, we set $s_{\rm R}= n_{\rm e}+1$, so that the reference frames has $n_{\rm e}+1$ blocks. Crucially, these conditions imply that at least one of the reference frame states will survive the erasure, and we can measure it to obtain the embedded rotation. 

On each block we construct a reference frame state. Since the noise is weak, we can utilise as much entanglement as possible to enhance the accuracy of the reference frame decoding. To this purpose, we design a new family of highly coupled quantum reference frame states, which turn out to be a highly nontrivial generalisation of the sine-shape states \cite{buvzek1999optimal}, and thus we name them \emph{generalised sine states}. Using the generalised sine states as reference frames for our error correction protocol, we can recover the information with high accuracy. We derive an upper bound on the recovery error $\epsilon_{\cov}$ in our protocol for the weak error model, defined by Eq.\ (\ref{weak-error-model 2}):
\begin{align}
\epsilon_{\cov}\le \frac{81\pi^2d^4(d-1)^2(n_{\rm e}+1)^2(n_\Cspace+d-1)^2}{2n_{\rm R}^2}+O(n_{\rm R}^{-3}). \label{thm-weak-err 2}
\end{align}
All details can be found in \app{}~\ref{sec-weak-error}.
Since $n_\Cspace$ can be chosen to be $O(1)$ and $n_{\rm R}=n-n_\Cspace$ is close to $n$, our protocol achieves the Heisenberg limit $1/n^2$ with respect to the total number of qudit systems. \Mspace

It is most challenging to correct the errors in the weak error model when the locations of the erasures are most uncertain, that is, when $p_0=0$, $|\settt{s}|= n_{\rm e}$ and the $p_{\settt{s}}$ are constant, in Eq. \eqref{weak-error-model 2}. In this worse case scenario, we can lower bound the decoding error:

\begin{align}\label{eq:lower bound wea error model}
\epsilon_{\cov}\ge \frac{1}{16n^2(1+1/n_{\rm e})}.
\end{align} 
What is more, we show that \emph{any} encoder-decoder pair which satisfies the transversality condition Eq. \eqref{covariance}, encoding one qudit into $n$ physical qudits which is subject to the weak error model (\ref{weak-error-model 2}) with maximum uncertainty as described above, has a decoding error which is lower bounded by \eqref{eq:lower bound wea error model}. As shown in  \app{}~\ref{sec-weak-error}, the derivation of Eq. \eqref{eq:lower bound wea error model} comes from a strengthening of a bound found in~\cite{kubica2020using} (see also \cite{zhou2021new} for a more recent result with similar bounds).
Since the bound \eqref{eq:lower bound wea error model} matches the performance of our protocol in scaling, we conclude that the optimal error scaling of covariant codes is $1/n^2$.\Mspace 

\noindent {\bf Error bounds for the i.i.d.\null{} Error Model.} We now turn our attention to types of erasure errors which are stronger than the one considered in the previous section and more realistic.
The error affects each qudit independently, erasing the $j^\text{th}$ qudit with a probability $p_{{\rm e}}$:
\begin{align}\label{strong-error-model 2}
\map{C}=\bigotimes_{j=1}^{n}\left((1-p_{{\rm e}})\map{I}+p_{{\rm e}}\map{C}_{\rm e}\right)\qquad    p_{{\rm e}}\in\left(0,\frac12\right),
\end{align}
where $\map{C}_{\rm e}$ denotes the single-qudit erasure channel. This error model can be written in the form Eq.~\eqref{eq:error channel structure}, and is thus compatible with our prior assumptions. Since the erasure channel is degradable, if $p_{\rm e}\ge1/2$ the information leaked to the environment would not be retrievable. Note that, in general, the error model on each qudit does not have to be identical, and each qudit $j$ can have distinct probability $p_{{\rm e},j}$ of being erased. In that case, however, we can simply set $p_{\rm e}$ to be the worst case over $\{p_{{\rm e},j}\}$ and consider this more stringent model instead.\Mspace

In contrast to the weak error model case, there does not exist any code, which we can use for our subroutine code, that perfectly corrects the errors at hand. Instead, there exist pretty good codes that correct the error unless too many qudits are erased.
The quantum capacity of the erasure channel, with erasure probability $p_{\rm e}$, has been determined to be $1-2p_{\rm e}$ \cite{bennett1997capacities}. We can choose $n_\Cspace$, the number of computational qudits, to grow with $n$.
When $n$ is large, since the number of qudits we want to encode is only one qudit and is much smaller than that allowed by the capacity \big[which is $(1-2p_{\rm e})n_\Cspace$\big], the error probability would vanish exponentially in $n_\Cspace$.\Mspace

We choose the number of computational qudits to scale sub-linearly in the total number of qudits, that is, $n_\Cspace= n^\gamma$, where $\gamma\in(0,1)$ does not depend on $n$ and can be chosen to be very small. We use any subroutine code $(\map{E},\map{D})$ that encodes one qudit into $n^\gamma$ computational qudits, with the property that it has a decoding error 
$O\left(e^{-x_{d}\cdot n^\gamma}\right)$
for some $x_{d}>0$ that may depend on $d$. By a random coding argument, one can show that there exists a stabilizer code satisfying our requirement (see, e.g., \cite{gottesman1997stabilizer}), although its explicit form is not given. Recently, progress in error correcting codes also showed that quantum polar codes~\cite{renes_efficient_2012,renes_polar_2014} and Reed-Muller codes~\cite{kumar_reed-muller_2016} have the desired property. Notice that the requirement of the $O\left(e^{-x_{d}\cdot n^\gamma}\right)$ scaling is chosen for convenience of analysing the error, and it can be further relaxed in practice.\Mspace

Meanwhile, the model is now too noisy for the highly coupled reference frame state used in the weak error model to be effective. Instead, we prepare the reference frame register to have only two qudits per block, namely $2\, s_{\rm R}=n_{\rm R}$, with the reference frame on each two qudits block being the maximally entangled state. While like in the weak error model, this choice means that one erasure error only destroys one reference frame block, now there are far more blocks, which reduces the chance of them all being erased. The price to pay for this more error resilient setup, is less precise reference frames, due the smaller dimension of each block.\Mspace

As long as there are still order $n$ non erased blocks left we can achieve high performance, which happens with very high probability. Indeed, the overall error, Eq. \eqref{strong-error-model 2}, can be recast into the form:
\begin{align}\label{strong-error-model 3}
\map{C}=\sum_{k=0}^{n}p_{\rm e}^{k}(1-p_{\rm e})^{n-k}\sum_{\settt{s}:|\settt{s}|=k}\left(\map{C}_{\rm e}\right)_{\settt{s}}.
\end{align}
 Then the  number of surviving blocks follows a binomial distribution with mean $n(1-p_{\rm e})$, which is sufficient for a high accuracy decoding. This reasoning leads us to the following upper bound on the decoding error under the i.i.d.\null{} error model (\ref{strong-error-model 3}): 
For any $\alpha>0$, there exists $n_{\alpha}>0$ such that
\begin{align}\label{coverr-iid}
\epsilon_{\cov}\le \left(\frac{36(d^2-d+32)d^{\frac{d^2-d+2}{2}}}{(1-2p_{\rm e})^2\prod_{j=1}^{d}(j-1)!}\right)\left(\frac{1}{(1-2p_{\rm e})n}\right)^{1-\alpha}
\end{align}
for all $n\ge n_\alpha$. Details of the proof can be found in \app{}~\ref{sec-strong-error}.
\Mspace

The error of our protocol scales  almost as $1/n$, instead of $1/n^2$ in the previous case. This is a result of the (stronger) i.i.d.\null{} noise. In fact,  the error scaling of our protocol is still optimal for this error model.
Analogously to the weak error model, we can prove that the $1/n$ scaling is also optimal for the i.i.d.\null{} error model. In particular, we prove that all covariant QECC satisfy the following bound under the i.i.d.\null{} error model:
\begin{align}\label{eq:lower bound trong error model}
\epsilon_{\cov}\ge \frac{p_{\rm e}}{64n(1-p_{\rm e})}.
\end{align} 
In the above analysis, we focused on erasure error models, as they allow us to derive analytical bounds on our protocol's performance. Nevertheless, our protocol also performs well under other commonly encountered error types, as we show in the following.

\medskip

\noindent {\bf Numerical simulation for generic error models.} In addition to the above bounds, we also conduct numerical experiments to evaluate the performance of our protocol. As a working example, we use the ``5-qubit code" \cite{laflamme1996perfect} (5 computational qubits; one logical qubit) as the subroutine encoding and decoding pair. We consider not only erasure errors but also other common error types like dephasing errors and depolarising errors.

We first consider the weak erasure error model and use a group of identical generalised sine states as the reference frame register. 
We assume that at least one reference frame state survives the erasure error, which consists of $2m$ qubits. (Recall that the number of qubits is always even since we use entangled reference frame states.) 
As shown in Fig.~\ref{fig_nume_erasure}, the decoding error (in the leading order) is $\frac{269.2}{(2m)^2}$, matching the prediction of our theory.

\begin{figure}[]
	\includegraphics[width=0.9\linewidth]{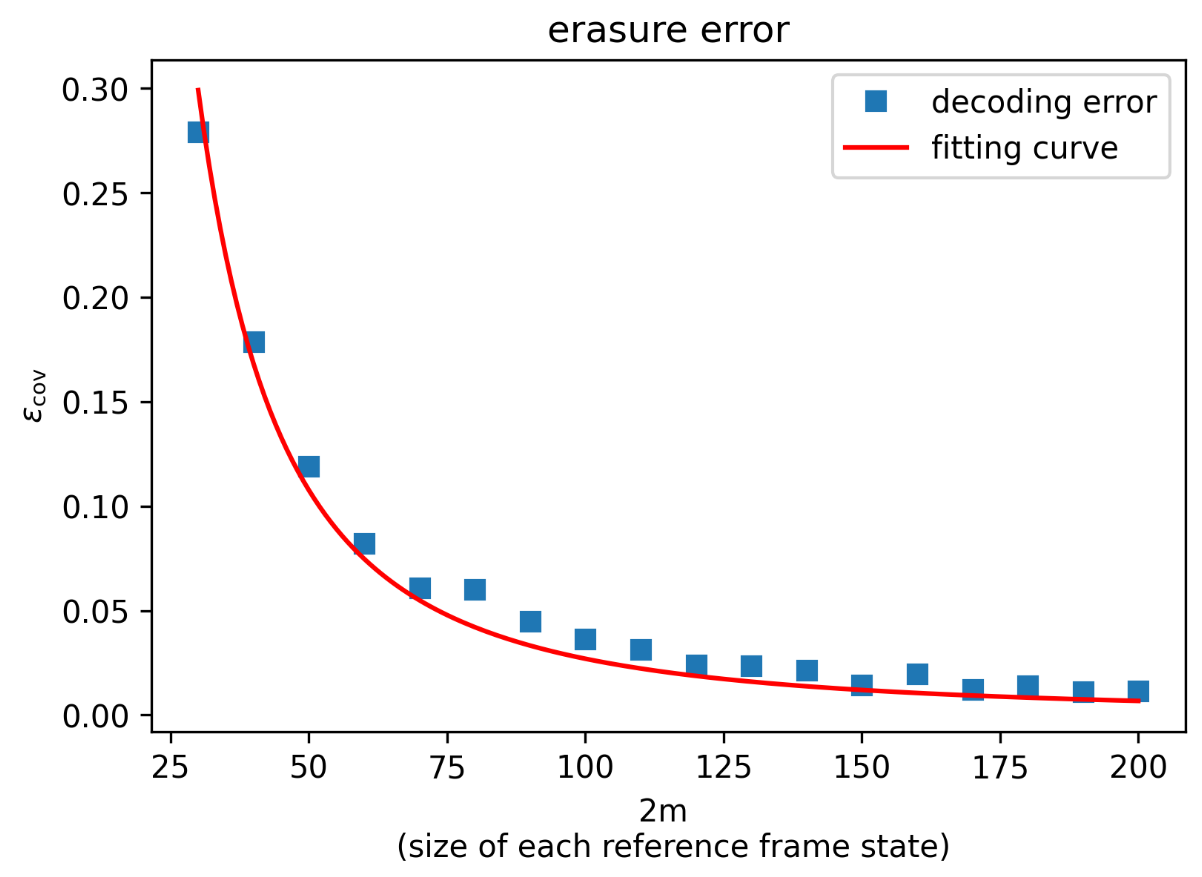}
	\caption{\textbf{Weak erasure error with ``5-qubit code"}. Under the weak erasure model, the decoding error of our protocol (with the 5-qubit code \cite{bennett1997capacities} as the subroutine non-covariant code and the generalized sine states as the reference frame) is plotted as a function of $2m$, the number of qubits contained in each block of the reference frame register. The blue dots correspond to the evaluated values of the decoding error and the red line corresponds to the fitting curve	$\frac{269.2}{(2m)^2}$. \label{fig_nume_erasure}}
\end{figure}
 
We now consider errors beyond the erasure models. 
We first consider the case where all reference frame qubits are subject to i.i.d.\null{} depolarising/dephasing error with probability $p=0.2$.
In this case, our reference frame register consists of $n_{\rm R}/2$ maximally entangled states. Each of the reference frame states are measured individually for an estimate of the reference frame parameters. We assume at most one out of the five computational qubits is subject to error (corresponding to an error rate of $0.2$), so that the logical qubit can always be recovered. 

Notice that, unlike the erasure error, depolarising and dephasing errors in the reference frame register are not flagged (i.e., we do not know whether a reference frame state is subject to an error or not). In order to mitigate the effect of erroneous reference frames, we adopt a majority vote algorithm that abandons estimates in case they are likely to be faulty. With this method, for the i.i.d.\null{} depolarizing/dephasing model, we are able to achieve decoding errors (in the leading order) $\frac{33.30}{n}$ for the depolarising error and $\frac{32.43}{n}$ for the dephasing error, with $n(:=n_{\rm R}+5)$ being the total number of physical qubits employed in our construction (see Fig.~\ref{iid}). 
The $1/n$ error scaling, observed for both the i.i.d.\null{} depolarising error and the i.i.d.\null{} dephasing error, coincides with the lower bound (\ref{coverr-iid}) which we proved for the i.i.d.\null{} erasure error model. It suggests that the error scaling of our protocol is consistent in generic error models. 
We have also tested our protocol under variants of the above error models: We consider models where one fifth reference frame qubits go through the completely depolarising/dephasing error, and the $1/n$ error scaling has also been observed (see Table. \ref{table_nume_de} for a summary).

These results show that the application of our protocol is not limited to erasure errors. Instead, it has good potential to work well under generic error types.
Our code is available at \cite{moyin2021github}, and more details can be found in Appendix \ref{sec-numerical}.  

\begin{table}[h]
	\begin{center}
		\begin{tabular}{|c|c|c|c|c|}
			\hline
			\,\, $n$ (total qubits)\,\, & \,\, 75 \,\, & \,\, 135 \,\, & \,\, 205 \,\, & \,\, 305 \,\, \\
			\hline
			i.i.d.\null{} dephasing ($p=0.20$) & 4.28e-1 & 2.58e-1 & 1.71e-1 & 1.18e-1 \\
			\hline
			i.i.d.\null{} depolarising ($p=0.20$) & 4.29e-1 & 2.62e-1 & 1.77e-1 & 1.27e-1 \\
			\hline
			20\% qubits dephased & 3.88e-1 & 2.38e-1 & 1.74e-1 & 1.20e-1 \\
			\hline
			20\% qubits depolarised & 4.60e-1 & 3.15e-1 & 2.18e-1 & 1.53e-1 \\
			\hline
			\,\, $n$ (total qubits)\,\, & \,\, 405 \,\, & \,\, 505 \,\, & \,\, 605 \,\, & \,\, 705 \,\, \\
			\hline
			i.i.d.\null{} dephasing ($p=0.20$) & 8.79e-2 & 7.19e-2 & 6.02e-2 & 5.09e-2 \\
			\hline
			i.i.d.\null{} depolarising ($p=0.20$) & 8.76e-2 & 7.32e-2 & 5.83e-2 & 5.30e-2 \\
			\hline
			20\% qubits  dephased & 8.65e-2 & 7.03e-2 & 5.94e-2 & 5.34e-2 \\
			\hline
			20\% qubits depolarised & 1.21e-1 & 9.17e-2 & 7.52e-2 & 6.37e-2 \\
			\hline
		\end{tabular}
	\end{center}
	\caption{\textbf{Performance of our construction under dephasing and depolarising errors}. For variants of the i.i.d.\null{} dephasing/depolarising error model, the decoding errors of our construction (using ``5-qubit code" as the subroutine non-covariant code) as functions of the code size are numerically evaluated. In the variant models, a randomly chosen subset of the qubits are completely depolarised/dephased, while others remain unaffected by noise. Numerical simulation shows that in models considered here the decoding errors vanish with the $1/n$ scaling.}
	\label{table_nume_de}
\end{table}

\begin{figure} 
	\begin{subfigure}[b]{0.44\textwidth}
		\includegraphics[width=1\linewidth]{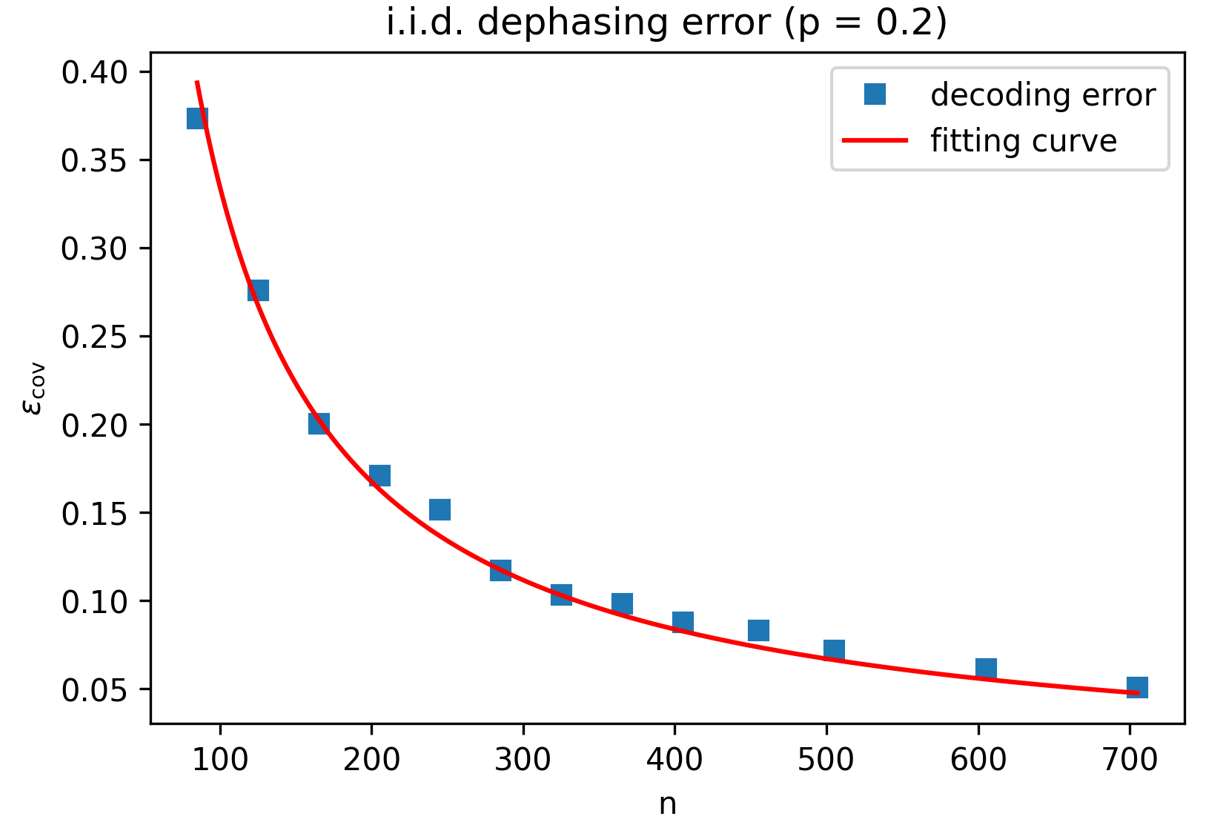}
	\end{subfigure}
	
	\begin{subfigure}[b]{0.44\textwidth}
		\includegraphics[width=1\linewidth]{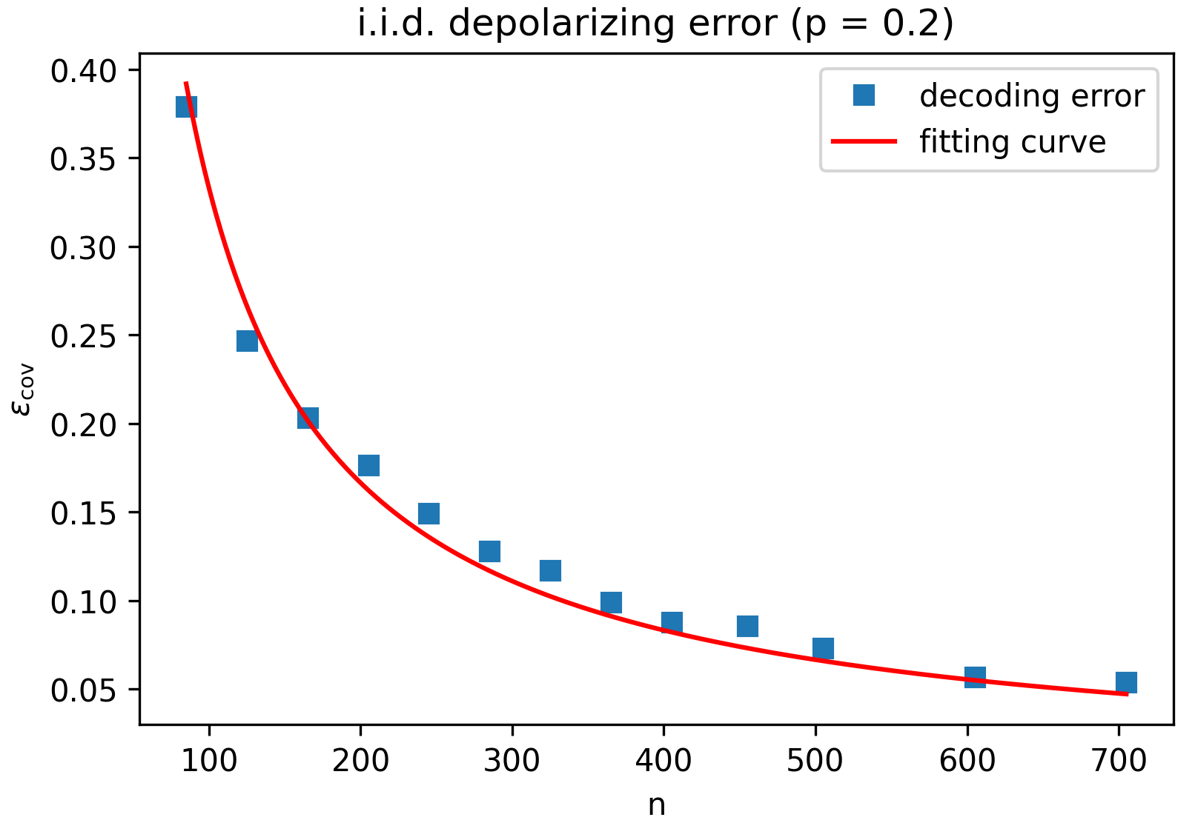}
	\end{subfigure} 
	
	\caption[l]{\textbf{i.i.d.\null{} dephasing/depolarizing with $p=0.2$ and ``5-qubit code"}. The decoding error for i.i.d.\null{} dephasing/depolarizing error of the our covariant code using ``5-qubit code" as the subroutine encoding and $n_{\rm R} /2$ maximally entangled states as the reference frame is plotted as a function of $n$, the number of total physical qubits. The blue dots correspond to the evaluated values of the decoding error and the red lines correspond to the fitting curves: $\frac{32.43}{n}$ for the dephasing error and $\frac{33.30}{n}$ for the depolarizing error.}\label{iid}
\end{figure}

\noindent {\bf Resource requirements and implementation.} We now analyse some of the memory and time resources needed to implement our QECC which are required beyond those of the subroutine code.\Mspace

Our reference frames are constructed on $n_{\rm R}$ qudits constituting a $d^{n_{\rm R}}$ dimensional Hilbert space. However, the reference frame states only have support on a space $d_{\rm R}:=\dim\left(\mathsf{Span}\{\map{U}_{\rm R}(\Psi)\}_{U\in\grp{SU}(d)}\right),$ which we prove to be upper bounded by
\begin{align}
d_{\rm R} &\le \left(\frac{n_{\rm R}}{2}+1\right)^{(d^2-1)}.
\end{align}
Therefore, we can compress the reference frame state to a system of much smaller dimension, which reduces exponentially the cost of quantum memory. The cost can be further reduced at the price of a small recovery error using the compression protocols in Refs.\ \cite{yang2016efficient,yang2016optimal,yang2018compression}.\Mspace

We now turn our attention to the implementation of the covariant encoding and decoding: 
At first sight, our encoding and decoding may appear to necessitate sampling from the Haar measure on $\grp{SU}(d)$ in order to be implemented [see Eqs.~\eqref{E-cov} and \eqref{D-cov}]. From a computational complexity standpoint, this would be a problem, since to implement a Haar-random unitary one needs an exponential number of two-qubit gates and random bits \cite{Knill1995Haar}. Fortunately, our encoding and decoding do not require sampling from the Haar measure, but rather only from a distribution which agrees with it upto the $t^\textup{th}$ moment, for appropriate $t$. Such equivalent distributions are known as unitary $t$-designs~\cite{PhysRevLett.98.130502,Dahlsten_2007,PhysRevA.80.012304,CMPHarrow,Diniz2011,PhysRevA.78.062329,CMPBrandao,PhysRevX.7.021006}, and are more efficient to implement. \Mspace

Specifically, let $U$ be a unitary representation on $\C^{d\times d}$ and $P_{t,t}(U)$ be a matrix whose entries are polynomials of order $t$ in the coefficients of $U$ and of order $t$ in the coefficients of $U^*$, and let $\mathbb{E}_{U\sim \nu}[f(U)]$ be the expectation value of a function $f$ according to measure $\nu$. We then say that $ \mathbb{E}_{U\sim \nu_\textup{Haar}}[P_{t,t}(U)]$, where $\nu_\textup{Haar}$ denotes the Haar measure, \emph{admits a unitary $t$-design}.\Mspace

Our result is that the encoder $\map{E}_\cov(X_{\rm L})$ and decoder $\map{D}_\cov(X_\Cspace)$ admit unitary $(n_\Cspace+n_{\rm R}/2+1)$-designs for all $X_{\rm L}\in \Lin(\mathcal{H}_{\rm L})$, $X_\Cspace\in \Lin(\mathcal{H}_\Cspace^{\otimes n})$.
See \app{}~\ref{sec-dim-prob} for proof.
Ref.\ \cite{CMPBrandao} devises and quantifies a method to approximate unitary $t$-designs: To approximate our encoder and decoder up to an error $\epsilon_\cov$ in diamond norm, it is enough to use $k$ two-qubit gates, randomly drawn from the Haar measure on $\grp{SU}(4)$. Here
\begin{align}
k=& 510,000 \, \log_2(d) (n_\Cspace+n_{\rm R}/2+1) ^{9.1} \\
&\times \Big(  2 \log_2(d) (n_\Cspace+n_{\rm R}/2+1)+1 +\log_2(1/\epsilon_\cov) \Big)\notag,
\end{align}
which scales polynomially in both $n_\Cspace$ and $n_{\rm R}$ [recall lower bounds on $\epsilon_{\cov}$, in the weak and i.i.d.\null{} error models, Eqs. \eqref{eq:lower bound wea error model}, \eqref{eq:lower bound trong error model}].\Mspace
 
Since sampling a polynomial number of times from $\grp{SU}(4)$ can be performed efficiently,  both the encoder and the decoder  can be efficiently implemented in $n_\Cspace$ and $n_{\rm R}$ so long as the reference frame state $\Psi$ and measurement $\map{M}_{\hat{U}}$ can be efficiently constructed. \Mspace

\noindent{\bf Efficient error syndrome extraction.} In quantum error correction, we often need to measure error syndromes after applying quantum gates. It might appear that a full decoding is needed to achieve this in our QECC, which, despite the efficiency analysis in the previous part, could have non-negligible effects on the performance of the computation. However, this is not true for the erasure error models we considered. Instead, one can keep track of the error in a very simple way and correct them all together at a suitable time.

To see this, first notice that erasure errors are ``flagged'', meaning that the qudit is mapped from its original Hilbert space $\spc{H}$ to an orthogonal Hilbert space $\spc{H}_{\rm err}$. Then, the error syndrome can be obtained by a binary projective measurement consisting of projectors on $\spc{H}$ and $\spc{H}_{\rm err}$. This measurement commutes with the (covariant) implementation of logical operations, and thus to implement a sequence of logical gates we can adapt the following procedure:
\begin{enumerate}
  \item Apply $\map{E}_{\rm cov}$.
  \item Create a (classical) bit string $\settt{s}$, initiated as a null vector, to record erasure errors.
  \item For $i=1,\dots,k$:
  \begin{enumerate}
    \item Apply $\map{V}_i$,
    \item measure error syndromes, and
    \item add the measured error locations to $\settt{s}$.
    \item If $|\settt{s}|$ exceeds the maximal number of tolerable local errors, do $\map{D}_{\rm cov}$ followed by $\map{E}_{\rm cov}$.
  \end{enumerate}
  \item Apply $\map{D}_{\rm cov}$.
\end{enumerate}
Therefore, a full physical decoding is done only if there is too many local errors (which is determined by the subroutine code) rather than after each logical gate, making our QECC much more efficient than it appears. We remark that how this feature generalises to other error models remains an open question to be investigated in the future.
\Mspace

\noindent{\bf Towards fault-tolerant QECCs.}
To achieve practical fault tolerance, one must show that each step of the QEC procedure can be performed fault-tolerantly. For our QECC construction, the logical gate implementation is transversal and thus fault-tolerant thanks to covariance. It remains to be shown that the encoder and the decoder are both: a) computationally efficient to implement, and b) can be implemented fault tolerantly. 
For a), we have provided covariant encoders and decoders which can be implemented via a polynomially in $n$ (total number of physical qudits) number of two-qubit gates, while still maintaining the optimal scaling of the decoding error. 
Regarding b), making the encoder and the decoder fault-tolerant amounts to finding a fault-tolerant implementation of the $\grp{SU}(d)$-twirling in Eqs. (\ref{E-cov}) and (\ref{D-cov}). We remark that it is unlikely that unitary designs are the solution: Any unitary $t$-design $\settt{D}$ is an $\epsilon$-net on the parameter space of $\grp{SU}(d)$, with $\epsilon$ vanishing in $t$. For large $t$, the $\epsilon$-net becomes dense enough that, unless the commutant of $\{U\otimes U^\ast~|~U\in\settt{D}\}$ has dimension larger than two, $\settt{D}$ forms a universal gate set (see also \cite[Remark 4]{oszmaniec2020epsilon}). In this case, it follows that to implement the unitary design fault tolerantly we need to implement a universal gate set fault tolerantly, leading to a contradiction.

Nevertheless, unitary designs are not a must for implementing the twirling in Eqs. (\ref{E-cov}) and (\ref{D-cov}). One can, for example, use extrapolation of different noisy channels as an alternative approach. Similar ideas have resulted in error mitigation \cite{PhysRevLett.119.180509,PhysRevX.8.031027}, which effectively deals with errors in near-term devices.

Another possibility, is to partition the logical space into code blocks and apply our encoding to each block individually (call this level 1 encoding). One then implements the gates corresponding to the computation such that any one gate has support on either one or two code blocks. When one wishes to implement a gate across two code blocks, these code blocks are encoded once more using our encoding scheme (call this level 2 encoding); the unitary is then implemented and we decode back to the level 1 encoding. This way, one can increase the logical code size by adding more code blocks (each one of constant size) while only applying gates on encoded states. The quantum reference frames for the level 2 encoding can be re-purposed after ever level 2 decoding for level 2 encodings over other code blocks. Using this scheme, there is a much better overall scaling of the resources involved since we no longer require $\grp{SU}(d)$ twirling over the entire logical space. These possibilities will be explored more in future research. In summary, our covariant QECC construction does not immediately lead to a satisfactory approach of achieving full fault tolerance, but it opens up a new route that is worth further exploration.

\section{Discussion}
\label{sec:discussion 2} 

Starting from an arbitrary error correcting code, we have shown how to use quantum reference frames to construct a new code for which all gates are covariant and transversal, up to a small error in the decoding. In the absence of noise, the error in the decoding is ultimately a consequence of the inability to perfectly determine the orientation of the reference frames via measurement.\Mspace

Our results generalise the setting in \cite{woods2020continuousgroups} from $d$-dimensional reference frame representations of $\grp{U}(1)$ to $d$-dimensional reference frame representations of $\grp{SU}(d)$. This construction is important primarily since once the logical states have been encoded, it allows for a universal set of gates to be applied transversally, which in turn can allow for errors to be more easily detected and corrected during a computation. Furthermore, the construction also allows  any quantum error correcting code to be converted into a covariant code. Previously it was not known if this is possible with a finite dimensional physical space.
In the setting of the weak error model and $\grp{SU}(d)$ symmetry, we establish novel upper bounds on the decoding error which are tight in the scaling with $n$ (number of physical subsystems). Indeed, previous upper bounds in this context, \cite{faist2019continuous,wang2020quasi}, are not tight with $n$.  Furthermore, in this context, we also improved previous results \cite{kubica2020using,zhou2021new} on the lower bounds of the decoding error.
We have also analysed the i.i.d. erasure model, which is more realistic in the context of computation. Here we have produced  upper and lower bounds for the decoding error which are tight in the scaling with $n$.
We argued that our model for a QECC should be able to correct errors beyond those of the local erasure type. Numerical simulations of the performance of our construction for practical error models including the depolarising error and the dephasing error bears this out.

Our results also feature applications beyond correcting errors in quantum computation. 
Two party communication with misaligned reference frames and lossy quantum channels \cite{hayden2017error} is a clear example, but the range of potential applications is much broader.
In particular, we designed and introduced generalised sine states and used them as optimal reference frames in the weak error model. A characteristic novel feature of the generalised sine states is that they can achieve the Heisenberg limit in estimating a completely unknown unitary gate, making it promising to apply them in the emergent field of multiparameter quantum metrology \cite{albarelli2020perspective,gorecki2020optimal,hou2020minimal,PhysRevLett.127.110501,meyer2021variational} and, more generally, in quantum information processing.
Remarkably, these states have recently been used as the key to solving the long-standing optimal programming problem of unitary quantum gates \cite{Yuxiang2}.

From a more fundamental physics perspective, our results provide new insight into the study of quantum gravity.
In particular, there is a conjectured duality between a theory of gravity in Anti-de Sitter (AdS) space and a conformal quantum field theory (CFT) on its boundary, known as the AdS-CFT correspondence. The duality is mediated via a quantum error correcting code, where the encoding channel maps low-energy states in the Anti-de Sitter space (known as the ``bulk'') to states on the  conformal field (known as the ``boundary")~\cite{almheiri2015bulk}. The implication of the Eastin-Knill theorem, discussed in the introduction, to the AdS-CFT duality is that global continuous symmetries in the bulk cannot map onto local symmetries on the boundary (see~\cite{Harlow2018a_short,Harlow2018} for details). This is often interpreted as a problem for the duality since initially such symmetries were believed to exist. It is thus interesting to quantify, as a function of physical quantities such as energy or dimension, how local symmetries of the bulk can approximate local symmetries in the boundary. This program was initiated in \cite{TamaToby,woods2020continuousgroups,faist2019continuous} and is the subject of ongoing investigation (see e.g.~\cite{milekhin2021quantum}). Our results quantify the extent to which the approximate duality for global $\grp{SU}(d)$ symmetries could exist.\Mspace

Our numerical results suggest that our reference frame based QECCs work well for generic non-detectable error models. Understanding and analytically bounding the decoding error, on the other hand, requires further investigation. It would be an interesting future direction of research to rigorously analyse the performance of our QECCs in generic error models and compare to other existing constructions~\cite{wang2020quasi}. The dephasing error model, for example, was studied in the $U(1)$ case in~\cite{woods2020continuousgroups} and shown to be correctable.  This should also be true in the current $\grp{SU}(d)$ case. These studies will also help pave the way to proving robustness of our results when optimised for error models found in experiments. \Mspace

Finally, since our protocols only involve classical correlations between the computational system and the reference frame system, this setup is naturally suited to implementing these two physical subsystems on different quantum architectures to harness specific advantages of each platform. For example, one may wish to implement the quantum reference frames on a phonic platform using Gaussian quantum computing~\cite{RevModPhys.84.621}, while using superconducting qubits for the computational space.

\begin{acknowledgements}
	This work is supported by the Swiss National Science Foundation via the National Centre for Competence in Research ``QSIT" as well as via project No.\ 200020\_165843, by the National Science Foundation of China through Grant No. 11675136, by Hong Kong Research Grant Council through Grants No. 17300317 and 17300918.  M.W. acknowledges funding from the Swiss National Science Foundation (AMBIZIONE Fellowship, No.~PZ00P2\_179914). 
	M.Y. and G.C. acknowledge support by the John Templeton Foundation through grant 61466, The Quantum Information Structure of Spacetime (qiss.fr). The opinions expressed in this publication are those of the authors and do not necessarily reflect the views of the John Templeton Foundation. Research at the Perimeter Institute is supported by the Government of Canada through the Department of Innovation, Science and Economic Development Canada and by the Province of Ontario through the Ministry of Research, Innovation and Science.
	
\end{acknowledgements}

\appendix 

\section{Construction of covariant error correcting codes}\label{sec-general-qec}
\subsection{Preliminary}\label{sec:preliminary}

We  use the notation $\spc H_x$ ($x={\rm R, P, L}$) to denote the Hilbert space of the reference frame register, the computational register, or the logical register. In particular, $\spc{H}$ refers to a $d$-dimensional Hilbert space.
For a generic Hilbert space $\spc K$ and a pure state $|\psi\>  \in  \spc K$, we will use the notation $\psi:  =  |\psi\>\<\psi|$ to denote the projector on the one-dimensional subspace spanned by $|\psi\>$.   The set of quantum states will be denoted by  $\St(\spc K)$. In this work we will focus on finite dimensional quantum systems,  with $\dim  (\spc K)  <\infty$.

 A quantum process transforming an input system into a (possibly different) output system is called a {\em quantum channel}.   A quantum channel transforming an  input system with Hilbert space $\spc K_{\rm in}$ into an output system with Hilbert space $\spc K_{\rm out}$ is a completely positive trace-preserving map $\map N:  \Lin( \spc H_{\rm in})  \to \Lin(\spc H_{\rm out})$, where $\Lin(\spc K)$ denotes the space of linear operators on $\spc K$. For pure channels with a single Kraus operator $U$, i.e.\null{} unitaries or isometries, we use the shorthand $\map{U}(\cdot):=U(\cdot)U^\dag$.

We will use the big-$\Omega$ notation, the big-$O$ notation, and the big-$\Theta$ notation to characterise the asymptotic behaviour of functions. For a function $f(n)$, we write $f(n)=\Omega(g(n))$ if there exists a constant $c_1>0$ so that $f(n)\ge c_1\,g(n)$ for large enough $n$, $f(n)=O(g(n))$ if there exists a constant $c_2>0$ so that $|f(n)|\le c_2g(n)$ for large enough $n$, and $f(n)=\Theta(g(n))$ if $f(n)=\Omega(g(n))$ and $f(n)=O(g(n))$. 

We will make frequent use of a couple of basic concepts in representation theory of the special unitary group $\grp{SU}(d)$, which can be found in standard textbooks, e.g., Ref.\  \cite{fulton2013representation}.
We denote the Young diagrams by a vector\footnote{For vectors, we use the bond font.} $\Vec\lambda=(\lambda_1,\lambda_2,\dots)$ with $\lambda_1\ge\lambda_2\ge\cdots$ and by $U_{\Vec\lambda}$ the irreducible representation of $\grp{SU}(d)$ characterised by the Young diagram $\Vec\lambda$. We denote the collection of all Young diagrams with $n$ boxes and at most $d$ rows by $\settt{Y}_n$ (since the dimension $d$ is fixed throughout the paper, we often omit it). In particular, we define $\Vec e_i$ to be the vector whose $i$-th entry is one and other entries are zero. By definition, $\Vec e_1$ corresponds to a legitimate Young diagram whose associated representation is the standard one, and we use the abbreviation $U:=U_{\Vec e_1}$. 
We will use the double-ket notation $|A\kk:=\sum_{n,m}\<n|A|m\>|n\>|m\>$ ($\{|n\>\}$ being an orthonormal basis) with $A$ being  a matrix. With the double-ket notation,  we denote by $|\Phi^+_{U,\Vec\lambda}\>:= |U_{\Vec\lambda}\kk/\sqrt{d_{\Vec\lambda}}$ the maximally entangled state associated to $U_{\Vec\lambda}$, with $d_{\Vec\lambda}=\Tr I_{\Vec\lambda}$ being the dimension of the irreducible subspace. By this notation, $|\Phi^+_U\>$ refers to a maximally entangled state in $\spc{H}\otimes\spc{H}$, rotated by a local unitary $U$.
Unless otherwise specified, $\d U$ or sometimes $\d\hat{U}$ denotes the Haar measure.  

Other commonly used conventions and notations can be found in Table \ref{table-conv}.

\begin{table*}[tb]
\begin{center}
 \begin{tabular}{| p{2cm} | p{9cm} | p{3cm} |} 
 \hline
 Notation & Definition & Where defined\\
  \hline
 $[k]$ & the set $\{1,2,\dots,k\}$ & \\
 \hline
 $|\settt{S}|$ & the cardinality of set $\settt{S}$ & \\
 \hline
 $n$ & total number of qudit systems in the code  & Section \ref{subsec-covqec}\\ 
 \hline
 $n_{\rm R}$ & number of qudits used as the reference frame& Section \ref{subsec-covqec}\\
 \hline
 $n_\Cspace$ & number of qudits in the base (non-covariant) code& Section \ref{subsec-covqec} \\
 \hline
  $s_{\rm R}$ & number of reference frame state copies & Section \ref{subsec-RF} \\
 \hline
  $\epsilon_{\cov}$ & the error of the constructed covariant code, measured by the diamond norm error & Eq.\ (\ref{error-diamond})\\
 \hline
  $O(n)$ & the big-$O$ notation & Section \ref{sec:preliminary}\\
 \hline
  $\Theta(n)$ & the big-$\Theta$ notation & Section \ref{sec:preliminary}\\
 \hline
   $F_{\wc}(\map{A},\map{B})$ & the worst-case input fidelity between channels $\map{A}$ and $\map{B}$ & Eq.\ (\ref{F-wc})\\
 \hline
 $F_{\ent}(\map{A},\map{B})$ & the entanglement fidelity between channels $\map{A}$ and $\map{B}$& Eq.\ (\ref{F-ent}) \\
 \hline
 $\epsilon_{\wc}(\map{A},\map{B})$ & the diamond norm error between channels $\map{A}$ and $\map{B}$ & Eq.\ (\ref{epsilon-wc})\\
 \hline
 $\epsilon_{\ent}(\map{A},\map{B})$ & the entanglement error between channels $\map{A}$ and $\map{B}$  & Eq.\ (\ref{epsilon-ent})\\
 \hline

\end{tabular}
\end{center}

\caption{Table of notations and conventions}\label{table-conv}

\end{table*} 

Next we introduce a couple of measures of error and faithfulness that are used in this work. Consider two channels $\map{A}, \map{B}: \Lin(\spc{K}_{\rm in})\to\Lin(\spc{K}_{\rm out})$. For instance, $\map{A}$ can be a desired quantum gate and $\map{B}$ is an approximation of $\map{A}$. Therefore, we often need measures to quantify their distance (i.e.\null{} measures of simulation error) and similarity (i.e.\null{} measures of simulation faithfulness).

The first measure of faithfulness, defined intuitively as the minimum fidelity between the outputs for the same input, is the \emph{worst-case input (or minimum) fidelity}:
\begin{align}\label{F-wc}
F_{\rm wc}(\map{A},\map{B}):=
\inf_{\rho\in \St(\spc{K}_{\rm in}\otimes\spc{K}')}F\left((\map{A}\otimes\map{I}_{\spc{K}'})(\rho),(\map{B}\otimes\map{I}_{\spc{K}'})(\rho)\right),
\end{align}
where $F(\rho,\sigma):=\left(\Tr\sqrt{\rho^{\frac12}\sigma\rho^{\frac12}}\right)^2$ is the Uhlmann fidelity defined for two arbitrary quantum states $\rho$ and $\sigma$ and $\spc{K}'$ is an arbitrary reference Hilbert space. 
Later we give a semidefinite program to compute ${F_{\wc}(\map{A},\map{B})}$ (see Lemma \ref{lemma:sdp}).
Nonetheless, the worst-case fidelity is often difficult to work with, and thus it is more common to consider another measure of faithfulness, known as the \emph{entanglement fidelity} \cite{raginsky2001fidelity}:
\begin{align}\label{F-ent}
F_{\rm ent}(\map{A},\map{B}):=F\left((\map{A}\otimes\map{I}_{\spc{K}'})(\Phi^+),(\map{B}\otimes\map{I}_{\spc{K}'})(\Phi^+)\right),
\end{align}
where  $\Phi^+:=|\Phi^+\>\<\Phi^+|$ with $|\Phi^+\>\in\spc{K}_{\rm in}\otimes\spc{K}'$ being the maximally entangled state of the system and a reference. It is straightforward by definition that $F_{\rm wc}\le F_{\rm ent}$.

On the other hand, one can also quantify the performance of simulation by error measures. Similarly as for faithfulness, there are also two commonly considered measures of error, defined with respect to the worst-case input and the maximally entangled state. The first is the  worst-case input error, or the \emph{diamond norm error}, defined as:
\begin{align}
\epsilon_{\rm wc}(\map{A},\map{B}):=&\frac12\|\map{A}-\map{B}\|_{\diamond}\\
=&\frac{1}{2}\sup_{\rho\in \St(\spc{K}_{\rm in}\otimes\spc{K}')}\left\|(\map{A}\otimes\map{I}_{\spc{K}'})(\rho)-(\map{B}\otimes\map{I}_{\spc{K}'})(\rho)\right\|_1,\label{epsilon-wc}
\end{align}
where $\|\cdot\|_1$ is the trace norm and $\|\cdot\|_{\diamond}$ denotes the \emph{diamond norm} \cite{kitaev1997quantum}. Correspondingly, we have the \emph{entanglement error}:
\begin{align}\label{epsilon-ent}
\epsilon_{\rm ent}(\map{A},\map{B}):=\frac12\left\|(\map{A}\otimes\map{I}_{\spc{K}'})(\Phi^+)-(\map{B}\otimes\map{I}_{\spc{K}'})(\Phi^+)\right\|_1.
\end{align}
Again, by definition, we have $\epsilon_{\rm ent}\le\epsilon_{\rm wc}$.

The above faithfulness measures and error measures are not independent \cite{gilchrist2005distance}. For instance, the well-known Fuchs-van de Graaf inequalities \cite{fuchs1999cryptographic}, which relate fidelities to distances as
\begin{align}\label{fuchs-van de graaf}
1-\sqrt{F(\rho,\sigma)}\le\frac12\|\rho-\sigma\|_1\le \sqrt{1-F(\rho,\sigma)}\qquad\forall\,\rho,\sigma
\end{align}
can be readily employed. Besides, the worst-case input error can be bounded using the following inequality 
\begin{align}\label{relation_wc_ent}
\epsilon_{\rm wc}(\map{A},\map{B})\le d_{\rm S}\cdot \epsilon_{\rm ent}(\map{A},\map{B}),
\end{align}
where $d_{\rm S}$ is the dimension of the system \cite[Exercise 3.6]{watrous2018theory}.

\subsection{Quantum reference frames}\label{subsec-RF}

In this subsection we introduce quantum reference frames, which will be the key ingredient of our construction of covariant error correcting codes.
We will consider  entangled reference frames, as they offer superior performance \cite{acin2001optimal,chiribella2004efficient,chiribella2005optimal}. 
A quantum reference frame is essentially an entangled state $|\psi'\>_{AB}$, constructed on $2m$ qudits and shared by two parties Alice and Bob. Assume that the reference frames of Alice and Bob are misaligned up to a rotation $U\in\grp{SU}(d)$: If Bob prepares one of his qudits in, say, a state $|k\>$ in the computational basis, it would actually be a rotated version of the corresponding state on Alice's side, i.e., $|k\>_B=U|k\>_A$.

To align their reference frames, Bob sends his part of the state through a distortion-free channel to Bob. Now, Alice has full disposal of the bipartite state $|\psi'\>_{AA'}=( U^{\otimes m}\otimes I^{\otimes m})|\psi\>$, where $|\psi\>$ is the state when there is no reference frame misalignment. By making a suitable measurement, Alice obtains information on the misalignment $U$ to a good precision.

In this work, we will mainly use a quantum reference frame in a single-party but mathematically equivalent scenario. The role of quantum reference frames in our construction of covariant error correcting codes is to store an unknown unitary rotation. 
By the Schur-Weyl duality \cite{fulton2013representation}, we have the following decomposition of the product Hilbert space into the direct sum of irreducible subspaces:
\begin{align}
\spc{H}^{\otimes m}\simeq\bigoplus_{\Vec\lambda\in\settt{Y}_{m}}(\spc{H}_{\Vec\lambda}\otimes \spc{M}_{\Vec\lambda}).
\end{align}
Here $\settt{Y}_m$ is the collection of Young diagrams of $m$ boxes, $\spc{H}_{\Vec\lambda}$ and $\spc{M}_{\Vec\lambda}$ denote the irreducible subspace and the multiplicity subspace associated to the Young diagram $\Vec\lambda$, respectively. Notice that $\spc{M}_{\Vec\lambda}$ also depends on $m$ but this dependence is not important in this work. The notation ``$\simeq$" means that the two spaces are equivalent up to a change of basis, which is called the Schur transform and can be implemented efficiently on a quantum computer \cite{bacon2006efficient,krovi2019efficient}. 

In our work, it is enough to consider (quantum) reference frame states of the form  \cite{chiribella2005optimal}
\begin{align}\label{input-state-form}
|\psi\>=\bigoplus_{\Vec\lambda\in\settt{Y}_{m}}\sqrt{q_{\Vec\lambda}}|\Phi^+_{\Vec\lambda}\>\otimes|\Phi^+_{m_{\Vec\lambda}}\>,
\end{align}
where $|\Phi^+_{\Vec\lambda}\>$ and $|\Phi^+_{m_{\Vec\lambda}}\>$ denote the maximally entangled states of the irreducible subspace and the multiplicity subspace respectively, and $\{q_{\Vec\lambda}\}$ is a probability distribution that uniquely characterises the reference frame state.
Here the state is built on the coupled basis of $2m$ qudit systems, half of which serve as a reference. Explicitly, it lies within the Hilbert space $ \bigoplus_\Vec\lambda \spc{H}_\Vec\lambda\otimes\spc{H}'_{\Vec\lambda}\otimes\spc{M}_\Vec\lambda\otimes\spc{M}'_\Vec\lambda\subset \spc{H}^{\otimes m}\otimes\spc{H}^{\otimes m}$, where $\spc{H}_\Vec\lambda$ ($\simeq\spc{H}'_\Vec\lambda$) is the representation subspace and $\spc{M}_\Vec\lambda$ ($\simeq\spc{M}'_\Vec\lambda$) is the multiplicity subspace.

The reference frame state lies in a subspace of $\spc{H}^{\otimes m}\otimes\spc{H}^{\otimes m}$. To it we apply  $(U\otimes I)^{\otimes m}$ on it,\footnote{A covariant code in the usual sense should have realisation $U^{\otimes 2m}$ on it. Here instead we realise it as $(U\otimes I)^{\otimes m}$, which is still both mathematically rigorous ($U\otimes I$ is still a representation of $U$) and good for the physical implementation (just run fewer operations).\label{footnote1}} resulting in a state of the form
\begin{align}\label{probe-state-form}
|\psi_{U}\>=\bigoplus_{\Vec\lambda}\sqrt{q_\Vec\lambda}|\Phi^+_{U,\Vec\lambda}\>\otimes|\Phi^+_{m_\Vec\lambda}\>.
\end{align}
The optimal measurement to extract the information of $U$ from the above state is the covariant POVM $\{\d\hat{U}, |\eta_{\hat{U}}\>\<\eta_{\hat{U}}|\}$ \cite{chiribella2005optimal}, where $\d\hat{U}$ is the Haar measure and $|\eta_{\hat{U}}\>$ is the following vector
\begin{align}\label{optimal-POVM}
|\eta_{\hat{U}}\>:=\bigoplus_{\Vec\lambda}d_{\Vec\lambda}|\Phi^+_{\hat{U},\Vec\lambda}\>\otimes |\Phi^+_{m_\Vec\lambda}\>,
\end{align}
where $d_{\Vec\lambda}$ denotes the dimension of the irreducible subspace characterised by $\Vec\lambda$.
Denoting by $\chi_{U,\Vec\lambda}:=\Tr[U_{\Vec\lambda}]$ the characters of $\grp{SU}(d)$, the probability density function of getting the outcome $\hat{U}$ when the actual gate is $U$ can be expressed as
\begin{align}\label{est-prob-dist-covariant}
p(\hat{U}|U)=\bigg|\sum_{\Vec\lambda}\sqrt{q_\Vec\lambda}\chi_{U\hat{U}^{-1},\Vec\lambda}\bigg	|^2.
\end{align}
This distribution is nice because it is conjugate invariant, i.e., $p(WUV|WV)=p(U|I)=p(U^\dag|I)$ for any $W, V,$ and $U$.
The above defines a series of estimation strategies, each specified by a choice of $\{q_\Vec\lambda\}$. Later, we will make specific choices that fulfill the goal of constructing covariant codes.

In the error correction scenario, we have to deal with errors occurring on the reference frames. To this purpose, we divide the $n_{\rm R}$ qudits assigned to the reference frame register into $s_{\rm R}>1$ groups of identical systems, i.e., $\spc{H}_{\rm R}=\spc{H}_{{\rm R},1}\otimes\spc{H}_{{\rm R},2}\otimes\cdots\otimes\spc{H}_{{\rm R},s_{\rm R}}$ and construct an individual reference frame state of the form in Eq.~(\ref{input-state-form}) on each of them. We assume that the error occurring on each of the reference frames is detectable, i.e., it takes the state on $\spc{H}_{{\rm R},k}$ ($\forall\,k$) to an error space $\spc{H}_{{\rm err},k}$ that is orthogonal to $\spc{H}_{{\rm R},k}$. This is the case, for instance, for erasure errors.

For detectable errors, we will use the reference frames in the following way:
\begin{enumerate}
 \item We prepare all the reference frame registers in the same state (\ref{input-state-form}) and denote the overall state by $\Psi:=\psi^{\otimes s_{\rm R}}$. Here, since each $\psi$ is constructed on $2m$ qudits, we have $2m\,s_{\rm R} =n_{\rm R}$.
\item For each copy of $\psi$, we perform a binary POVM to detect whether an error has happened. For each label $k$, the POVM is defined by projections on the error space $\spc{H}_{{\rm R, err},k}$ and the original space $\spc{H}_{{\rm R},k}$.
\item We discard those erroneous reference frames and measure \emph{jointly} the remaining ones with the measurement defined in Eq.\ (\ref{optimal-POVM}).  
\end{enumerate}
Denote by $\spc{H}_{{\rm err, R}}:=\spc{H}_{{\rm err, R},1}\otimes\cdots\otimes\spc{H}_{{\rm err, R},s_{\rm R}}$ the total error space. The POVM element of our error-tolerant measurement is $\{|\eta_{\hat{U}}\>\<\eta_{\hat{U}}|_{\settt{s}}\otimes P_{{\rm err},\settt{s}^c}, \d\hat{U}\}_{\settt{s}\subset[s_{\rm R}]}$, where $\{|\eta_{\hat{U}}\>\<\eta_{\hat{U}}|_{\settt{s}},\d\hat{U}\}$ corresponds to a joint measurement on $\bigotimes_{k\in\settt{s}}\spc{H}_{{\rm R},k}$ with $|\eta_{\hat{U}}\>$ defined by Eq.\ (\ref{optimal-POVM}), and $P_{{\rm err},\settt{s}^c}:=\bigotimes_{k\in\settt{s}^c}P_{{\rm err},k}$ is the projection on the error spaces of all $k\not\in\settt{s}$.
Alternatively, the measurement is characterised by a linear map that yields a probability density function $\map{M}_{\hat{U}}:\St(\spc{H}_{\rm R}\otimes\spc{H}_{{\rm err, R}})\to\R$:
\begin{align}\label{error-tolerant-povm}
\map{M}_{\hat{U}}(\cdot):=\sum_{\settt{s}\subset[s_{\rm R}]}\Tr\left[(\cdot)|\eta_{\hat{U}}\>\<\eta_{\hat{U}}|_{\settt{s}}\otimes P_{{\rm err},\settt{s}^c}\right].
\end{align}
When an error $\map{C}_{\rm R}$ occurs, the probability density function of the outcome $\hat{U}$ is just $\map{M}_{\hat{U}}\circ\map{C}_{\rm R}(\Psi)$.

\subsection{Covariant quantum error correction}\label{subsec-covqec}
Given $n$ qudit systems in total, the goal of covariant quantum error correction is to construct a covariant code $(\map{E}_\cov,\map{D}_\cov)$ that encodes one logical qudit, i.e., $d_{\rm L}=d$ into the $n$ qudit systems. This is to say that the encoder $\map{E}_\cov$ satisfies
\begin{align}\label{covariance2}
\map{E}_\cov\circ\map{V}_{\rm L}=\left(\map{V}_1\otimes\cdots\otimes\map{V}_n\right)\circ\map{E}_\cov
\end{align}
for any $V\in\grp{SU}(d)$, where $\map{V}_i$ ($i=1,\dots,n$) is either $\map{V}$ or the identity channel. In this way, any logical gate can be realised \emph{transversally} by implementing the same gate on each of the qudit systems, which is the desideratum of fault-tolerant quantum computing.

Nevertheless, the Eastin-Knill theorem rules out any covariant, finite dimensional code that perfectly corrects local errors. One way to circumvent this restriction is to consider approximate codes that are still covariant but correct an error $\map{C}$ only $\epsilon$-well:
\begin{align}
\map{D}_\cov\circ\map{C}\circ\map{E}_\cov\approx_{\epsilon}\map{I}_{\rm L}.
\end{align}
This was first achieved for $\grp{U}(1)$ covariance using finite dimensional reference frames in \cite{woods2020continuousgroups} and using other techniques in \cite{faist2019continuous}.  
The main idea of our construction is to divide the $n$ qudit systems into two registers: the reference frame register $\rm R$ (consisting of $n_{\rm R}$ qudits), and the computational register $\Cspace$ (consisting of $n_\Cspace=n-n_{\rm R}$ qudits). We denote by $\map{U}_\Cspace:=\map{U}^{\otimes n_\Cspace}$ and $\map{U}_{\rm L}:=\map{U}$ the actions of an element $U\in\grp{SU}(d)$ on the computational register and the logical register, and by $\map{U}_{\rm R}:=(\map{U}\otimes\map{I})^{\otimes n_{\rm R}/2}$ its action on the reference frame register, were recall  $\map{U}(\cdot):=U(\cdot)U^\dag$.
Our covariant code is built on an arbitrary, possibly non-covariant, code $(\map{E},\map{D})$ on the computational register, i.e.\null{} $\map{E}:\spc{H}_{\rm L}\to\spc{H}_\Cspace$. We make $\map{E}$ and $\map{D}$ covariant via the technique of twirling. The role of the reference frame register is to keep track of the random unitaries applied in the twirling and to keep the subroutine code functioning.

In particular, the covariant encoder is defined by Eq.~\eqref{E-cov} and the decoder is defined by Eq.~\eqref{D-cov}
with $\map{M}_{\hat U}$ the measurement from \eqref{error-tolerant-povm}.


\begin{algorithm}[H]
  \caption{Covariant implementation of $\map{V}_{\rm L}$ (see also Figure \ref{fig_Protocol1}).} 
   \begin{algorithmic}
  \State {\em Preparation:} \newline Initiate each reference register in the reference frame state $\Psi$. \newline
   \State {\em Encoder $\map{E}_\cov$:}\newline Apply $\map{U}_\Cspace\circ\map{E}\circ\map{U}_{\rm L}^{-1}$ to the logical system and $\map{U}_{\rm R}$ to the reference, with $U$ following the Haar measure.\newline
   \State {\em Gate implementation:}\newline Apply $\map{V}_\Cspace$ on the computational system and $\map{V}_{\rm R}$ on the reference.\newline
   \State {\em Error $\map{C}$:} \newline An error may occur on the computational register $\Cspace$ and the reference $\rm R$. 
   \newline
   \State {\em Decoder $\map{D}$:} \newline  Measure the reference frame with the error-tolerant measurement $\map{M}_{\hat{U}}$ from \eqref{error-tolerant-povm}. Depending on the measurement outcome $\hat{U}$, apply $\hat{\map{U}}_{\rm L}\circ\map{D}\circ\hat{\map{U}}_\Cspace^{-1}$ on the computational system.
   \end{algorithmic}\label{protocol-cov}
\end{algorithm}
 
    \begin{figure*}[tb]
		\centering

\begin{tikzpicture}[auto,thick,text depth=.125ex,text height=1.5ex]
\tikzstyle{box}=[draw,minimum width=9mm,minimum height=6mm,fill=white]
\tikzstyle{state}=[draw,
semicircle,shape border rotate=90,inner sep=1pt,minimum width=8mm]
\pgfmathsetmacro\MathAxis{height("$\vcenter{}$")}
\matrix[row sep=3mm,column sep=4mm] {
\node (p1) {}; &[2mm]
\node (Ulinv) [box] {$\mathcal{U}^{\scriptscriptstyle {-}1}$}; &
\node (E) [box] {$\mathcal E$}; & 
\node (Up) [box] {$\mathcal{U}$}; & 
\node (V0) [box] {$\mathcal V$}; & 
\node (c0) {}; & 
\node (i0) {}; &[-5mm]
\node (j0) [box] {$\hat{\mathcal U}^{\scriptscriptstyle -1}$}; &
\node (k0) [box] {$\mathcal D$}; &
\node (l0) [box] {$\hat{\mathcal U}$}; &[2mm]
\node (pr) {}; \\
& & 
\node (psi1) {}; &
\node (Ur1) [box] {$\mathcal{U}$}; &
\node (V1) [box] {$\mathcal{V}$}; &
\node (c1) [minimum width=8mm] {}; &
\node (j1) [box,white,fill=white] {$\phantom{\mathcal U}$};\\[-5mm]
& & 
\node (psi1b) {}; &
\node (Ur1b) [] {$\phantom{\mathcal{U}}$}; &
\node (V1b) [] {}; &
\node (c1b) [minimum width=8mm] {}; &
\node (j1b) [box,white,fill=white] {$\phantom{\mathcal U}$};\\[-1mm]
& & \node (psi2) {}; &
\node (Ur2) [box] {$\mathcal{U}$}; &
\node (V2) [box] {$\mathcal{V}$}; &
\node (c2) {}; &
\node (j2) [box,white,fill=white] {$\phantom{\mathcal U}$};\\[-5mm]
& & \node (psi2b) {}; &
\node (Ur2b) [] {}; &
\node (V2b) [] {}; &
\node (c2b) {}; &
\node (j2b) [box,white,fill=white] {$\phantom{\mathcal U}$};\\
}; 
\draw (p1) -- node[pos=0.34,above=-2pt] {\tiny \textsf{L}} (Ulinv) -- (E) -- node[midway,above=-2pt] {\tiny \textsf{P}} (Up) -- (V0) -- (j0) -- (k0) -- node[midway,above=-2pt] {\tiny \textsf{L}} (l0) -- (pr);
\draw (psi1) -- node[near end,above=-2pt] {\tiny \textsf{R}} (Ur1) -- (V1) -- (j1);
\draw (psi1b) --  (j1b);
\draw (psi2) -- node[near end,above=-2pt] {\tiny \textsf{R}} (Ur2) -- (V2) -- (j2);
\draw (psi2b) -- (c2b) -- (j2b);
\coordinate (ppsi1c) at ($0.5*(psi1)+0.5*(psi1b)$);
\node (ppsi1) at (E.east |- ppsi1c) [box,rounded rectangle,rounded rectangle right arc=none,anchor=east,minimum height=8mm] {$\psi$};
\coordinate (ppsi2c) at ($0.5*(psi2)+0.5*(psi2b)$);
\node (ppsi2) at (E.east |- ppsi2c) [box,rounded rectangle,rounded rectangle right arc=none,anchor=east,minimum height=8mm] {$\psi$};
\path (V1) -- (V2) node [midway] (fa) {};
\node (b) at (V2.south east -| j0.south west) [inner sep=0,outer sep=0] {};

\coordinate (mh) at ($0.5*(psi2)+0.5*(psi1b)$);
\coordinate (mc) at (mh -| j2.west);
\def\bl{10mm}
\coordinate (t0) at ($(mc) + (0,\bl)$);
\coordinate (t1) at ($(mc) - (0,\bl)$);

\coordinate (t3) at ($(mc -| j2.east) + (-.145mm,0)$);

\coordinate (ctrl1) at (t3 -| j0);
\coordinate (ctrl2) at (t3 -| l0);

\draw[double distance=1pt,semithick,rounded corners] (t3) -- (ctrl2) -- (l0);
\draw[double distance=1pt,semithick] (ctrl1) -- (j0);
\fill[black] (ctrl1) circle(1.5pt);

\draw (t1) to[arc through ccw=(t3)] (t0) -- cycle;
\node at (mc) [anchor=west,minimum height=40pt] {$\mathcal M_{\hat {\mathcal U}}$};

\node [draw=red,densely dashed,rounded corners,fit=(Ulinv) (Up) (Ur2) (ppsi2)] (enc) {};
\node [draw=blue,densely dashed,rounded corners,fit=(j0) (l0) (j2) (t1) (ppsi2.south east -| t1)] (dec) {};
\node at (enc.south west) [above right] {$\mathcal E_{\rm{cov}}$};
\node at (dec.south east) [above left] {$\mathcal D_{\rm{cov}}$};

\coordinate (ccenter0) at ($0.5*(ppsi2.south east)+0.5*(E.north east)$);
\coordinate (ccenter) at (ccenter0 -| c1);

\coordinate (cn) at (ccenter |- E.north);
\coordinate (cs) at (ccenter |- ppsi2.south);
\coordinate (cnw) at ($(cn) - (2.5mm,0) - (0,1mm)$);
\coordinate (cne) at ($(cn) + (2.5mm,0) - (0,1mm)$);
\coordinate (csw) at ($(cs) - (2.5mm,0) + (0,1mm)$);
\coordinate (cse) at ($(cs) + (2.5mm,0) + (0,1mm)$);
\node[draw,fill=white,fit=(cnw) (cne) (cse) (csw)] {};
\node at (ccenter) {$\mathcal C$};
\end{tikzpicture}
		\caption{\footnotesize
			\textbf{Reference frame assisted covariant error correction.} An arbitrary code $(\map{E},\map{D})$ can be converted into a covariant code $(\map{E}_\cov,\map{D}_\cov)$ following the instruction of Protocol \ref{protocol-cov}. The subroutine encoder $\map{E}:\spc{H}_{\rm L}\to\spc{H}_\Cspace$ is made covariant via twirling, where $U$ is a random unitary sampled from the Haar measure. A reference frame register, which is divided into multiple copies of a reference frame state $\psi$, is employed to keep track of the unitary. Since $\map{E}_\cov$ is covariant, any desired logical operation $V$ can be implemented transversally. To correct error (characterised by a channel $\map{C}$) or to decode the information, the reference frame register is first measured, and recovery operations are performed on the computational register depending on the measurement outcome.} 
		\label{fig_Protocol1}
	\end{figure*}   
  
\subsection{General bound on the error}
There are two contributing factors to errors in the covariant scheme: noise occurring between encoding and decoding, and errors in the decoding due to the finiteness of the reference frames. 
Here we show that our protocol is able to recover from both of these. 
We quantify the error by the diamond norm error (\ref{epsilon-wc})
\begin{align}\label{error-diamond}
\epsilon_\cov:=\max_{V\in\grp{SU}(d)}\epsilon_{\wc}\left(\map{D}_\cov\circ\map{C}\circ(\map{V}_\Cspace\otimes\map{V}_{\rm R})\circ\map{E}_\cov,\map{V}_{\rm L}\right)\,,
\end{align}
where $\map{C}$ is a noisy channel. 
Here we focus on errors that are detectable on the reference frames (see Section \ref{subsec-RF}) and covariant.
We assume that the error can be expressed in the form
\begin{align}\label{general-form-error}
\map{C}=\sum_j p_j \map{C}_j,
\end{align}
where each $\map{C}_j$ is of the product form
\begin{align}
\map{C}_j=\map{C}_{j,\Cspace}\otimes\map{C}_{j,{\rm R}}
\end{align}
acting on the computational register and the reference frame register, respectively.
For the subroutine code pair $(\map{E},\map{D})$, let $\epsilon_{j,{\rm code}}=\epsilon_{\wc}(\map{D}\circ\map{C}_{j,\Cspace}\circ\map{E},\map{I}_{\rm L})$ denote the worst-case error for noise operator $\map{C}_j$. 
Then we can show 
\begin{lemma}\label{lemma-cov-error-bound}
For covariant error of the form (\ref{general-form-error}), the error of our construction (see Protocol \ref{protocol-cov}) is upper bounded as
\begin{align}
\label{eq:ecovbound}
\epsilon_{\cov}\le 9d\cdot\sum_j p_j\epsilon_j,
\end{align}
where 
\begin{align}
\epsilon_j=&\max\bigg\{  \nonumber \\
&\,\epsilon_{j,{\rm code}},1-F_{\wc}\left(\int\d U~p_j(U|I)~\map{U}_\Cspace\otimes\map{U}^\ast_{\rm L},\map{I}_\Cspace\otimes\map{I}_{{\rm L}}\right)\bigg\}\,
\end{align}
for $F_{\wc}\left(\int\d U~p_j(U|I)~\map{U}_\Cspace\otimes\map{U}^\ast_{\rm L},\map{I}_\Cspace\otimes\map{I}_{{\rm L}}\right)\ge 3/4$; otherwise $\epsilon_j= 1$. Here 
\begin{align}\label{pj}
p_j(\hat{U}|U):=\map{M}_{\hat{U}}\circ\map{C}_{j,{\rm R}}\circ\map{U}_{\rm R}(\Psi)
\end{align}
is the probability distribution of the measurement outcome conditioned on $U$ being applied and error $\map{C}_j$ occurring.
\end{lemma}
We remark that, since we mainly focus on the small error regime, the condition $F_\wc\ge3/4$ in the above lemma is usually guaranteed.
\begin{proof}
Since the protocol $(\map{E}_\cov,\map{D}_\cov)$ and the error $\map{C}$ are both covariant, the protocol fares equally well for any $\map{V}_{\rm L}$, and thus we have 
\begin{align}
\epsilon_\cov=\epsilon_{\wc}\left(\map{D}_\cov\circ\map{C}\circ\map{E}_\cov,\map{I}_{\rm L}\right).
\end{align}
Substituting the expressions of $\map{E}_\cov$ [see Eq.\ (\ref{E-cov})] and $\map{D}_\cov$ [see Eq.\ (\ref{D-cov})] as well as $\map{C}=\sum_j p_j \map{C}_j$ into the definition, we obtain
\begin{widetext}
\begin{align}
\epsilon_\cov&\le \sum_j p_j \epsilon_j\\
\epsilon_j&:=\epsilon_{\wc}\left(\int\d U\int\d\hat{U}~\left(\hat{\map{U}}_{\rm L}\circ\map{D}\circ\hat{\map{U}}_\Cspace^{-1}\otimes \map{M}_{\hat{U}}\right)\circ\map{C}_j\circ\left(\map{U}_\Cspace\circ\map{E}\circ\map{U}_{\rm L}^{-1}\otimes\map{U}_{\rm R}(\Psi)\right),\map{I}_{\rm L}\right),
\end{align}
having used the joint convexity of the diamond norm error.
Since each $\map{C}_j$ is decomposed as $\map{C}_{j,\Cspace}\otimes\map{C}_{j,{\rm R}}$, we find
\begin{align}
\epsilon_j&=\epsilon_{\wc}\left(\int\d U\int\d\hat{U}~p_j(\hat{U}|U)\,\hat{\map{U}}_{\rm L}\circ\map{D}\circ\hat{\map{U}}_\Cspace^{-1}\circ\map{C}_{j,\Cspace}\circ\map{U}_\Cspace\circ\map{E}\circ\map{U}_{\rm L}^{-1},\map{I}_{\rm L}\right)\,,
\end{align}
where $p_j(\hat{U}|U):=\map{M}_{\hat{U}}\circ\map{C}_{j,{\rm R}}\circ\map{U}_{\rm R}(\Psi)$, as defined by Eq.~(\ref{pj}),
is the probability of getting the outcome $\hat{U}$ when the embedded unitary is $U$. $p_j(\hat{U}|U)$ is conjugate-invariant, and thus $p_j(\hat{U}|U)=p_j(U'^\dag|I)=p_j(U'|I)$ with $U':=\hat{U}^\dag U$. Substituting into the above expression, we have
\begin{align}
\epsilon_j&=\epsilon_{\wc}\left(\int\d U \int\d U'~p_j(U'|I)\,\map{U}_{\rm L}\circ\map{U}'^{-1}_{\rm L}\circ\map{D}\circ\map{U}'_\Cspace\circ\map{C}_{j,\Cspace}\circ\map{E}\circ\map{U}_{\rm L}^{-1},\map{I}_{\rm L}\right).
\end{align}  
Define the twirled version of the original protocol as:
\begin{align}
\map{P}_{\rm twirl}:=\int\d U~\map{U}_{\rm L}\circ\map{D}\circ\map{C}_{j,\Cspace}\circ\map{E}\circ\map{U}_{\rm L}^{-1}.
\end{align}
Using Lemma \ref{lemma-cov-error-bound-app} in Appendix \ref{app-cov-error}, we bound the error as
\begin{align}\label{general-error-inter0}
\epsilon_j\le 9d\cdot\max\left\{\tilde{\epsilon}_{j,{\rm code}},1-F_{j,{\rm RF}}\right\}.
\end{align}
where 
\begin{align}
F_{j,{\rm RF}}:=F_{\ent}\left(\int\d U\int\d U'~p_j(U'|I)\,\map{U}_{\rm L}\circ\map{U}'^{-1}_{\rm L}\circ\map{D}\circ\map{U}'_\Cspace\circ\map{C}_{j,\Cspace}\circ\map{E}\circ\map{U}_{\rm L}^{-1},\map{P}_{\rm twirl}\right)
\end{align}
is the entanglement fidelity for the reference frame correction and
\begin{align}
\tilde{\epsilon}_{j,{\rm code}}:=\epsilon_{\wc}\left(\map{P}_{\rm twirl},\map{I}_{\rm L}\right)
\end{align}
is the twirled code error. Notice that, by Lemma \ref{lemma-cov-error-bound-app}, Eq.\ (\ref{general-error-inter0}) holds only if 
\begin{align}
\epsilon_{\ent}\left(\int\d U\int\d U'~p_j(U'|I)\,\map{U}_{\rm L}\circ\map{U}'^{-1}_{\rm L}\circ\map{D}\circ\map{U}'_\Cspace\circ\map{C}_{j,\Cspace}\circ\map{E}\circ\map{U}_{\rm L}^{-1},\map{P}_{\rm twirl}\right)\le\frac{1}{2}.
\end{align}
By the Fuchs-van de Graaf inequalities (\ref{fuchs-van de graaf}), this holds as long as $F_{j,{\rm RF}}\ge 3/4$.

On one hand, by the joint convexity of the diamond norm error, the twirled code error is bounded by the code error, i.e.,
\begin{align}\label{general-error-inter1}
\tilde{\epsilon}_{j,{\rm code}}\le\epsilon_{j,{\rm code}}.
\end{align}
On the other hand, by the joint concavity of the square-root fidelity and the unitary invariance of fidelity, we have
\begin{align}
\sqrt{F}_{j,{\rm RF}}\ge&\int\d U\sqrt{F}_{\ent}\left(\int\d U'~p_j(U'|I)\,\map{U}_{\rm L}\circ\map{U}'^{-1}_{\rm L}\circ\map{D}\circ\map{U}_\Cspace'\circ\map{C}_{j,\Cspace}\circ\map{E}\circ\map{U}_{\rm L}^{-1},\map{U}_{\rm L}\circ\map{D}\circ\map{C}_{j,\Cspace}\circ\map{E}\circ\map{U}_{\rm L}^{-1}\right)\nonumber\\
=&\int\d U\sqrt{F}_{\ent}\left(\int\d U'~p_j(U'|I)\,\map{U}'^{-1}_{\rm L}\circ\map{D}\circ\map{U}_\Cspace'\circ\map{C}_{j,\Cspace}\circ\map{E},\map{D}\circ\map{C}_{j,\Cspace}\circ\map{E}\right)\nonumber\\
=&\sqrt{F}_{\ent}\left(\int\d U'~p_j(U'|I)\,\map{U}'^{-1}_{\rm L}\circ\map{D}\circ\map{U}'_\Cspace\circ\map{C}_{j,\Cspace}\circ\map{E},\map{D}\circ\map{C}_{j,\Cspace}\circ\map{E}\right).
\end{align}
Using the maximally entangled state to ``wire around'' the channels, we get
\begin{align}
F_{j,{\rm RF}}&\ge F\left(\int\d U~p_j(U|I)~\left(\map{D}\circ\map{U}_\Cspace\circ\map{C}_{j,\Cspace}\circ\map{E}\otimes\map{U}^\ast_{\rm L}\right)(\Phi^+_{\rm L}),(\map{D}\circ\map{C}_{j,\Cspace}\circ\map{E}\otimes\map{I}_{\rm R})(\Phi^+_{\rm L})\right)\nonumber\\
&\ge F_{\wc}\left(\int\d U~p_j(U|I)~\left(\map{D}\circ\map{U}_\Cspace\circ\map{C}_{j,\Cspace}\circ\map{E}\otimes\map{U}^\ast_{\rm L}\right),\map{D}\circ\map{C}_{j,\Cspace}\circ\map{E}\otimes\map{I}_{\rm L}\right).
\end{align}
\end{widetext}
Here $U^\ast$ denotes the complex conjugation of $U$.
Exploiting the data processing inequality of fidelity, we have 
\begin{align}\label{general-error-inter2}
F_{j,{\rm RF}}&\ge  F_\wc\left(\int\d U~p_j(U|I)~\map{U}_\Cspace\otimes\map{U}^\ast_{\rm L},\map{I}_\Cspace\otimes\map{I}_{{\rm L}}\right).
\end{align}
Finally, combining Eqs.\ (\ref{general-error-inter0}), (\ref{general-error-inter1}), and (\ref{general-error-inter2}) gives \eqref{eq:ecovbound}.
\end{proof}

The bound in Lemma~\ref{lemma-cov-error-bound} applies to covariant codes constructed from arbitrary reference frames. 
For reference frames as in \eqref{input-state-form}, we can give a more detailed bound. 
Since $\epsilon_{j,{\rm code}}$ is specified by the non-covariant code $(\map{E},\map{D})$ that we use as a subroutine, what we need to do is to bound the reference frame error by determining 
\begin{align}\label{weak-err-Fwc}
F_{\wc}\left(\int\d U~p_j(U|I)~\map{U}_\Cspace\otimes\map{U}^\ast_{\rm L},\map{I}_\Cspace\otimes\map{I}_{{\rm L}}\right),
\end{align}
where $p_j(U|I)$ is the distribution defined by Eq.\ (\ref{pj}).

From now on we focus on erasure errors, which are indeed covariant and detectable. For erasure errors, as the reference frame register is initiated in multiple copies of the state $\psi$ defined by Eq.\ (\ref{input-state-form}), the error map $\map{C}_{j,{\rm R}}$ will ruin some of the copies, and the remaining copies can still be cast in the form (\ref{input-state-form}). 
For simplicity, we still denote the distribution [as in Eq.\ (\ref{input-state-form})] associated to the state of the remaining copies by $\{q_\Vec\lambda\}$, keeping in mind that this distribution may depend on $j$.
Then, by using Eq.\ (\ref{est-prob-dist-covariant}) we can write $p_j(U|I)$ as:
\begin{align}
p(U|I)=\bigg|\sum_{\Vec\lambda\in\settt{S}_{\rm via}}\sqrt{q_\Vec\lambda}\chi_{U^{-1},\Vec\lambda}\bigg|^2,
\end{align}
where $\settt{S}_{\rm via}$ is the ``viable'' set of Young diagrams on which $q_\Vec\lambda>0$.

Define $n':=n_\Cspace+d-1$ and the ``interior'' subset of $\settt{S}_{\rm via}$ as:
\begin{align}\label{S-int}
\settt{S}_{\rm int}:=\left\{\lambda\in\settt{S}_{\rm via}~:~|\lambda_i-\lambda_j|> 3n'\ \forall\,i\not=j\right\}.
\end{align}
Then we can show that:
\begin{lemma}\label{lemma-Fwc}
For reference frames of the form in Eq.~\eqref{input-state-form}, 
\begin{align}
&F_{\wc}\left(\int\d U~p_j(U|I)~\map{U}_\Cspace\otimes\map{U}^\ast_{\rm L},\map{I}_\Cspace\otimes\map{I}_{{\rm L}}\right)\nonumber\\
&\geq \min_{\Vec\Delta\in\settt{S}_{\rm diff}}\sum_{\Vec\lambda\in\settt{S}_{\rm int}}\sqrt{q_\Vec\lambda q_{\Vec\lambda+\Vec\Delta}},\label{general-fid-inter0}
\end{align}
where $\settt{S}_{\rm diff}=\big\{\Vec\Delta\in\Z^{\times d}~:~|\Delta_i|\le n', i=1,\dots,d; \sum_{j=1}^d\Delta_j=0\big\}$.
\end{lemma}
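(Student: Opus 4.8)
The plan is to treat the channel $\map{N}:=\int\d U\,p_j(U|I)\,\map{U}_{\rm P}\otimes\map{U}^{\ast}_{\rm L}$ as an $\grp{SU}(d)$‑twirl and exploit its covariance; write $\map{I}:=\map{I}_{\rm P}\otimes\map{I}_{\rm L}$ and let $R(U):=U^{\otimes n_{\rm P}}\otimes\overline{U}$ be the representation of $U\in\grp{SU}(d)$ on $\spc{H}_{\rm P}\otimes\spc{H}_{\rm L}$, so that $\map{N}=\int\d U\,p_j(U|I)\,\map{R}(U)$. Because $p_j(\cdot|I)$ is conjugate invariant -- in particular $p_j(VUV^{-1}|I)=p_j(U|I)$ for all $V$ -- the channel $\map{N}$ commutes with $\map{R}(V)$. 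Decomposing $\spc{H}_{\rm P}\otimes\spc{H}_{\rm L}\cong\bigoplus_{\nu}\spc{H}_\nu\otimes\spc{M}_\nu$ by Schur--Weyl duality (first on $U^{\otimes n_{\rm P}}$, then tensoring each isotypic block with $\overline{U}$), every irreducible component $\nu$ has reduced width $\nu_1\le n_{\rm P}+1\le n'$; this bound will be used twice.

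The first step is to reduce $F_{\wc}(\map{N},\map{I})$ to a per‑irrep quantity. For a pure input $\psi$ on $\spc{H}_{\rm P}\otimes\spc{H}_{\rm L}\otimes\spc{K}'$ with reduced state $\rho$ on $\spc{H}_{\rm P}\otimes\spc{H}_{\rm L}$ one has $F\big((\map{N}\otimes\map{I}_{\spc{K}'})(\psi),\psi\big)=\int\d U\,p_j(U|I)\,\lvert\Tr[R(U)\rho]\rvert^{2}$, which depends only on the block‑diagonal part of $\rho$; expanding the square and using conjugate invariance once more, it becomes $\Tr\!\big[(\int\d U\,p_j(U|I)\,R(U)\otimes\overline{R(U)})(\rho\otimes\rho^{T})\big)$, and by Schur's lemma the operator in the middle acts as a scalar $c_\sigma$ on each isotypic component of $R\otimes\overline{R}$, so the whole expression is a convex combination of the $c_\sigma$. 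This gives $F_{\wc}(\map{N},\map{I})\ge\min_\sigma c_\sigma$ over those components. The delicate point is to upgrade this to a bound involving only the small components -- equivalently $F_{\wc}(\map{N},\map{I})\ge\min_\nu g_\nu$ with $g_\nu:=\tfrac1{d_\nu^2}\int\d U\,p_j(U|I)\,\lvert\chi_{U,\nu}\rvert^{2}$ the value attained on the maximally entangled state of the sector $\nu$ -- which amounts to showing that the cross‑sector interference present in $\lvert\Tr[R(U)\rho]\rvert^{2}$ for a generic superposed input cannot drag the fidelity below $\min_\nu g_\nu$; this uses both the covariance of $\map{N}$ and the explicit form of $p_j(U|I)$, and I expect it to be the main obstacle.

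The last step estimates $g_\nu$ by representation theory. Writing $p_j(U|I)=\sum_{\lambda,\lambda'\in\grp{S}_{\rm via}}\sqrt{q_\lambda q_{\lambda'}}\,\overline{\chi_{U,\lambda}}\,\chi_{U,\lambda'}$ and using $\lvert\chi_{U,\nu}\rvert^{2}=\chi_{U,\nu}\overline{\chi_{U,\nu}}$ together with orthonormality of characters yields $g_\nu=\tfrac1{d_\nu^{2}}\sum_{\lambda,\lambda'}\sqrt{q_\lambda q_{\lambda'}}\sum_\rho N^{\rho}_{\lambda\nu}N^{\rho}_{\lambda'\nu}$, where $N^{\rho}_{\lambda\nu}$ are the $\grp{SU}(d)$ Littlewood--Richardson (fusion) multiplicities. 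Retaining only the terms with $\lambda\in\grp{S}_{\rm int}$, the separation $|\lambda_i-\lambda_j|\ge 4n'$ together with $\nu_1\le n'$ places $U_\lambda\otimes U_\nu$ -- and $U_{\lambda'}\otimes U_\nu$ for the $\lambda'$ that survive, whose rows are still separated by at least $2n'$ -- in the stable range of the Littlewood--Richardson rule, so $N^{\lambda+\omega}_{\lambda\nu}=m_\nu(\omega)$, the weight multiplicity of $U_\nu$, and $\sum_\omega m_\nu(\omega)=d_\nu$. Moreover a surviving pair forces $\lambda'=\lambda+\Delta$ with $\Delta=\omega-\omega'$ a difference of weights of $U_\nu$, hence $\Delta\in\grp{S}_\Delta$ and $\lambda+\Delta$ still dominant. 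Summing over $\omega,\omega'$ and using $\sum_{\omega,\omega'}m_\nu(\omega)m_\nu(\omega')=d_\nu^{2}$ gives $g_\nu\ge\min_{\Delta\in\grp{S}_\Delta}\sum_{\lambda\in\grp{S}_{\rm int}}\sqrt{q_\lambda q_{\lambda+\Delta}}$, which together with the reduction of the previous step is the claim. Besides that reduction, the only thing needing care is the $\grp{SU}(d)$ Young‑diagram bookkeeping that justifies the constant $4n'$ and the membership $\Delta\in\grp{S}_\Delta$ -- tedious but routine via the Littlewood--Richardson rule.
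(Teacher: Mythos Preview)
Your proposal has a genuine gap at exactly the place you flag as ``the main obstacle'': the reduction $F_{\wc}(\map{N},\map{I})\ge\min_\nu g_\nu$. Your convexity argument gives $F_{\wc}\ge\min_\sigma c_\sigma$ with $\sigma$ ranging over the irreducibles of $R\otimes\overline{R}$, not of $R$, and those $\sigma$ can have first row up to $2n'$; nothing in your outline prevents some $c_\sigma$ from being strictly smaller than every $g_\nu$. Nor can one hope to pass from $\min_\sigma c_\sigma$ to $\min_\nu g_\nu$ by ``cross-sector interference cannot help'': for the worst-case input of the form $\bigoplus_\mu c_\mu|\Phi^+_\mu\rangle$, the fidelity equals $\int\d U\,p_j(U|I)\,\big|\sum_\mu|c_\mu|^2 d_\mu^{-1}\chi_{U,\mu}\big|^2$, which is \emph{quadratic} in the weights $|c_\mu|^2$ and contains cross terms $\int\d U\,p_j(U|I)\,d_\mu^{-1}d_{\mu'}^{-1}\chi_{U,\mu}\overline{\chi_{U,\mu'}}$ that are generically smaller than the diagonal ones; a superposed input can therefore achieve a fidelity strictly below $\min_\nu g_\nu$, so the inequality you want need not hold.

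The paper sidesteps this entirely. Rather than reducing to single-sector inputs, it first invokes a structural lemma (proved via an SDP for $F_{\wc}$) that the worst-case input can be taken of the form $|\Psi^\ast\rangle=\bigoplus_\mu c_\mu|\Phi^+_\mu\rangle\otimes|\psi_{m_\mu}\rangle$, and then carries the unknown amplitudes $\{c_\mu\}$ through the whole Littlewood--Richardson computation. The correlation function becomes $S_{\lambda,\lambda'}=\sum_{\mu,\mu'}|c_\mu c_{\mu'}|^2(d_\mu d_{\mu'})^{-1}C^{\lambda,\lambda'}_{\mu,\mu'}$ with $C^{\lambda,\lambda'}_{\mu,\mu'}$ a product of Littlewood--Richardson counts; on $\grp{S}_{\rm int}$ this depends only on $\Delta=\lambda'-\lambda$, and the key identity $\sum_\Delta C^{\lambda,\lambda+\Delta}_{\mu,\mu'}=d_\mu d_{\mu'}$ (your $\sum_{\omega,\omega'}m_\nu(\omega)m_\nu(\omega')=d_\nu^2$ is the special case $\mu=\mu'=\nu$) makes $\sum_\Delta\tilde S_\Delta=\sum_{\mu,\mu'}|c_\mu c_{\mu'}|^2=1$ \emph{regardless} of $\{c_\mu\}$. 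The bound then follows uniformly over inputs without ever comparing to $\min_\nu g_\nu$. Your last step is essentially the $\mu=\mu'$ case of this computation; what you are missing is the structural lemma on the worst-case input that lets one keep $\{c_\mu\}$ general and absorb it at the end via that normalization.
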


\begin{proof} 
Denote by $\settt{S}_{\rm cost}$ the collection of all Young diagrams appearing in the decomposition of $\map{U}_{\rm L}^\ast\otimes\map{U}_\Cspace$, i.e., the set of Young diagrams corresponding to the ``cost'' function. Notice that, since the computational register consists of $n_\Cspace=n-n_{\rm R}$ qudits and the logical register consists of one qudit, $\settt{S}_{\rm cost}$ does not contain any $\lambda$ with more than $n':=n_\Cspace+d-1$ boxes. We have the decomposition
\begin{align}
\map{U}_{\rm L}^\ast\otimes\map{U}_\Cspace\simeq\bigoplus_{\Vec\lambda\in\settt{S}_{\rm cost}}U_\Vec\lambda\otimes I_{m_\Vec\lambda}
\end{align}
where $m_\Vec\lambda$ are the multiplicities.

We now convert the quantity (\ref{weak-err-Fwc}) into a form that is easier to bound.
For conjugate-invariant $p_j(U|I)$, the channel of interest $\int\d U~p_j(U|I)~\bigoplus_{\Vec\lambda\in\settt{S}_{\rm cost}}(\map{U}_\Vec\lambda\otimes \map{I}_{m_\Vec\lambda})$ is block-covariant  with respect to the symmetry $\bigoplus_{\Vec\lambda\in\settt{S}_{\rm cost}}(\map{V}_\Vec\lambda\otimes\map{I}_{m_\Vec\lambda})$ for $V\in\grp{SU}(d)$. We can now apply the following lemma (see Subsection \ref{app-wcinput} for the proof):
\begin{lemma}\label{lemma-wcinput}
The worst-case input fidelity of a channel $\map{E}$ commuting with unitary channels of the block diagonal form $\bigoplus_\Vec\lambda (\map{U}_\Vec\lambda\otimes\map{I}_{m_\Vec\lambda})$ for any $U\in\grp{SU}(d)$ can be achieved by an input state of the following form:
\begin{align}\label{Psi-form}
|\Psi^\ast\>:=\bigoplus_\Vec\lambda c_\Vec\lambda |\Phi^+_\Vec\lambda\>\otimes|\psi_{m_\Vec\lambda}\>
\end{align}
with $|\Phi^+_\Vec\lambda\>$ being the maximally entangled state in $\spc{H}_\Vec\lambda\otimes\spc{H}'_\Vec\lambda$, $\{c_\Vec\lambda\}$ being an amplitude distribution, and $|\psi_{m_\Vec\lambda}\>$ being a fixed (otherwise arbitrary) state on the multiplicity subspace.
\end{lemma}
Then, the worst-case input fidelity (\ref{weak-err-Fwc}) can be achieved by a state of the form (\ref{Psi-form}).

Notice that, since the error $\map{C}_{j,{\rm R}}$ simply destroys a few copies of the reference state, the remaining copies can still be cast in the form (\ref{probe-state-form}). Therefore, $p_j(U|I)$ is of the form (\ref{est-prob-dist-covariant}).
We now combine the above facts with Lemma~\ref{lemma-wcinput} and express the fidelity (\ref{weak-err-Fwc}) as
\begin{align}
&F_{\wc}\left(\int\d U~p_j(U|I)~\map{U}_\Cspace\otimes\map{U}^\ast_{\rm L},\map{I}_\Cspace\otimes\map{I}_{{\rm L}}\right)\nonumber\\
&=\sum_{\Vec\lambda,\Vec\lambda'\in\settt{S}_{\rm via}}\sqrt{q_\Vec\lambda q_{\Vec\lambda'}}S_{\Vec\lambda,\Vec\lambda'},
\end{align}
where $\settt{S}_{\rm via}$ is the set of Young diagrams on which $q_\Vec\lambda>0$,  $S_{\Vec\lambda,\Vec\lambda'}$ is a correlation function defined as
\begin{align}\label{S-correlation}
S_{\Vec\lambda,\Vec\lambda'}:=\int\d U~ \chi_{U,\Vec\lambda}\chi^\ast_{U,\Vec\lambda'}\left|\sum_{\Vec\mu\in\settt{S}_{\rm cost}}|c_\Vec\mu|^2\,d_\Vec\mu^{-1}\chi_{U,\Vec\mu}\right|^2
\end{align}
and $\{c_\Vec\mu\}$ is an amplitude distribution over $\settt{S}_{\rm cost}$. Notice that here $\{q_\Vec\lambda\}$ corresponds to the reference frame state when the error syndrome $\map{C}_{j,{\rm R}}$ takes place, and we abbreviated the index $j$ for simplicity.

By the orthogonality of the characters, $S_{\Vec\lambda,\Vec\lambda'}\ge0$ depends on the overlap between the irreducible decomposition of $\Vec\lambda\otimes \Vec\mu$ and the irreducible decomposition of $\Vec\lambda'\otimes\Vec\mu'$. Since $\settt{S}_{\rm cost}$ contains only diagrams with no more than $n'$ boxes, $S_{\Vec\lambda,\Vec\lambda'}=0$ unless 
\begin{align}
d_{\rm Young}(\Vec\lambda,\Vec\lambda')\le n',
\end{align}
 where $d_{\rm Young}(\Vec\lambda,\Vec\mu):=\frac12\sum_{i=1}^d|\lambda_i-\mu_i|$ is the distance between Young diagrams. 

Now we turn back to the fidelity. 
First, since $S_{\Vec\lambda,\Vec\lambda'}\ge 0$ with equality for $d_{\rm Young}(\Vec\lambda,\Vec\lambda')>n'$, we have
\begin{align}\label{fid-weak-error-inter1}
F_{\wc}\ge\sum_{\Vec\lambda\in\settt{S}_{\rm int}}\sum_{\Vec\Delta\in\settt{S}_{\rm diff}}\sqrt{q_\Vec\lambda q_{\Vec\lambda+\Vec\Delta}}S_{\Vec\lambda,\Vec\lambda+\Vec\Delta}\,,
\end{align}
where $\settt{S}_{\rm int}$ is the interior subset of $\settt{S}_{\rm via}$ defined by Eq.~(\ref{S-int}).
Crucially, we now argue that for $\Vec\lambda\in\settt{S}_{\rm int}$ and $\Vec\Delta\in\settt{S}_{\rm diff}$, the correlation function depends only on their relative distance $\Vec\Delta$, but not explicitly on $\Vec\lambda$, i.e., 
\begin{align}\label{ans}
S_{\Vec\lambda,\Vec\lambda+\Vec\Delta}=\tilde{S}_{\Vec\Delta}.
\end{align}
\begin{widetext}
Invoking Eq.\ (\ref{S-correlation}), the correlation function can be expressed as
\begin{align}
S_{\Vec\lambda,\Vec\lambda+\Vec\Delta}=\sum_{\Vec\mu,\Vec\mu'\in\settt{S}_{\rm cost}}|c_\Vec\mu c_{\Vec\mu'}|^2(d_\Vec\mu d_{\Vec\mu'})^{-1}\int\d U~ \chi_{U,\Vec\lambda}\chi_{U,\Vec\mu}\chi^\ast_{U,\Vec\lambda+\Vec\Delta}\chi_{U,\Vec\mu'}^\ast.
\end{align}
By orthogonality of the characters, the correlation function is just
\begin{align}\label{cor-inter}
S_{\Vec\lambda,\Vec\lambda+\Vec\Delta}&=\sum_{\Vec\mu,\Vec\mu'\in\settt{S}_{\rm cost}}|c_\Vec\mu c_{\Vec\mu'}|^2(d_\Vec\mu d_{\Vec\mu'})^{-1}C_{\Vec\mu,\Vec\mu'}^{\Vec\lambda,\Vec\lambda+\Vec\Delta}\\
C_{\Vec\mu,\Vec\mu'}^{\Vec\lambda,\Vec\lambda+\Vec\Delta}&=\left|\left\{(L_{\Vec\nu/\Vec\lambda}^{\Vec\mu\to\tilde{\Vec\mu}},L_{\Vec\nu/(\Vec\lambda+\Vec\Delta)}^{\Vec\mu'\to\tilde{\Vec\mu}'})~:~\Vec\nu\in\Vec\lambda\otimes\Vec\mu,\,\Vec\nu\in(\Vec\lambda+\Vec\Delta)\otimes\Vec\mu'\right\}\right|.
\end{align}
\end{widetext}
Here $L_{\Vec\nu/\Vec\lambda}^{\Vec\mu\to\tilde{\Vec\mu}}$ denotes a Littlewood-Richardson tableau of shape $\Vec\nu/\Vec\lambda$ with content $\tilde{\Vec\mu}$, obtained by adding the Young diagram $\Vec\mu$ to $\Vec\lambda$ according to the Littlewood-Richardson rule, and $\Vec\nu\in\Vec\lambda\otimes\Vec\mu$ means that $\Vec\nu$ appears at least once in the decomposition of $\Vec\lambda\otimes\Vec\mu$. Now one can see why Eq.\ (\ref{ans}) holds. Indeed, $\settt{S}_{\rm int}$ is so defined [cf.\ Eq.~(\ref{S-int})] that, when $\Vec\lambda\in\settt{S}_{\rm int}$, the lengths of different rows of $\Vec\lambda$ and $\Vec\lambda+\Vec\Delta$  have big enough gaps so that even adding all $n'$ boxes to one row would not make its box number greater than its preceding rows. Therefore,  according to the Littlewood-Richardson rule adding $\Vec\mu'$ and $\Vec\mu$, neither has more than $n'$ boxes, is not constraint by the shape of $\Vec\lambda$ and $\Vec\lambda+\Vec\Delta$.
The sum $\sum_\Vec\Delta C_{\Vec\mu,\Vec\mu'}^{\Vec\lambda,\Vec\lambda+\Vec\Delta}$ is determined by how many different contents can any $\Vec\mu\in\settt{S}_{\rm cost}$ possibly generate.

The following property will be useful: Rectifications of Littlewood-Richardson tableaux and representative Young tableaux (of symmetry tensors) are in one-to-one correspondence. Indeed, every representative Young tableau  is a standard (i.e.\null{} left $\le$ right and top $<$ bottom) tableau. It corresponds to a rectified Littlewood-Richardson tableau, whose $j$th row has a number $x$ of the index $i$ with $x$ being the number of $j$s in the $i$th row of the representative Young tableau. A rectified Littlewood-Richardson tableau of content $\tilde{\Vec\mu}$ thus corresponds to the representative Young tableau $\tilde{\Vec\mu}$. Since the total number of representative Young tableaux is the dimension of the irreducible representation, the total number of contents that $\Vec\mu\in\settt{S}_{\rm cost}$ can generate is $d_{\Vec\mu}$.
Therefore, we have
\begin{align}
\sum_{\Vec\Delta}C_{\Vec\mu,\Vec\mu'}^{\Vec\lambda,\Vec\lambda+\Vec\Delta}=d_\Vec\mu d_{\Vec\mu'}
\end{align}
for any $\Vec\mu,\Vec\mu'\in\settt{S}_{\rm cost}$. Combining with Eqs.\ (\ref{ans}) and (\ref{cor-inter}), we have
\begin{align}
\sum_{\Vec\Delta\in\settt{S}_{\rm diff}}\tilde{S}_{\Vec\Delta}&=\sum_{\Vec\mu,\Vec\mu'\in\settt{S}_{\rm cost}}|c_\Vec\mu c_{\Vec\mu'}|^2=1.\label{sum-SDelta}
\end{align}

\begin{widetext}
Now, the fidelity bound (\ref{fid-weak-error-inter1}) becomes
\begin{align}
F_{\wc}\left(\int\d U~p_j(U|I)~\map{U}_\Cspace\otimes\map{U}^\ast_{\rm L},\map{I}_\Cspace\otimes\map{I}_{{\rm L}}\right)&\ge\sum_{\Vec\lambda\in\settt{S}_{\rm int}}\sum_{\Vec\Delta\in\settt{S}_{\rm diff}}\sqrt{q_\Vec\lambda q_{\Vec\lambda+\Vec\Delta}}\tilde{S}_{\Vec\Delta}\nonumber\\
&\ge \left(\min_{\Vec\Delta\in\settt{S}_{\rm diff}}\sum_{\Vec\lambda\in\settt{S}_{\rm int}}\sqrt{q_\Vec\lambda q_{\Vec\lambda+\Vec\Delta}}\right)\left(\sum_{\Vec\Delta'\in\settt{S}_{\rm diff}}\tilde{S}_{\Vec\Delta'}\right)\nonumber\\
&=\min_{\Vec\Delta\in\settt{S}_{\rm diff}}\sum_{\Vec\lambda\in\settt{S}_{\rm int}}\sqrt{q_\Vec\lambda q_{\Vec\lambda+\Vec\Delta}}\,,\label{general-fid-inter}
\end{align}
\end{widetext}
and the proof is complete.
\end{proof}

From the above discussion, we obtain a bound (by combining Lemma \ref{lemma-cov-error-bound} with Lemma \ref{lemma-Fwc}) on the performance of our protocol.
In the following sections, we will use it to evaluate the performance of our protocol for two different erasure error models:
\begin{enumerate}
\item {\bf Weak erasure error} (Section \ref{sec-weak-error}). At most $n_{\rm e}$ out of the $n$ qudits are erased, with $n_{\rm e}$ a constant independent of $n$.  
\item {\bf i.i.d.\null{} erasure error} (Section \ref{sec-strong-error}). Each qudit has a constant probability of being erased. The errors on different qudits are independent.

\end{enumerate}

\subsection{A lower bound on $\epsilon_\cov$}

In the last part of this section, we derive a lower bound on $\epsilon_\cov$ of any covariant code, which will be used to show the optimality of our protocol.
The lower bound is built upon the main result of Ref.~\cite{kubica2020using}, further strengthened here by us. This bound was derived for $\grp{U(1)}$, but we can always find a single-parameter family of unitaries embedded in $\grp{SU}(d)$, so the bound applies in general. For instance, we can consider the $\grp{U(1)}$ family $U_\theta:=e^{-i\theta H}$ generated by the Hamiltonian 
\begin{align}
H=\sum_{j=1}^{d} h_j|j\>\<j|
\end{align}
for an orthonormal basis $\{|j\>\}_{j=1}^d$ of $\spc{H}$. The $\grp{SU}(d)$ covariance implies the $\grp{U(1)}$ covariance, i.e., $U^{\otimes n}_\theta\circ\map{E}_\cov=\map{E}_\cov\circ\map{U}_\theta$ for any $\theta$.

In Ref.~\cite{kubica2020using}, it was shown for covariant codes the worst-case input fidelity obeys the lower bound:
\begin{align}\label{original-wc-bound}
\sqrt{1-F_{\wc}}\ge \frac{(\Delta H)^2}{3\sqrt{6}I_{\rm Fisher}^\uparrow}.
\end{align}
Here $\Delta H$ is the difference between the maximum eigenvalue and the minimum eigenvalue of $H$, and $I_{\rm Fisher}^\uparrow$ is a Fisher information upper bound of the channel  $\map{C}_{\theta}:=\map{C}\circ\map{U}^{\otimes n}_\theta$ (with $\map{C}$ being the error) that can be evaluated as follows: if there exists Kraus operators $\{K_{l;\theta}\}$ of the channel $\map{C}_\theta$ such that 
$\sum_l \dot{K_{l;\theta}}^\dag K_{l;\theta}=0$ 
(here $\dot{K_{l;\theta}}$ denotes the derivative of $K_{l;\theta}$ with respect to $\theta$), the Fisher information upper bound is again bounded as
\begin{align}\label{I-Fisher}
I_{\rm Fisher}^\uparrow\le 4\bigg\|\sum_{l}\dot{K_{l;\theta}}^\dag \dot{K}_{l;\theta}\bigg\|_{\infty}.
\end{align}
Here $\|\cdot\|_{\infty}$ denotes the operator norm. If there does not exist such a Kraus form, we set $I_{\rm Fisher}^\uparrow=\infty$ and the bound (\ref{original-wc-bound}) is trivial.

The above bound, however, is not enough to show the optimality of our result. Instead, we derive a strengthened version of it:
\begin{lemma}[Theorem 1 of \cite{kubica2020using}; strengthened version]\label{lemma-converse0}
The error of any covariant code is lower bounded by
\begin{align}\label{diamond-converse}
\epsilon_{\cov}\ge \frac{(\Delta H)^2}{16I_{\rm Fisher}^\uparrow}\,.
\end{align}
Here $\Delta H$ is the difference between the maximum eigenvalue and the minimum eigenvalue of $H$, and $I_{\rm Fisher}^\uparrow$ is the Fisher information upper bound (\ref{I-Fisher}).
\end{lemma}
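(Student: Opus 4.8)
The plan is to adapt the quantum-metrological argument of Ref.~\cite{kubica2020using}, but to control the diamond-norm error throughout instead of the worst-case fidelity; this is the change that upgrades the conclusion from being quartic in $1/I_{\rm Fisher}^\uparrow$ (the most one gets by feeding \eqref{original-wc-bound} through the Fuchs--van de Graaf inequalities) to being linear, which is what the optimality claims in Propositions~\ref{prop-converse} and \ref{prop-converse-strong-error} need. Write $\map{N}_0:=\map{D}\circ\map{C}\circ\map{E}$ for the logical channel of the given covariant code and $\map{C}_\theta:=\map{C}\circ\map{U}_\theta^{\otimes n}$ as in \eqref{I-Fisher}. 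Covariance gives $\map{D}\circ\map{C}_\theta\circ\map{E}=\map{N}_0\circ\map{U}_\theta$, and since the error of a covariant code does not depend on the implemented logical gate we may take $\epsilon_\cov=\epsilon_\wc(\map{N}_0,\map{I}_{\rm L})$. The guiding idea is that a covariant code with small $\epsilon_\cov$, used $N$ times in parallel, is a metrology scheme for $\theta$ with Heisenberg-scaling sensitivity of order $N^2(\Delta H)^2$, which is incompatible with the ceiling of order $N\,I_{\rm Fisher}^\uparrow$ obeyed by any scheme that calls $\map{C}_\theta$ in $N$ parallel copies---unless $\epsilon_\cov\gtrsim (\Delta H)^2/I_{\rm Fisher}^\uparrow$. (A single use constrains $\epsilon_\cov$ only in the regime $I_{\rm Fisher}^\uparrow\lesssim(\Delta H)^2$, which is not the relevant one since $I_{\rm Fisher}^\uparrow$ grows with $n$ in the applications, so the amplification is essential.)

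Concretely, let $\lambda_{\max},\lambda_{\min}$ be the extreme eigenvalues of $H$, so $\lambda_{\max}-\lambda_{\min}=\Delta H$, and put $\ket{\Psi_N}:=\tfrac1{\sqrt2}\big(\ket{\lambda_{\max}}^{\otimes N}+\ket{\lambda_{\min}}^{\otimes N}\big)$ on $N$ logical qudits. Then $\map{U}_\theta^{\otimes N}(\Psi_N)$ acquires a relative phase proportional to $N\theta\,\Delta H$ between its two branches, so for $\delta:=\pi/(N\Delta H)$ the state $\map{U}_\delta^{\otimes N}(\Psi_N)$ is orthogonal to $\Psi_N$. Encoding $\Psi_N$ with $\map{E}^{\otimes N}$, applying $\map{C}_\theta^{\otimes N}$, and decoding with $\map{D}^{\otimes N}$ produces, by covariance, the state $\map{N}_0^{\otimes N}\!\big(\map{U}_\theta^{\otimes N}(\Psi_N)\big)$. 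A telescoping (hybrid) estimate gives $\tfrac12\|\map{N}_0^{\otimes N}-\map{I}_{\rm L}^{\otimes N}\|_\diamond\le N\epsilon_\cov$, so the outputs at $\theta=0$ and $\theta=\delta$ sit within trace distance $N\epsilon_\cov$ of the orthogonal states $\Psi_N$ and $\map{U}_\delta^{\otimes N}(\Psi_N)$; by the triangle inequality for the trace distance, $\tfrac12\|\map{N}_0^{\otimes N}(\Psi_N)-\map{N}_0^{\otimes N}\!\big(\map{U}_\delta^{\otimes N}(\Psi_N)\big)\|_1\ge 1-2N\epsilon_\cov$---i.e.\ this scheme tells $\theta\in\{0,\delta\}$ apart with error at most $N\epsilon_\cov$.

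For the metrological upper bound, the definition of $I_{\rm Fisher}^\uparrow$ already supplies a Kraus representation $\{K_{l;\theta}\}$ of $\map{C}_\theta$ with $\sum_l\dot K_{l;\theta}^\dag K_{l;\theta}=0$; taking the product Kraus representation of $\map{C}_\theta^{\otimes N}$, the cross terms in the analogue of $\sum_l\dot K_{l}^\dag\dot K_{l}$ cancel precisely because this quantity vanishes, so it is just $\sum_l\dot K_{l;\theta}^\dag\dot K_{l;\theta}$ spread diagonally over the $N$ factors, of operator norm at most $N\|\sum_l\dot K_{l;\theta}^\dag\dot K_{l;\theta}\|_\infty$, whence $I_{\rm Fisher}^\uparrow(\map{C}_\theta^{\otimes N})\le N\,I_{\rm Fisher}^\uparrow$. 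Since $N\,I_{\rm Fisher}^\uparrow$ then upper-bounds the channel Fisher information of $\map{C}_\theta^{\otimes N}$, the output of every probe moves in Bures angle at rate at most $\tfrac12\sqrt{N\,I_{\rm Fisher}^\uparrow}$, and Fuchs--van de Graaf gives $\tfrac12\|\map{C}_0^{\otimes N}-\map{C}_\delta^{\otimes N}\|_\diamond\le\sin\!\big(\tfrac\delta2\sqrt{N\,I_{\rm Fisher}^\uparrow}\big)$ whenever the argument is at most $\pi/2$; data processing through $\map{E}^{\otimes N}$ and $\map{D}^{\otimes N}$ carries this bound to the decoded outputs. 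Comparing with the previous paragraph and inserting $\delta=\pi/(N\Delta H)$ yields $1-2N\epsilon_\cov\le\sin\!\big(\tfrac\pi2\sqrt{I_{\rm Fisher}^\uparrow/(N(\Delta H)^2)}\big)$, and choosing the integer $N$ near $I_{\rm Fisher}^\uparrow/(\Delta H)^2$ that best balances the two sides gives $\epsilon_\cov\ge(\Delta H)^2/(16\,I_{\rm Fisher}^\uparrow)$.

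The difficulty is essentially quantitative. The crucial point is to stay linear in $\epsilon_\cov$: pinning the decoded output to the rotated GHZ state in trace distance---so the error accumulates as $N\epsilon_\cov$ rather than as $\sqrt{N\epsilon_\cov}$---is what avoids the quartic bound, and every estimate must be routed through the worst-case input so that no factor of the amplified logical dimension $d^{\,N}$ creeps in. Bringing the constant down to $16$ then requires the sharp forms of the distinguishability/Fisher inequalities, an optimal choice of $N$ and of the rotation angle (one may use a slightly non-orthogonal probe rather than forcing $N\delta\,\Delta H=\pi$), and a small separate check of the degenerate regime in which $(\Delta H)^2/(16\,I_{\rm Fisher}^\uparrow)>1$---there the statement is vacuous because no such code can exist---as well as of the integrality of $N$. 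The one genuinely structural ingredient is the sub-multiplicativity $I_{\rm Fisher}^\uparrow(\map{C}_\theta^{\otimes N})\le N\,I_{\rm Fisher}^\uparrow$, for which the $\sum_l\dot K_{l;\theta}^\dag K_{l;\theta}=0$ clause in \eqref{I-Fisher} is exactly what makes the cross terms disappear.
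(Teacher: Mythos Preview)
Your strategy coincides with the paper's: amplify to $N$ copies, use covariance to turn the code into a metrology protocol for $\theta$, and pit the Heisenberg-scaling sensitivity of the GHZ probe against the ceiling $N\,I_{\rm Fisher}^\uparrow$. The key observation---that the error must enter \emph{linearly} as $N\epsilon_{\cov}$ rather than as $\sqrt{N\epsilon_{\cov}}$---is exactly the improvement the paper makes over \cite{kubica2020using}, and your triangle-inequality step $\tfrac12\|\map{N}_0^{\otimes N}(\Psi_N)-\map{N}_0^{\otimes N}(\map{U}_\delta^{\otimes N}\Psi_N)\|_1\ge 1-2N\epsilon_{\cov}$ is a clean way to express it.

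Where the proposal falls short is the constant. With $\delta=\pi/(N\Delta H)$ and $D\le\sin\!\big(\tfrac{\delta}{2}\sqrt{NI_{\rm Fisher}^\uparrow}\big)$, the choice $N\approx I_{\rm Fisher}^\uparrow/(\Delta H)^2$ makes the right side $\sin(\pi/2)=1$ and the inequality vacuous; the true optimum sits near $N\approx 1/(6\epsilon_{\cov})$ and yields only $\epsilon_{\cov}\gtrsim(\Delta H)^2/(30\,I_{\rm Fisher}^\uparrow)$. The appeal to unspecified ``sharp forms'' does not close this factor-of-two gap.

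The paper obtains exactly $16$ by a slightly different technical route: rather than a binary distinguishability test, it retains the QFI lower bound of \cite[Lemma~1]{kubica2020using} for the decoded family $\map{N}_0\circ\map{U}_\theta$, but replaces the Fuchs--van de Graaf step in \cite[Eq.~(A11)]{kubica2020using} by the tight pure-state relation $f^2\ge 1-\epsilon_{\cov}$ (cf.\ \cite[Eq.~(9.111)]{nielsen2000quantum}). This turns the single-copy bound into $I_{\rm Fisher}^\uparrow\ge(1-4\epsilon_{\cov})(\Delta H)^2$, hence $m(\Delta H)^2(1-4m\epsilon_{\cov})\le I_{\rm Fisher}^\uparrow$ after tensoring, and maximising the left side at $m=1/(8\epsilon_{\cov})$ gives $16$ on the nose. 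So your scaling argument is sound, but to recover the stated constant you should lower-bound the QFI of the decoded family (as in \cite{kubica2020using}) rather than upper-bound a single pairwise trace distance.
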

To see Eq.\ (\ref{diamond-converse}) is indeed a strengthening of Eq.\ (\ref{original-wc-bound}), simply notice that $\epsilon_\cov\le\sqrt{1-F_\wc}$ \cite{fuchs1999cryptographic} and thus Eq.\ (\ref{diamond-converse}) implies Eq.\ (\ref{original-wc-bound}). On the other hand, plugging the other part of the Fuchs-van de Graaf inequality $\epsilon_\cov\ge 1-\sqrt{F_{\wc}}$ into Eq.\ (\ref{original-wc-bound}) only yields a bound on $\epsilon_\cov$ that scales as $(I_{\rm Fisher}^\uparrow)^{-2}$ which, as we will soon see, is not enough to prove the desired $1/n^2$-scaling bound on $\epsilon_\cov$. For this reason, we must use the strengthened version.

\medskip

\noindent{\bf Proof of Lemma \ref{lemma-converse0}.}
This result can be derived along the same line of arguments as the proof of the original bound (\ref{original-wc-bound}) in Ref.\ \cite{kubica2020using}. A few improvements need to be made as the following:
First, in Eq.\ (A11) of \cite{kubica2020using}, we directly consider the worst-case error $\epsilon_\cov$ and employ the tighter bound (than the Fuchs-van de Graaf inequality) between $f^2$ an $\epsilon_{\wc}$ when one of the two states is pure; see, e.g., \cite[Eq.\ (9.111)]{nielsen2000quantum}.
Then Eq.\ (A11) becomes
\begin{align}
f^2\ge 1-\epsilon_{\cov},
\end{align}
where $f^2$ is the same quantity as in the original equation.  
With this improved inequality substituted into Eq.\ (A20), we get the counterpart of \cite[Lemma 1]{kubica2020using} for $\epsilon_{\wc}$:
\begin{align}
I_{\rm Fisher}^\uparrow\ge (1-4\epsilon_{\cov})(\Delta H)^2.
\end{align}
Eq.\ (21) of \cite{kubica2020using} then becomes $(m\Delta H)^2(1-4m\epsilon_{\cov})\le m I_{\rm Fisher}^\uparrow$, and optimising over $m$ we get the bound (\ref{diamond-converse}).\hfill\qed

\medskip

\subsection{On the error of covariant channels.}\label{app-cov-error}
Let $\map{A}$ and $\map{B}$ be channels acting on a $d$-dimensional Hilbert space $\spc{H}$. Assume that $\map{A}$ and $\map{B}$ are both covariant with the (full) symmetry group $\grp{SU}(d)$ on $\spc{H}$. 
Define the entanglement error:
\begin{align}
	\nonumber \epsilon_{\rm ent}(\map{X}, \map{Y}) &:=\frac12\|X-Y \|_1 \, ,
\end{align}
where $X$ and $Y$ are the Choi states of two channels $\map{X}$ and $\map{Y}$, respectively.

Here we prove the following lemma:
\begin{lemma}\label{lemma-cov-error-bound-app}
Suppose $\epsilon_{\ent}(\map{A},\map{B})\le 1/2$. The worst-case input error of $\map{B}$ can be bounded as
\begin{align}
\epsilon_{\wc}(\map{B},\map{I})\le 9d\cdot\max\left\{\epsilon_{\ent}(\map{A},\map{I}),1-F_{\ent}\left(\map{A},\map{B}\right)\right\}.
\end{align}
Here $\map{I}$ denotes the identity channel on $\spc{H}$.
\end{lemma}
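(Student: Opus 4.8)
The plan is to exploit the fact that covariance under the full group $\grp{SU}(d)$ acting on $\spc{H}\cong\C^{d}$ leaves almost no freedom. Concretely, the Choi state of any such covariant channel commutes with $U\otimes\bar{U}$ for every $U\in\grp{SU}(d)$, and since $U\otimes\bar{U}$ decomposes into the trivial representation (spanned by the maximally entangled vector $|\Phi^{+}\>$) and the adjoint representation (carried by the orthogonal complement), Schur's lemma forces the Choi states of $\map{A}$ and $\map{B}$ to be
\begin{align}
A=(1-a)\,\Phi^{+}+\tfrac{a}{d^{2}-1}\,(I-\Phi^{+}),\qquad
B=(1-b)\,\Phi^{+}+\tfrac{b}{d^{2}-1}\,(I-\Phi^{+}),
\end{align}
for some $a,b\in[0,1]$, where $\Phi^{+}:=|\Phi^{+}\>\<\Phi^{+}|$. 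First I would read off the consequences of this rigid form: since $A$, $B$, and the Choi state $\Phi^{+}$ of $\map{I}$ are simultaneously diagonal, one gets $\epsilon_{\ent}(\map{A},\map{I})=a$, $\epsilon_{\ent}(\map{B},\map{I})=b$, $\epsilon_{\ent}(\map{A},\map{B})=|a-b|$, and $F_{\ent}(\map{A},\map{B})=\bigl(\sqrt{(1-a)(1-b)}+\sqrt{ab}\bigr)^{2}$, so that $1-F_{\ent}(\map{A},\map{B})=g(a,b):=\bigl(\sqrt{b(1-a)}-\sqrt{a(1-b)}\bigr)^{2}$.

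The core of the argument is then the two-variable inequality
\begin{align}
b\ \le\ 9\,\max\{a,\ g(a,b)\}\qquad\text{for all }a,b\in[0,1],
\end{align}
which I would establish by splitting on the ratio $b/a$. If $b\le 9a$ there is nothing to prove. Otherwise $b>9a\ge a$, so $b(1-a)\ge a(1-b)$ and $\sqrt{g(a,b)}=\sqrt{b(1-a)}-\sqrt{a(1-b)}$; rationalising gives $\sqrt{g(a,b)}=\dfrac{b-a}{\sqrt{b(1-a)}+\sqrt{a(1-b)}}\ge\dfrac{b-a}{\sqrt{b}+\sqrt{a}}$, and feeding in $b-a>\tfrac{8}{9}b$ and $\sqrt{a}<\tfrac{1}{3}\sqrt{b}$ yields $\sqrt{g(a,b)}>\tfrac{2}{3}\sqrt{b}$, i.e.\ $g(a,b)>\tfrac{4}{9}b$, so $b<\tfrac{9}{4}g(a,b)$. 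In either case $b\le 9\max\{a,g(a,b)\}$. (The hypothesis $\epsilon_{\ent}(\map{A},\map{B})\le\tfrac12$ is not actually needed for this step; it merely records the regime in which the lemma is invoked.)

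Finally I would bridge from the entanglement error to the worst-case error with the standard dimension bound $\epsilon_{\wc}(\map{X},\map{Y})\le d\,\epsilon_{\ent}(\map{X},\map{Y})$ for channels on a $d$-dimensional system \cite[Exercise~3.6]{watrous2018theory}, applied with $\map{X}=\map{B}$, $\map{Y}=\map{I}$:
\begin{align}
\epsilon_{\wc}(\map{B},\map{I})\ \le\ d\,b\ \le\ 9d\,\max\{\epsilon_{\ent}(\map{A},\map{I}),\ 1-F_{\ent}(\map{A},\map{B})\},
\end{align}
which is the assertion. The one point requiring care is the elementary inequality, and within it the regime where $a\approx b$ are both small: there $g(a,b)$ is quadratically small in $b-a$ while $\epsilon_{\ent}(\map{A},\map{B})=|a-b|$ is only linearly small, so $1-F_{\ent}$ alone cannot control $b$; the case split is exactly what tames this, since whenever $g(a,b)$ is too small to bound $b$ by itself, $a$ is forced to be a constant fraction of $b$. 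Note that a naive triangle inequality plus the Fuchs--van de Graaf bound would only give $\epsilon_{\wc}(\map{B},\map{I})\lesssim d\bigl(\epsilon_{\ent}(\map{A},\map{I})+\sqrt{1-F_{\ent}(\map{A},\map{B})}\bigr)$, whose square root is too weak; it is the rigid one-parameter structure of covariant qudit channels that upgrades $\sqrt{1-F}$ to $1-F$.
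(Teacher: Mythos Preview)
Your proof is correct and follows the same three-step structure as the paper's: parametrise the covariant Choi states by Schur's lemma, prove an elementary two-variable inequality controlling $b$ by $\max\{a,\,1-F_{\ent}\}$, and then apply the dimension bound $\epsilon_{\wc}\le d\,\epsilon_{\ent}$. The only difference lies in how the elementary inequality is handled: the paper splits according to whether $a\gtrless\epsilon_{\ent}(\map{A},\map{B})/8$ and in the small-$a$ branch bounds $1-F_{\ent}\ge\epsilon_{\ent}(1-\epsilon_{\ent})/4\ge\epsilon_{\ent}/8$, which is where the hypothesis $\epsilon_{\ent}(\map{A},\map{B})\le\tfrac12$ enters; your rationalisation argument with the threshold $b\gtrless 9a$ is cleaner, yields the sharper $b<\tfrac94(1-F_{\ent})$ in the second branch, and, as you observe, makes the hypothesis superfluous.
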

\proof
In the following, we make frequent use of an elementary relation between the worst-case input error  and the entanglement error:
\begin{align}\label{error_wc_ent}
	\epsilon_{\rm wc}(\map{X}, \map{Y}) \leq d \cdot \epsilon_{\rm ent}(\map{X}, \map{Y}) \, .
\end{align}

For any channel $\map{A}$, define its Choi state as
\begin{align}
	A := \Big(\map{A} \otimes \map{I} \Big) (\Phi^+_d) \, 
\end{align}
with $\Phi^+_d$ being the maximally entangled state in $\spc{H}\otimes\spc{H}$.
When $\map{A}$ is covariant, we have
\begin{align}
	[A, \, U \otimes U^\ast] = 0, \qquad \forall\, U \in {\rm SU}(d) \, .
\end{align}
By Schur's lemma, the Choi states of the covariant channels $\map{A}$ and $\map{B}$ can be decomposed as 
\begin{align}
&A = (1-a) \cdot  \Phi^+_d + a\cdot \rho^{\perp}\qquad\rho^{\perp}:=\frac{1}{d^2-1} \Big(I\otimes I - \Phi^+_d\Big) \\
&B = (1-b) \cdot  \Phi^+_d + b\cdot \rho^{\perp}. 
\end{align}
Therefore, the entanglement error and the entanglement fidelity that we are interested in can be evaluated as:
\begin{align}
\epsilon_{\ent}(\map{A},\map{I})&=a\\
\epsilon_{\ent}(\map{B},\map{I})&=b\\
F_{\ent}(\map{A},\map{B})& = \left(\sqrt{(1-a)(1-b)}+\sqrt{ab}\right)^2 \\
\epsilon_{\ent}(\map{A},\map{B})&=|a-b|. 
\end{align}
If $a\ge b$, we have 
\begin{align}\label{app-error-inter1}
\epsilon_\wc(\map{B},\map{I})&\le d\cdot\epsilon_\ent(\map{B},\map{I})= d\cdot b\le d\cdot a.
\end{align}
If $a< b$, we can write $b=a+\epsilon_\ent(\map{A},\map{B})$. We further distinguish between two cases: If $a>\epsilon_\ent(\map{A},\map{B})/8$, then $b\le 9a$ and thus
\begin{align}\label{app-error-inter2}
\epsilon_\wc(\map{B},\map{I})\le 9d\cdot a. 
\end{align}
\begin{widetext}
Otherwise, if $a\le\epsilon_\ent(\map{A},\map{B})/8$, we have (using the shorthand $\epsilon_\ent:=\epsilon_\ent(\map{A},\map{B})$)
\begin{align*}
1-F_{\ent}(\map{A},\map{B}) 
&=1- \left(\sqrt{(1-a)(1-a-\epsilon_\ent)}+\sqrt{a(a+\epsilon_\ent)}\right)^2 \nonumber\\
&= \epsilon_\ent+2a(1-a-\epsilon_\ent)-2\sqrt{a(1-a)(a(1-a)+\epsilon_\ent(1-2a-\epsilon_\ent))} \nonumber\\
&\ge \epsilon_\ent+2a(1-a-\epsilon_\ent)-2\sqrt{2a(1-a)\epsilon_\ent} \\
&=(\sqrt{\epsilon_\ent}-\sqrt{2a(1-a)})^2-2a\epsilon_\ent \\
&\ge\left(\sqrt{\epsilon_\ent}-\frac{\sqrt{\epsilon_\ent}}2\right)^2-\frac{\epsilon_\ent^2}{4} \\
&\ge \frac{\epsilon_\ent(1-\epsilon_\ent)}{4}\\
&\ge \frac{\epsilon_\ent}{8}.
\end{align*} 
\end{widetext}
Notice that the last inequality holds if $\epsilon_\ent(\map{A},\map{B})\le 1/2$. Then, the worst-case input error of $\map{B}$ can be bounded as
\begin{align}\label{app-error-inter3}
\epsilon_\wc(\map{B},\map{I})&\le d\cdot (a+\epsilon_\ent(\map{A},\map{B}))\le d\cdot \left(a+8(1-F_{\ent}(\map{A},\map{B}))\right).
\end{align}
Finally, we get the desired bound by summarising Eqs.~(\ref{app-error-inter1}), (\ref{app-error-inter2}), and (\ref{app-error-inter3}) into a more compact form. \qed

\subsection{Proof of Lemma \ref{lemma-wcinput}}\label{app-wcinput}
Here we show that the worst-case input fidelity of a channel $\map{E}$ commuting with
\begin{align}
\map{U}_{\rm tot}:=\bigoplus_\Vec\lambda \map{U}_\Vec\lambda\otimes\map{I}_{m_\Vec\lambda}
\end{align}
can be achieved by an input state of the following form:
\begin{align}\label{app-Psi-form}
|\Psi^\ast\>:=\bigoplus_\Vec\lambda c_\Vec\lambda |\Phi^+_\Vec\lambda\>\otimes|\psi_{m_\Vec\lambda}\>.
\end{align} 

The worst-case input fidelity of $\map{E}$ can be written as a SDP (with the Slater's condition always satisfied).
In particular, from Lemma~\ref{lemma:sdp} (see later) the primal problem for the worst-case input (square-root) fidelity is:
\begin{align}\label{app-wc-sq-fid}
\sqrt{F}_{\wc}(\map{E},\map{I})= &\,{\rm min}\ \frac12\left(\Tr (E_{BA}\Gamma_{BA})+\Tr(\Omega_{BA}\Lambda_{BA})\right)\\
&\ {\rm such\ that}\ \left(\begin{matrix}\Gamma_{BA} & - I_B\otimes\rho_A^T \\ -I_B\otimes\rho_A^T & \Lambda_{BA}\end{matrix} \right)\ge 0 \notag\\
&\qquad\qquad\qquad\Tr(\rho_A^T)\ge 1\notag\\
&\qquad\qquad\qquad\rho_A,\Gamma_{BA},\Lambda_{BA}\ge0\notag.
\end{align}
Here $\rho_A:=\Tr_R\Psi_{AR}$ corresponds to the marginal of the input state, $\Omega_{BA}:=|I\kk\bb I|$, and $E_{BA}:=(\map{E}\otimes\map{I}_A)(\Omega)$. 

Suppose that $|\Psi\>$ with marginal $\rho_A$ is an input state achieving the minimum. 
Consider the twirling 
\begin{align}
\map{T}(\cdot):=\int\d U \left(\bigoplus_\Vec\lambda \map{U}_\Vec\lambda\otimes \map{U}_\Vec\lambda^\ast\otimes\map{I}_{m_\Vec\lambda}\otimes\map{I}_{m_\Vec\lambda}\right)(\cdot)
\end{align}
 on the constraints. The constraints become
\begin{align}
&\left(\begin{matrix}\map{T}(\Gamma_{BA}) & - I_B\otimes\tilde{\rho}_A^T \\ -I_B\otimes\tilde{\rho}_A^T & \map{T}(\Lambda_{BA})\end{matrix} \right)\ge 0 \\
&\Tr(\tilde{\rho}_A^T)\ge 1\\
&\tilde{\rho}_A,\map{T}(\Gamma_{BA}),\map{T}(\Lambda_{BA})\ge0.
\end{align}
Here $\tilde{\rho}_A:=\int\d U\,\bigoplus_\Vec\lambda\map{U}^\ast_\Vec\lambda\otimes\map{I}_{m_\Vec\lambda}(\rho_A)$. Noticing that both $E_{BA}$ and $\Omega_{BA}$ are invariant under $\map{T}$ and its dual $\map{T}^\dag$, the objective function remains the same:
\begin{align}
&\frac12\left(\Tr (E_{BA}\map{T}(\Gamma_{BA}))+\Tr(\Omega_{BA}\map{T}(\Lambda_{BA}))\right)\nonumber\\
=&\frac12\left(\Tr (E_{BA}\Gamma_{BA})+\Tr(\Omega_{BA}\Lambda_{BA})\right).
\end{align}
Therefore, an input $|\Psi^*\>$ with marginal $\tilde{\rho}_A$ also achieves the worst-case input fidelity. Since it is invariant under $\bigoplus_\Vec\lambda\map{U}^\ast_\Vec\lambda$, by Schur's lemma it has to be of the diagonal form
\begin{align}
\tilde{\rho}_A=\bigoplus_\Vec\lambda |c_\Vec\lambda|^2 I_\Vec\lambda\otimes\sigma_{m_\Vec\lambda}
\end{align}
with $\{\sigma_{m_\Vec\lambda}\}$ being arbitrary states on the multiplicity subspaces.
Therefore, the corresponding input state $|\Psi^\ast\>$ can be cast into the desired form (\ref{app-Psi-form}).

At last we prove the formulation (\ref{app-wc-sq-fid}) of the worst-case input fidelity as a semidefinite program, adapting the method used for the diamond norm \cite{watrous2013}. First, for a channel $\map{A}:\Lin( \spc K_{\rm in})  \to \Lin(\spc K_{\rm out})$, define its Choi operator $A\in \Lin(\spc K_{\rm out}\otimes \spc K_{\rm in}')$ with $\spc K_{\rm in}'\simeq \spc K_{\rm in}$ as follows. 
For $\ket{\Omega}\in \spc{K}_{\rm in}\otimes \spc{K}_{\rm in}'$ the unnormalized maximally-entangled state, set $A = \big(\map{A} \otimes \map{I}\big) (\Omega)$.


\begin{lemma}
\label{lemma:sdp}
Given any two channels $\mathcal A$ and $\mathcal B$ with Choi operators $A$ and $B$, respectively, 
\begin{align}
\sqrt{F_{\rm wc}(\map{A},\map{B})}
=\begin{array}[t]{rl}
\text{min} &
\tfrac12(\Tr[A\Gamma]+\Tr[B\Lambda])\\[1mm]
\text{s.t.} &
\begin{pmatrix} \Gamma & -I\otimes \rho\\ -I\otimes \rho & \Lambda\end{pmatrix}\geq 0\,,\\
&\rho,\Gamma,\Lambda\geq 0\,,\\
&\rho \in \St(\spc K_{\rm in})\,,\\
& \Gamma,\Lambda\in \Lin (\spc K_{\rm out}\otimes \spc K_{\rm in})\,.
\end{array}
\end{align}
\end{lemma}

\begin{proof}
The first step in the proof is to simplify the dependence on the input state. 
Using the dual form of the fidelity function from \cite{watrous2013,killoran2012} in \eqref{F-wc}, we have 
\begin{align}
\sqrt{F_{\wc}(\map A,\map B)} = \begin{array}[t]{rl}
\text{min} &
\tfrac12(\Tr[Y \big(\map{A}\otimes \map {I}_{\rm R}\big)(\rho)]+\Tr[Z \big(\map{B}\otimes \map {I}_{\rm R}\big)(\rho)])\\
\text{s.t.} 
&\begin{pmatrix} Y & -\id\\ -\id & Z\end{pmatrix}\geq 0\,,\\
&\rho\in \St(\spc K_{\rm in}\otimes \spc K_{\rm R})\,, \\
&Y,Z \in \Lin (\spc K_{\rm out}\otimes \spc K_{\rm R}) \,.
\end{array}
\end{align}
Observe that we can assume the optimal input state $\rho$ is pure because the objective function is linear in $\rho$.  
Therefore, we can write it as $\rho=K\Omega K^\dagger$, where $\Omega\in \Lin(\spc{K}_{\rm in}\otimes \spc{K}_{\rm in}')$ and $K:\spc{K}_{\rm in}'\to \spc{K}_{{\rm R}}$ is the operator defined by $\ket{\psi}\mapsto \bra{\Omega}(\ket{\rho}\otimes \ket{\psi})$.
Now let $K$ be the variable in the optimization.
By construction, $\rho\geq 0$, and the trace constraint becomes $\Tr[K^\dagger K]=1$. 
Note that $K^\dagger K\in \Lin(\spc{H}_{\rm in}')$. 

In the first term of the objective function we recognize the Choi operator of $\map{A}$:
\begin{align}
\Tr[Y \big(\map{A}\otimes \map {I}_{\rm R}\big)(\rho)]
&= \Tr[Y K \big(\map{A}\otimes \map{I})(\Omega) K^\dagger ]\\
&=\Tr[K^\dagger Y K A]\,,
\end{align}
and this works similarly for the second term. 
Thus, the optimization takes the form 
\begin{align}
\label{eq:simplifiedform}
\sqrt{F_{\wc}(\map A,\map B)} = \begin{array}[t]{rl}
\text{min} &
\tfrac12(\Tr[K^\dagger Y K A]+\Tr[K^\dagger Z K B])\\
\text{s.t.} 
&\begin{pmatrix} Y & -\id\\ -\id & Z\end{pmatrix}\geq 0\,,\\
&\Tr[K^\dagger K]=1\,,\\
& K\in \Lin(\spc{K}_{\rm in}'\to \spc K_{\rm R})\,, \\
&Y,Z \in \Lin (\spc K_{\rm out}\otimes \spc K_{\rm R}) \,.
\end{array}
\end{align}

Now we want to move to different variables, but without changing the optimal value. 
Defining $\Gamma$ and $\Lambda$ in $\Lin(\spc{K}_{\rm out}\otimes \spc{K}_{\rm in}')$ by $\Gamma=K^\dagger Y K$ and $\Lambda=K^\dagger Z K$, the objective function becomes $\tfrac12(\Tr[\Gamma A]+\Tr[\Lambda B])$.  
To deal with the constraints, let $M=\text{diag}(K,K)$ and conjugate the block matrix in the constraint by $M$, multiplying from the left by $M^\dagger$ and the right by $M$.
Doing so puts $\Gamma$ and $\Lambda$ on the diagonal, and $-\id\otimes K^\dagger K$ on the off-diagonal. 
Note that the block matrix constraint implies $Y\geq 0$ and $Z\geq 0$, which by construction then implies $\Gamma\geq 0$ and $\Lambda \geq 0$. 
Defining $\rho\in \Lin(\spc{K}_{\in}')$ as $\rho=K^\dagger K$, we have an optimization in the variables  $\Gamma$, $\Lambda$ and $\rho$.  
However, conjugation generally relaxes the constraints, which could lead to a smaller minimum value than the original optimization. 
Ignoring the difference between $\spc{K}_{\rm in}'$ and $\spc{K}_{\rm in}$, we have established
\begin{align}
\label{eq:nearlythere}
\sqrt{F(\map{A},\map{B})}\leq \begin{array}[t]{rl}
\text{min} &
\tfrac12(\Tr[\Gamma A]+\Tr[\Lambda {B}])\\[1mm]
\text{s.t.} &
\begin{pmatrix} \Gamma & -I\otimes \rho\\ -\id\otimes \rho & \Lambda\end{pmatrix}\geq 0\,,\\
&\rho,\Gamma,\Lambda\geq 0\,,\\
&\rho \in \St(\spc K_{\rm in})\,,\\
& \Gamma,\Lambda\in \Lin (\spc K_{\rm out}\otimes \spc K_{\rm in})\,.
\end{array}
\end{align}

To establish equality in \eqref{eq:nearlythere} and complete the proof, we show that any feasible variables in \eqref{eq:nearlythere} can be converted into feasible variables in \eqref{eq:simplifiedform} having the same value of the objective function.
Nominally, the following choice will work. 
Pick an arbitrary isometry $V:\spc{K}_{\rm in}\to \spc{K}_{\rm R}$ and define 
\begin{align}
K&=V\rho^{\nicefrac 12}\,,\\
Y&=V\rho^{-\nicefrac 12}\Gamma \rho^{-\nicefrac 12} V^\dagger\,,\\
Z&=V\rho^{-\nicefrac 12}\Lambda \rho^{-\nicefrac 12} V^\dagger\,.
 \end{align}
However, the inverse of $\rho$ is potentially problematic, as we do not know that $\rho$ is full rank. 
We can avoid this problem as follows (which could presumably also be done by continuity). 
Suppose that $P\in \Lin(\spc{K}_{\rm in})$ is the projection onto the support of $\rho$.
Then $\Gamma'= P \Gamma P$ and $\Lambda'= P\Lambda P$ are also feasible in \eqref{eq:nearlythere}, as we can conjugate the constraints by $P$. 
Moreover, these variables will not have a larger value of the objective function, so we may as well begin the argument with feasible variables of this form.  

Now we make a slight modification of the above choice, using the inverse on the support of $\rho$:
\begin{align}
K&=V \rho^{\nicefrac 12}\,,\\
Y&=V\left( \rho^{-\nicefrac 12}\Gamma \rho^{-\nicefrac 12}+I\otimes (I- P)\right) V^\dagger \,,\\
Z&=V\left( \rho^{-\nicefrac 12}\Lambda \rho^{-\nicefrac 12}+I\otimes (I- P)\right) V^\dagger \,.
 \end{align}

Given that the support of $\Gamma$ is contained in that of $I\otimes \rho$, it follows that $\Tr[K^\dagger Y  K A ]=\Tr[\Gamma A]$, and similarly for the other term. Hence this choice of variables leads to the same value of the objective function.  
Feasibility in \eqref{eq:simplifiedform} also holds. 
The positivity and trace conditions hold immediately, and only the block matrix constraint is a little more involved. 
Define $P'\in \Lin(\spc{K}_{\rm R})$ by $P'=VPV^\dagger$. 
Then conjugating the block matrix in \eqref{eq:nearlythere} by $M=\text{diag}(L,L)$ for $L=I\otimes V \rho^{-\nicefrac 12}$ gives
\begin{align}
\begin{pmatrix}
V \rho^{-\nicefrac 12}\Gamma \rho^{-\nicefrac 12}V^\dagger & -I\otimes P' \\ -I\otimes P'  & V \rho^{-\nicefrac 12}\Lambda \rho^{-\nicefrac 12} V^\dagger
\end{pmatrix}
\geq 0\,.
\end{align}
To this inequality we can add 
\begin{align}
\begin{pmatrix}
I \otimes (I-P') & -I \otimes (I-P') \\ -I \otimes (I-P')  & I \otimes (I-P') 
\end{pmatrix}\geq 0\,,
\end{align}
and the result is the block matrix constraint in \eqref{eq:simplifiedform}. 
\end{proof}
Note that a different SDP for $F_{\wc}(\map{A},\map{I})$ (not the square root) was given in \cite[Equation A12]{faist2019practical}. After completion of this work, we discovered that the SDP of Lemma~\ref{lemma:sdp} also appears as Proposition 50 of \cite{katariya_geometric_2020}.

\section{Heisenberg-limited error correction in the weak error model}\label{sec-weak-error}
\subsection{Setting}
In this section, we consider a relatively weak type of errors (compared to the other type we will consider). Specifically, the error is that at most $n_{\rm e}$ qudits among the $n$ qudit systems composing the computational register and the reference frame register, are randomly lost:  
\begin{align}\label{weak-error-model}
\map{C}=p_0\map{I}+\sum_{\settt{s}\subset[n]:|\settt{s}|\le n_{\rm e}} p_{\settt{s}}\left(\map{C}_{\rm e}\right)_{\settt{s}},
\end{align}
where $\left(\map{C}_{\rm e}\right)_{\settt{s}}$ denotes the erasure of qudits whose labels are in the set $\settt{s}$ and $\{p_{\settt{s}}\}$ is a probability distribution.

Any quantum error-correcting code over qudits of distance at least $k+1$ can perfectly correct $k$ erasures~\cite{Gottesman2006}. For instance, the polynomial codes of Aharonov and Ben-Or are $[[2k+1,1,k+1]]_d$ (one logical qudit encoded into $2k+1$ computational qudits; code distance $k+1$) stabilizer codes with this property~\cite{aharonov_fault-tolerant_1997}. 
In these cases, we can employ one of the perfect codes as the non-covariant subroutine $(\map{E},\map{D})$ of our code. Since we only need to encode one logical qudit, the perfect code requires only $n_\Cspace=O(1)$ computational qudits. 

This is to say: the (non-covariant) code $(\map{E},\map{D})$ we use requires only $O(1)$ computational qudits and satisfies
\begin{align}
\epsilon_{\settt{s}_\Cspace,{\rm code}}=0\qquad\forall\,|\settt{s}_\Cspace|\le n_{\rm e},
\end{align}
where $\settt{s}_\Cspace\subset[n_\Cspace]$ is the set of indices for error locations on the computational register.
Now we arrange the reference frame register against this type of noise. To this purpose, we divide the $n_{\rm R}$ qudit there into $n_{\rm e}+1$ groups, each consisting of $2m$ qudits, and we have the relation:
\begin{align}\label{weak-err-nR}
n_{\rm R}=2m(n_{\rm e}+1).
\end{align}
On each group we construct a (highly coupled) reference frame state $\psi$ of the form (\ref{input-state-form}). 
Then, at least one of the reference frame states will survive the erasure, and we can measure it to obtain the embedded rotation. In the following, we evaluate this (worst) case. 
Now we fix the form of the reference frame state by specifying the distribution $\{q_\Vec\lambda\}$ in Eq.\ (\ref{input-state-form}). We first specify $\settt{S}_{\rm via}\subset\settt{Y}_m$ on which $q_\Vec\lambda>0$.
To this purpose, we first define a parameter $M$ that depends on $m$ as
\begin{align}\label{M}
M=\left\lfloor\frac{1}{3}\left(\frac{2m}{d(d-1)}-1\right)\right\rfloor
\end{align}
and $m_0:=m-d(d-1)(3M+1)/2$. 
\begin{widetext}
Define $\tilde{\Vec\mu}\in\settt{Y}_{m_0}$ as the following Young diagram with $m_0$ boxes:
\begin{align}\label{flat-Young}
\tilde{\Vec\mu}:=(\tilde{\mu}_{1},\dots,\tilde{\mu}_{d})\quad{\rm s.t.}\ \sum_i|\tilde{\mu}_{i}|=m_0\quad{\rm and}\quad\tilde{\mu}_{j}+1\ge\mu_{i}\ge\tilde{\mu}_{j}\quad\forall\,j>i.
\end{align}
Now we define the following viable subset of Young diagrams with $d$ rows and $m$ boxes, on which our probe state has support:  
\begin{align}\label{viable-Young}
\settt{S}_{\rm via}:=\left\{\Vec\lambda\in\settt{Y}_m~:~\exists\,\tilde{\Vec\lambda}\in[M]^{\times(d-1)}\,{\rm s.t.}\,\lambda_i=\tilde{\mu}_{i}+(2d-i-2)M+d-i+\tilde{\lambda}_i,\, i=1,\dots,d-1\right\}.
\end{align} 
\end{widetext}
The reference frame state we use in the weak error model is defined by
\begin{align}\label{def-q}
q_{\tilde{\Vec\lambda}}:=\prod_{i=1}^{d-1}g_{\tilde{\lambda}_i},
\end{align}
where $g$ is the following distribution over $[M]$:
\begin{align}\label{g}
g_{\tilde{\lambda}_i}:=\frac{2}{M+1}\sin^2\left(\frac{\pi(2\tilde{\lambda}_i+1)}{2(M+1)}\right).
\end{align}
A similar construction has recently been used to achieve optimal programming of unitary gates \cite{Yuxiang2}.

\subsection{Bounding the error}

In the current error model, the code $(\map{E},\map{D})$ can be made exact (i.e.\null{} error-free).
Therefore, to use Lemma \ref{lemma-cov-error-bound}, we simply need to bound $\sum_{\Vec\lambda\in\settt{S}_{\rm int}}\sqrt{q_\Vec\lambda q_{\Vec\lambda+\Vec\Delta}}$ and appeal to Lemma~\ref{lemma-Fwc}.
In the following, we show that our choice of $\{q_\Vec\lambda\}$ [see Eq.\ (\ref{def-q})] satisfies
\begin{align}\label{epsilon}
\min_{\Vec\Delta\in\settt{S}_{\rm diff}}\sum_{\Vec\lambda\in\settt{S}_{\rm int}}\sqrt{q_\Vec\lambda q_{\Vec\lambda+\Vec\Delta}}\ge 1-\epsilon
\end{align}
for $\epsilon=\frac{d}{2}\left(\frac{\pi n'}{M+1}\right)^2+O(M^{-3})$. In this case, we can define $\settt{S}_{\rm int}$ as given by Eq.\ (\ref{viable-Young}) with the additional constraint that $2n'\le \tilde{\lambda}_i\le M-2n'$ for every $i$.
First, denote by $\epsilon_g$ the quantity 
\begin{align}\label{scaling-epsilon-g}
\epsilon_g(\delta):=1-\sum_{k=2n'}^{M-2n'}\sqrt{g_k g_{k+\delta}}\le\frac12\left(\frac{\pi\delta}{M+1}\right)^2+O(M^{-3}).
\end{align}
The inequality can be shown by straightforward calculation (see Subsection \ref{app-proof-Fwc-weak-error}).
For this distribution, it is straightforward that
\begin{align}
\sum_{\Vec\lambda\in\settt{S}_{\rm int}}\sqrt{q_\Vec\lambda q_{\Vec\lambda+\Vec\Delta}}\ge 1-\frac{d}{2}\left(\frac{\pi n'}{M+1}\right)^2-O(M^{-3}).
\end{align}
for any $\Vec\Delta\in\settt{S}_{\rm diff}$. 
Summarising, we reach the bound 
\begin{align}
&F_{\rm wc}\left(\int\d U~p(U|I)~\map{U}_\Cspace\otimes\map{U}^\ast_{\rm L},\map{I}_\Cspace\otimes\map{I}_{{\rm L}}\right)\nonumber\\
\ge& 1-\frac{d}{2}\left(\frac{\pi n'}{M+1}\right)^2-O(M^{-3}).\label{Fwc-weak-error}
\end{align}
Substituting Eqs.\ (\ref{weak-err-nR}), (\ref{M}), and $n'=n_\Cspace+d-1$ into the above bound, we get:
\begin{align}
F_{\rm wc}\ge 1-\frac{9\pi^2d^3(d-1)^2(n_{\rm e}+1)^2(n_\Cspace+d-1)^2}{2n_{\rm R}^2}-O(n_{\rm R}^{-3}).
\end{align}
Applying Lemma \ref{lemma-cov-error-bound} and recalling that the code error is always zero, we obtain the performance of our protocol:
\begin{theo}[Heisenberg-limited covariant error correction]\label{thm-weak-err}
For the weak error model, defined by Eq.\ (\ref{weak-error-model}), the diamond norm error of Protocol \ref{protocol-cov} is upper bounded as 
\begin{align}
\epsilon_{\cov}\le \frac{81\pi^2d^4(d-1)^2(n_{\rm e}+1)^2(n_\Cspace+d-1)^2}{2n_{\rm R}^2}+O(n_{\rm R}^{-3}).
\end{align}
The reference frame register in Protocol \ref{protocol-cov} should be initiated in the state $\psi^{\otimes (n_{\rm e}+1)}$, where the state $\psi$ is prepared in the form (\ref{input-state-form}) with coefficients given by Eq.\ (\ref{def-q}).
\end{theo}
Since $n_{\rm R}=n-n_\Cspace$ and $n_\Cspace$ can be chosen to be $O(1)$, our protocol achieves the Heisenberg limit $1/n^2$ with respect to the total number of qudit systems.

\subsection{Optimality of Protocol \ref{protocol-cov} under the weak error model}

Here we prove the optimality of our protocol under  the weak error model (\ref{weak-error-model}). In particular, we consider any code  constructed on $n$ qudit systems, denoted as $(\map{E}_\cov,\map{D}_\cov)$, that is covariant under the $\grp{SU}(d)$ action.
For the weak error model, we show that the Heisenberg limit $1/n^2$ is the ultimate limit for any covariant code, when each qudit has an equal probability $1/n$ of being erased. 
We stress the full generality of this result, in the sense that it does not assume any specific structure of the code.

The optimality can be shown by applying the lower bound on $\epsilon_\cov$, as given in Lemma \ref{lemma-converse0}, to the weak error model (\ref{weak-error-model}). 
We focus on the case when either exactly $n_{\rm e}$ qudits are erased or no qudit is erased at all, i.e., $p_{\settt{s}}=0$ for $0<|\settt{s}|<n_{\rm e}$.
Notice that Ref.\ \cite{kubica2020using}, where the original lower bound was derived, considered only independent local errors, which is not the case here.
However, as long as we can show $I_{\rm Fisher}^\uparrow<\infty$, the bound will work for our model. 
In the following we identify a Kraus form as in Eq.\ (\ref{I-Fisher}).
Consider the following Kraus form of $\left(\map{C}_{\rm e}\right)_{\settt{s}}$, the erasure of qudits with labels in $\settt{s}\subset[n]$, that depends on $\theta$:
\begin{align*}
\left(\map{C}_{\rm e}\right)_{\settt{s}}(\cdot)&=\sum_{\vec{n}}C_{\settt{s},\vec{n};\theta}(\cdot)C_{\settt{s},\vec{n};\theta}^\dag\\
C_{\vec{n},\settt{s};\theta}&:=\prod_{j=1}^{n_{\rm e}}\left(\exp\left\{\frac{i\theta h_{n_j} }{{n-1\choose n_{\rm e}-1}p_{\settt{s}}}\right\}|d+1\>\<j|_{\settt{s}_j}\right).
\end{align*}
Here, for convenience, we add a state $|d+1\>$ as the state after erasure, $\vec{n}=(n_1,\dots,n_{n_{\rm e}})\in(\Z_{d+1})^{\otimes n_{\rm e}}$, $H=\sum_{j=1}^{d+1}h_j|j\>\<j|$ with $h_{d+1}:=0$, $A_l:= I_{1}\otimes\cdots\otimes I_{l-1}\otimes A\otimes I_{l+1}\otimes\cdots\otimes I_n$ for any operator $A$, and $\settt{s}_j$ refers to the $j$-th largest element of $\settt{s}$.

The entire channel is $\map{C}_\theta:=\map{C}\circ\map{U}_\theta^{\otimes n}$ with $\map{C}$ defined by Eq.\ (\ref{weak-error-model}), which has Kraus operators
\begin{align*}
K_{0;\theta}&=\sqrt{p_0}\bigotimes_{l=1}^n U_{l;\theta}\\
K_{\vec{n},\settt{s};\theta}&=\sqrt{p_{\settt{s}}}\prod_{j=1}^{n_{\rm e}}\left(\exp\left\{\frac{i\theta h_{n_j} }{{n-1 \choose n_{\rm e}-1}p_{\settt{s}}}\right\}|d+1\>\<n_j|_{\settt{s}_j}\right)\left(\bigotimes_{l=1}^n U_{l;\theta}\right)\\
&\settt{s}\subset[n],\ |\settt{s}|=n_{\rm e}\quad \vec{n}\in(\Z_{d+1})^{\otimes n_{\rm e}}.
\end{align*}
\begin{widetext}
 Their derivatives are
\begin{align*}
\dot{K}_{0;\theta}&=i\sqrt{p_0}\left(\bigotimes_{l} U_{l;\theta}\right)\left(-\sum_{l}H_{l}\right)\\
\dot{K}_{\vec{n},\settt{s};\theta}&=i\sqrt{p_{\settt{s}}}\prod_{j=1}^{n_{\rm e}}\left(\exp\left\{\frac{i\theta h_{n_j} }{{n-1\choose n_{\rm e}-1}p_{\settt{s}}}\right\}|d+1\>\<n_j|_{\settt{s}_j}\right)\left(\bigotimes_{l}U_{l;\theta}\right)\left(\sum_{j=1}^{n_{\rm e}} \frac{h_{n_j} }{{n-1\choose n_{\rm e}-1}p_{\settt{s}}}|n_j\>\<n_j|_{\settt{s}_j}-\sum_{l}H_{l}\right).
\end{align*}
where $l=1,\dots,n$.
One can verify that
\begin{align}
&\dot{K}_{0;\theta}^\dag K_{0;\theta}+\sum_{\settt{s}}\sum_{\vec{n}}\dot{K}_{\vec{n},\settt{s};\theta}^\dag K_{\vec{n},\settt{s};\theta}\nonumber\\
=&ip_0\left(\sum_{l}H_{l}\right)-i\sum_{\settt{s}} p_{\settt{s}}\left(\sum_{\vec{n}}\left(\sum_j\frac{ h_{n_j} }{{n-1\choose n_{\rm e}-1}p_{\settt{s}}}|n_j\>\<n_j|_{\settt{s}_j}-\sum_{l}H_{l}\right)\left(\prod_{j}|n_j\>\<n_j|_{\settt{s}_j}\right)\right)
\nonumber\\
=&ip_0\left(\sum_{l}H_{l}\right)-i\left(\sum_{\settt{s}} p_{\settt{s}}\sum_j \frac{ H_{\settt{s}_j} }{{n-1\choose n_{\rm e}-1}p_{\settt{s}}}-(1-p_0)\sum_{l}H_{l}\right)\nonumber\\
=&0.
\end{align}
In the meantime
\begin{align}
\dot{K}_{0;\theta}^\dag \dot{K}_{0;\theta}+\sum_{\settt{s}}\sum_{\vec{n}}\dot{K}_{\vec{n},\settt{s};\theta}^\dag \dot{K}_{\vec{n},\settt{s};\theta}
&=\sum_{\settt{s}}\sum_{j=1}^{n_{\rm e}}\frac{\left(H^2\right)_{\settt{s}_j}}{{n-1\choose n_{\rm e}-1}^2p_{\settt{s}}}-\left(\sum_{l}H_{l}\right)^2.
\end{align}
\end{widetext}
Therefore, the Fisher information bound satisfies
\begin{align}
I_{\rm Fisher}^\uparrow&\le4\left\|\dot{K}_{0;\theta}^\dag \dot{K}_{0;\theta}+\sum_{\settt{s}}\sum_{\vec{n}}\dot{K}_{\vec{n},\settt{s};\theta}^\dag \dot{K}_{\vec{n},\settt{s};\theta}\right\|_{\infty}\nonumber\\
&\le 4\sum_{\settt{s},j}\frac{1}{{n-1\choose n_{\rm e}-1}^2p_{\settt{s}}}\|H^2\|_{\infty}+4n^2\|H\|_{\infty}^2\nonumber\\
&= 4\sum_{\settt{s}}\frac{n_{\rm e}}{{n-1\choose n_{\rm e}-1}^2p_{\settt{s}}}\|H^2\|_{\infty}+4n^2\|H\|_{\infty}^2.
\end{align}

Substituting into Eq.\ (\ref{diamond-converse}), we get
\begin{align}
\epsilon_{\wc}\ge \frac{(\Delta H)^2}{64(\sum_{\settt{s}}n_{\rm e}\|H^2\|_{\infty}/(p_{\settt{s}}{n-1\choose n_{\rm e}-1}^2)+n^2\|H\|_{\infty}^2)}.
\end{align}

Now, we assume that each qudit has an equal probability $p_{\settt{s}}=1/{n\choose n_{\rm e}}$ of being erased and $p_0=0$. Since $\|H^2\|_{\infty}=\|H\|_{\infty}^2$, the above bound implies 
\begin{align}
\epsilon_{\cov}\ge \frac{(\Delta H)^2}{64n^2(1+1/n_{\rm e})\|H\|_{\infty}^2}.
\end{align}
We can choose the minimum eigenvalue and the maximum eigenvalue of $H$ to sum up to zero. Then, $\Delta H=2\|H\|_{\infty}$, and we reach the following proposition:
\begin{prop}\label{prop-converse}
For the weak erasure error model (\ref{weak-error-model}) with $p_{\settt{s}}=1/{n\choose n_{\rm e}}$ for any $\settt{s}:|\settt{s}|=n_{\rm e}$,  
the error of any covariant code is lower bounded as
\begin{align}
\epsilon_{\cov}\ge \frac{1}{16n^2(1+1/n_{\rm e})}.
\end{align} 
\end{prop}

Since the above bound matches the performance of our protocol in scaling, we conclude that the optimal error scaling of covariant codes is identified as $1/n^2$.
 
\subsection{Proof of Eq.\ (\ref{scaling-epsilon-g})}\label{app-proof-Fwc-weak-error}
Invoking the definition of $\{g_k\}$ from Eq.\ (\ref{def-q}), we have the following chain of (in)equalities:	
\begin{widetext}
\begin{align*}
\sum_{k=n}^{M-n}\sqrt{g_kg_{k+\delta}}&=\frac{2}{M+1}\sum_{k=n}^{M-n}\sin\left(\frac{\pi(2k+1)}{2(M+1)}\right)\sin\left(\frac{\pi(2k+2\delta+1)}{2(M+1)}\right)\\
&=\frac{1}{M+1}\sum_{k=n}^{M-n}\left(\cos\left(\frac{\pi\delta}{M+1}\right)-\cos\left(\frac{\pi(2k+\delta+1)}{M+1}\right)\right)\\
&=\frac{1}{M+1}\left((M-2n+1)\cos\left(\frac{\pi\delta}{M+1}\right)-\sum_{k=n}^{M-n}\cos\left(\frac{\pi(2k+\delta+1)}{M+1}\right)\right)\\
&=\frac{1}{M+1}\left(\left(M-2n+1+\frac{\sin\left(\frac{2\pi n}{M+1}\right)}{\sin\left(\frac{\pi}{M+1}\right)}\right)\cos\left(\frac{\pi\delta}{M+1}\right)\right)\\
&=\frac{\cos\left(\frac{\pi\delta}{M+1}\right)}{M+1}\left(M-2n+1+\cos\left(\frac{\pi}{M+1}\right)+\sum_{k=1}^{2n-1}\cos\left(\frac{\pi\,k}{M+1}\right)\right)\\
&\ge \frac{1}{M+1}\left(M+1-\frac12\left(\frac{\pi}{M+1}\right)^2-\sum_{k=1}^{2n-1}\frac12\left(\frac{\pi\,k}{M+1}\right)^2\right)\left(1-\frac12\left(\frac{\pi\delta}{M+1}\right)^2\right)\\
&=1-\frac12\left(\frac{\pi\delta}{M+1}\right)^2-O\left(M^{-3}\right).
\end{align*}
\end{widetext}

\section{The i.i.d.\null{} error model}\label{sec-strong-error}

\subsection{Setting}
In this section, we deal with another type of erasure errors which are stronger than the one considered in the previous section. 
The error affects each qudit independently, erasing it with a probability:
\begin{align}\label{strong-error-model 4}
\map{C}=\bigotimes_{j=1}^{n}\left((1-p_{\rm e})\map{I}+p_{\rm e}\map{C}_{\rm e}\right)\qquad    p_{\rm e}\in\left(0,\frac12\right),
\end{align}
where $\map{C}_{\rm e}$ denotes the single-qudit erasure channel. Since the erasure channel is degradable, if $p_{\rm e}\ge1/2$ the information leaked to the environment would not be retrievable. Note that, in general, the error model on each qudit does not have to be identical, and each qudit $j$ can have distinct probability $p_{{\rm e},j}$ of being erased. In that case, however, we can simply set $p_{\rm e}$ to be the worst case over $\{p_{{\rm e},j}\}$ and consider this more stringent model instead. We can cast (\ref{strong-error-model 4}) in the form of Eq.\ (\ref{general-form-error}) as
\begin{align}
\map{C}=\sum_{\settt{s}\subset[n]}p_{{\rm e},\settt{s}}\left(\bigotimes_{k\in\settt{s}}\map{C}_{\rm e}\right)\otimes\left(\bigotimes_{k'\in\settt{s}^c}\map{I}\right)
\end{align}
where $p_{{\rm e},\settt{s}}:=p_{\rm e}^{|\settt{s}|}(1-p_{\rm e})^{n-|\settt{s}|}$.

In contrast to the previous model, there does not exist any code that perfectly corrects independent erasure errors. Instead, there exist pretty good codes that correct the error unless too many qudits are erased.
The quantum capacity of the erasure channel, with erasure probability $p_{\rm e}$, has been determined to be $1-2p_{\rm e}$ \cite{bennett1997capacities}. We can choose $n_\Cspace$, the number of computational qudits, to grow with $n$.
When $n$ is large, since the number of qudits we want to encode is only one qudit and is much smaller than that allowed by the capacity (which is $(1-2p_{\rm e})n_\Cspace$), the error probability would vanish exponentially in $n_\Cspace$.  

We choose $n_\Cspace= n^\gamma$, where $\gamma\in(0,1)$ does not depend on $n$ and can be chosen to be very small. We use any code $(\map{E},\map{D})$ in Protocol \ref{protocol-cov} that encodes one qudit into $n^\gamma$ computational qudits, with the property that it has a decoding error 
$O\left(e^{-x_{d}\cdot n^\gamma}\right)$
for some $x_{d}>0$ that may depend on $d$. By a random coding argument, one can show that there exists a stabilizer code satisfying our requirement (see, e.g., \cite{gottesman1997stabilizer}), although its explicit form is not given. Recently progress in error correcting codes also showed that quantum polar codes~\cite{renes_efficient_2012,renes_polar_2014} and Reed-Muller codes~\cite{kumar_reed-muller_2016} have the desired property. Notice that the requirement of the $O\left(e^{-x_{d}\cdot n^\gamma}\right)$ scaling is chosen for convenience of analysing the error, and it can be further relaxed in practice.

Meanwhile, the model is now too noisy for the highly coupled reference frame state used in the weak error model to be effective. Instead, we prepare on the reference frame register the following $s_{\rm R}$-copy state:
\begin{align}
\Psi=(\Phi^+)^{\otimes s_{\rm R}}
\end{align}
with $s_{\rm R}=n_{\rm R}/2=(1-n^{-1+\gamma})n/2$ and $|\Phi^+\>$ being the maximally entangled state on $\spc{H}\otimes\spc{H}$. Intuitively, this choice is to distribute the eggs in different baskets. One erasure error destroys at most one of the $s_{\rm R}$ reference frame states. As long as there are still $\Theta(n)$ copies left we can achieve high performance, which happens with very high probability since $p_{{\rm e},\settt{s}}$ is a binomial distribution.

\subsection{Bounding the error}

When $s_{\rm e}$ reference frames are erased, the number of remaining reference frames is
\begin{align}
s:=s_{\rm R}-s_{\rm e}.
\end{align}
The remaining reference frames can be decomposed in the form (\ref{input-state-form}) as
\begin{align}
(\Phi^+)^{\otimes s}=\bigoplus_{\Vec\lambda\in\settt{Y}_{s}}\sqrt{p_{\Vec\lambda,s}}|\Phi_{\Vec\lambda}^+\>\otimes|\Phi^+_{m_\Vec\lambda}\>
\end{align}
where $p_{\Vec\lambda,s}$ is known as the Schur-Weyl distribution. It has the following precise form \cite[Eq.\ (3.28)]{alicki1988symmetry}:
\begin{align}\label{schur-weyl-dist}
p_{\Vec\lambda,s}=\frac{s!\prod_{i<j}(\tilde{\lambda}_i-\tilde{\lambda}_j)^2}{d^s(\tilde{\lambda}_d)!\prod_{k=1}^{d-1}k!(\tilde{\lambda}_k)!}\qquad\tilde{\lambda}_j:=\lambda_j+d-j.
\end{align}

We now combine it with Eq.\ (\ref{est-prob-dist-covariant}) and Lemma~\ref{lemma-wcinput} to express the fidelity (\ref{weak-err-Fwc}) as
\begin{align}\label{F-m-inter1}
F_{s}=\sum_{\Vec\lambda,\Vec\lambda'\in\settt{Y}_{s}}\sqrt{p_{\Vec\lambda,s} p_{\Vec\lambda',s}}S_{\Vec\lambda,\Vec\lambda'},
\end{align}
where $S_{\Vec\lambda,\Vec\lambda'}$ is the correlation function defined by Eq.\ (\ref{S-correlation}).  For any $\alpha>0$, if $s=\beta\cdot n$ for some fixed $\beta>0$, then there exists $n_{0,\alpha}$ so that 
\begin{align}\label{F_m}
1-F_{s}\le \left(\frac{(d^2-d+32)\Gamma}{\beta^2}\right)\cdot\left(\frac{1}{2s}\right)^{1-\alpha-2\gamma}
\end{align}
holds for arbitrary $n\ge n_{0,\alpha}$. Here $\Gamma:=d^{\frac{d(d-1)}{2}}/(\prod_{k=1}^{d-1}k!)$.

Eq.~(\ref{F_m}) implies that $1-F_{s}$ scales almost as $1/s$ for large $s$.
Its  proof is delayed to Subsection \ref{app-F_m}.
Roughly speaking, in the large $s$ limit, $p_{\Vec\lambda,s}$ converges to a tilted multi-variate Gaussian. We can focus on a region near the peak of $p_{\Vec\lambda,s}$ where the distribution is concentrated and where the correlation function $S_{\Vec\lambda,\Vec\lambda'}$ is maximised as well. This would introduce at most an error vanishing as $1/s$. In addition, since a Gaussian is relatively flat around its peak, we can show that  $\sqrt{p_{\Vec\lambda+\Vec\Delta,s} p_{\Vec\lambda,s}}\approx p_{\Vec\lambda+\Vec\Delta/2,s}$ up to an error that scales as $1/s$.  This would be the main reason of getting this error scaling.

With Eq.\ (\ref{F_m}) it is rather straightforward to bound the error of our protocol.  First we define a threshold value
\begin{align}\label{m-ast}
s^\ast:=(1-2p_{\rm e})\cdot s_{\rm R}=(1/2-p_{\rm e})(n-n^\gamma).
\end{align}
This definition ensures that the probability that $s<s^\ast$ goes to zero exponentially fast.
Next, notice that the error probabilities on the computational register and the reference frame register are independent:
\begin{align}\label{indep-prob}
p_{\rm e}(\settt{s})=p_{{\rm e, P}}(\settt{s}_\Cspace)p_{{\rm e, R}}(\settt{s}_{\rm R})\qquad \settt{s}=\settt{s}_\Cspace\cup\settt{s}_{\rm R},
\end{align}
where $p_{{\rm e, P}}$ and $p_{{\rm e, R}}$ are the probability distributions of erasure errors in the computational register and in the reference frame register, respectively.
\begin{widetext}
Applying Lemma \ref{lemma-cov-error-bound} and Eq.\ (\ref{indep-prob}), we can split the error of our protocol as
\begin{align}
\epsilon_\cov&\le9d\left(\sum_{\settt{s}\subset[n],F_{s}\ge\frac34}p_{\rm e}(\settt{s})\max\left\{\epsilon_{\settt{s}_\Cspace,{\rm code}},1-F_{s}\right\}+\sum_{\settt{s}\subset[n],F_{s}<\frac34}p_{\rm e}(\settt{s})\right).
\end{align}
From Eqs.\ (\ref{F_m}) and (\ref{m-ast}), we know that, for large enough $n$, $F_{s}>\frac34$ for $s\ge s^\ast$. That is, when no more than $2p_{\rm e}s_{\rm R}$ erasure errors occur on the reference frame register. We can then express the bound as
\begin{align}
\epsilon_\cov&\le 9d\left(\sum_{\settt{s}_\Cspace}p_{{\rm e, P}}(\settt{s}_\Cspace)\epsilon_{\settt{s}_\Cspace,{\rm code}}+\sum_{\settt{s}_{\rm R}}p_{{\rm e, R}}(\settt{s}_{\rm R})\left(1-F_{s}\right)+\sum_{\settt{s}_{\rm R}:|\settt{s}_{\rm R}|>2p_{\rm e}s_{\rm R}}p_{{\rm e, R}}(\settt{s}_{\rm R})\right).
\end{align}
Notice that, by our assumption on $(\map{E},\map{D})$, we should have 
\begin{align}\label{err-code-assump}
\sum_{\settt{s}_\Cspace}p_{{\rm e, P}}(\settt{s}_\Cspace)\epsilon_{\settt{s}_\Cspace,{\rm code}}=O\left(e^{- x_{d}\cdot n^\gamma}\right). 
\end{align}
Therefore, the reference frame error constitutes the main contribution to the overall error:
\begin{align}
\epsilon_\cov&\le 9d\left(\sum_{\settt{s}_{\rm R}}p_{{\rm e, R}}(\settt{s}_{\rm R})\left(1-F_{s}\right)+O\left(e^{-x_{d}\cdot n^\gamma}\right)+\sum_{\settt{s}_{\rm R}:|\settt{s}_{\rm R}|>2p_{\rm e}s_{\rm R}}p_{{\rm e, R}}(\settt{s}_{\rm R})\right)\qquad s=\frac{n_{\rm R}}{2}-|\settt{s}_{\rm R}|\nonumber\\
&\le 9d\left(1-F_{s^\ast}+2\sum_{\settt{s}_{\rm R}:|\settt{s}_{\rm R}|>2p_{\rm e}s_{\rm R}}p_{{\rm e, R}}(\settt{s}_{\rm R})\right)+O\left(e^{-x_{d}\cdot n^\gamma}\right)\nonumber\\
&\le  \left(\frac{36d(d^2-d+32)\Gamma}{(1-2p_{\rm e})^2}\right)\cdot\left(\frac{1}{(1-2p_{\rm e})n}\right)^{1-\alpha-2\gamma}+O\left(e^{-p^2_{\rm e}(1-\gamma)n}\right)+O\left(e^{-x_{d}\cdot n^\gamma}\right).
\end{align}
\end{widetext}
The second inequality comes from dividing the summation into the term for $s\ge s^\ast$ and the term for $s<s^\ast$. The third inequality comes from Eq.\ (\ref{F_m}) (with $\beta=1/2-p_{\rm e}$) and the Hoeffding bound. From the above bound, we can see that $\gamma$ can be chosen to be arbitrarily close to zero.
Absorbing the other error terms into the major term, we have the following bound on the error:
\begin{theo}\label{theo-strong-error}
Consider the i.i.d.\null{} error model defined by Eq.\ (\ref{strong-error-model 4}). For any $\alpha>0$, there exists $n_{\alpha}>0$ so that the diamond norm error of Protocol \ref{protocol-cov} is upper bounded by
\begin{align}
\epsilon_{\cov}\le \left(\frac{36(d^2-d+32)d^{\frac{d^2-d+2}{2}}}{(1-2p_{\rm e})^2\prod_{j=1}^{d}(j-1)!}\right)\left(\frac{1}{(1-2p_{\rm e})n}\right)^{1-\alpha}
\end{align}
for any $n\ge n_\alpha$. The reference frame register in Protocol \ref{protocol-cov} should be initiated in multiple copies of the maximally entangled qudit state.
\end{theo}

The error of our protocol scales  almost as $1/n$, instead of $1/n^2$ in the previous case. This is a result of the (stronger) i.i.d.\null{} noise. In fact, as we show in the next subsection, the error scaling of our protocol is still optimal for this error model.

\subsection{Optimality of Protocol \ref{protocol-cov} under the i.i.d.\null{} error model}

Here we prove the optimality of our protocol under the i.i.d.\null{} error model (\ref{strong-error-model 4}). In particular, we consider any code  constructed on $n$ qudit systems, denoted as $(\map{E}_\cov,\map{D}_\cov)$, that is covariant under the $\grp{SU}(d)$ action.
For the i.i.d.\null{} error model, we show that the $1/n$ scaling is optimal.

Under the i.i.d.\null{} error model (\ref{strong-error-model 4}), the optimality of our protocol can, again, be shown by applying Lemma \ref{lemma-converse0}. Here the Fisher information upper bound $I_{\rm Fisher}^\uparrow$ should be replaced by the one for the local, independent erasure model. For the i.i.d.\null{} (erasure) error model, the Fisher information upper bound has already been given by \cite[Eq.\ (B19)]{kubica2020using} as
\begin{align}
I_{\rm Fisher}^\uparrow=4n(\Delta H)^2\left(\frac{1-p_{\rm e}}{p_{\rm e}}\right).
\end{align}
Substituting into Lemma \ref{lemma-converse0}, we obtain:
\begin{prop}\label{prop-converse-strong-error}
For the i.i.d.\null{} (erasure) error model, defined by Eq.\ (\ref{strong-error-model 4}),
the error of any covariant code is lower bounded by  
\begin{align}
\epsilon_{\cov}\ge \frac{p_{\rm e}}{64n(1-p_{\rm e})}.
\end{align} 
\end{prop}
Therefore, since Protocol \ref{protocol-cov} achieves the $(1/n)$-scaling, it is optimal in the asymptotic limit of large $n$.

\subsection{Proof of Eq.\ (\ref{F_m})}\label{app-F_m}

Here we prove Eq.~(\ref{F_m}) under the assumption that $s=\beta\cdot n$ for some fixed $\beta>0$.
Since $S_{\Vec\lambda,\Vec\lambda'}\ge0$ we can lower bound the fidelity (\ref{F-m-inter1}) as
\begin{align}
F_{s}\ge\sum_{\Vec\lambda,\Vec\lambda'\in\settt{S}_{{\rm cent}}}\sqrt{p_{\Vec\lambda,s} p_{\Vec\lambda',s}}S_{\Vec\lambda,\Vec\lambda'},
\end{align}
where $\settt{S}_{{\rm cent}}\subset\settt{Y}_{s}$ is defined as
\begin{align}\label{app-S-cent}
\settt{S}_{{\rm cent}}&:=\left\{\Vec\lambda\in\settt{Y}_{s}~:~\left|\lambda_i-\frac{s}{d}\right|\le \frac{s^{\frac{1+x}{2}}}{2}, |\lambda_i-\lambda_j|>3n',\forall\,i,j\right\}\\
&n'=n^\gamma+d-1.
\end{align}
Here $x>0$ is a parameter to be specified at the end of the proof.
We can express the fidelity as
\begin{align}
F_{s}\ge\sum_{\Vec\lambda\in\settt{S}_{\rm cent}}\sum_{\Vec\Delta\in\settt{S}_{\rm diff}}\sqrt{p_{\Vec\lambda,s} p_{\Vec\lambda+\Vec\Delta,s}}S_{\Vec\lambda,\Vec\lambda+\Vec\Delta}.
\end{align}
where 
\begin{align}\label{app-S-Delta}
\settt{S}_{\rm diff}=\{\Vec\Delta\in\Z^{\times d}~:~|\Delta_i|\le n', i=1,\dots,d; \sum_{j=1}^d\Delta_j=0\}.
\end{align}
Similar as in the proof of Lemma \ref{lemma-Fwc}, we first show that:
For $\Vec\lambda\in\settt{S}_{{\rm cent}}$ and $\Vec\Delta\in\settt{S}_{\rm diff}$, the correlation function depends only on their relative distance $\Vec\Delta$, but not explicitly on $\Vec\lambda$, i.e., $S_{\Vec\lambda,\Vec\lambda+\Vec\Delta}=\tilde{S}_{\Vec\Delta}$.
Indeed, when $\Vec\lambda$ is in the viable set $S_{{\rm cent}}$, the lengths of different rows of both $\Vec\lambda$ and $\Vec\lambda+\Vec\Delta$  have big enough gaps  so that adding $n'$ boxes to one row would not make its box number greater than its preceding rows. Therefore, adding $\Vec\mu'$ and $\Vec\mu$ in $\settt{S}_{\rm cost}$ according to the Littlewood-Richardson rule is not constraint by the shape of $\Vec\lambda$.
Now, the fidelity bound becomes
\begin{align}
F_{s}&\ge\sum_{\Vec\lambda\in\settt{S}_{\rm cent}}\sum_{\Vec\Delta\in\settt{S}_{\rm diff}}\sqrt{p_{\Vec\lambda,s} p_{\Vec\lambda+\Vec\Delta,s}}\tilde{S}_{\Vec\Delta}\\
&\ge \left(\min_{\Vec\Delta\in\settt{S}_{\rm diff}}\sum_{\Vec\lambda\in\settt{S}_{\rm cent}}\sqrt{p_{\Vec\lambda,s} p_{\Vec\lambda+\Vec\Delta,s}}\right)\left(\sum_{\Vec\Delta'\in\settt{S}_{\rm diff}}\tilde{S}_{\Vec\Delta'}\right)\\
&\ge \min_{\Vec\Delta\in\settt{S}_{\rm diff}}\sum_{\Vec\lambda\in\settt{S}_{\rm cent}}\sqrt{p_{\Vec\lambda,s}p_{\Vec\lambda+\Vec\Delta,s}},
\end{align}
having used Eq.\ (\ref{sum-SDelta}) in the last step.

What remains is to bound $\sum_{\Vec\lambda\in\settt{S}_{\rm cent}}\sqrt{p_{\Vec\lambda,s} p_{\Vec\lambda+\Vec\Delta,s}}$ for every $\Vec\Delta\in\settt{S}_{\rm diff}$, where $p_{\Vec\lambda,s}$ is the Schur-Weyl distribution:
\begin{align}\label{Schur-weyl}
p_{\Vec\lambda,s}=\frac{s!\prod_{i<j}(\tilde{\lambda}_i-\tilde{\lambda}_j)^2}{d^s(\tilde{\lambda}_d)!\prod_{k=1}^{d-1}k!(\tilde{\lambda}_k)!}\qquad\tilde{\lambda}_j:=\lambda_j+d-j.
\end{align}
First, we show that for the Schur-Weyl distribution $p_{\Vec\lambda,s}$, $\sqrt{p_{\Vec\lambda+\Vec\Delta,s}p_{\Vec\lambda,s}}\approx p_{\Vec\lambda+\Vec\Delta/2,s}$ for $\Vec\lambda\in\settt{S}_{\rm cent}$.
Define the following multinomial distribution of $\tilde{\Vec\lambda}$:
\begin{align}\label{multinomial}
b_{\tilde{\Vec\lambda}}:=\frac{(s+d(d-1)/2)!}{(\tilde{\lambda}_1)!\cdots(\tilde{\lambda}_d)!}d^{-(s+d(d-1)/2)},
\end{align} 
and we can bound the Schur-Weyl distribution as
\begin{align}
p_{\Vec\lambda,s}&=b_{\tilde{\Vec\lambda}}\,\frac{\Gamma}{[s+d(d-1)/2]_{d(d-1)/2}}\prod_{i<j}\left(\tilde{\lambda}_i-\tilde{\lambda}_j\right)^2\\
&\le b_{\tilde{\Vec\lambda}}\,\Gamma\prod_{i<j}\left(\frac{\tilde{\lambda}_i-\tilde{\lambda}_j}{\sqrt{s}}\right)^2
\label{Schur-weyl-alt},
\end{align}
where $[l]_k:=l!/(l-k)!$ and $\Gamma:=d^{\frac{d(d-1)}{2}}/(\prod_{k=1}^{d-1}k!)$.

\begin{widetext}
We have
\begin{align}\label{Schur-weyl1}
\frac{\sqrt{p_{\Vec\lambda+\Vec\Delta,s}p_{\Vec\lambda,s}}}{p_{\Vec\lambda+\Vec\Delta/2,s}}=\frac{\sqrt{b_{\tilde{\Vec\lambda}+\Vec\Delta}b_{\tilde{\Vec\lambda}}}}{b_{\tilde{\Vec\lambda}+\Vec\Delta/2}}\prod_{i<j}\left(1-\left(\frac{\Delta_i-\Delta_j}{2\tilde{\lambda}_i-2\tilde{\lambda}_j+\Delta_i-\Delta_j}\right)^2\right).
\end{align}

Using Stirling's approximation $n!=\sqrt{2\pi n}(n/e)^n(1+1/(12n)+O(n^2))$ and $\sum_{i=1}^{d}\Delta_i=0$, we have
\begin{align}
\frac{\sqrt{b_{\tilde{\Vec\lambda}+\Vec\Delta}b_{\tilde{\Vec\lambda}}}}{b_{\tilde{\Vec\lambda}+\Vec\Delta/2}}\le\exp\left\{-\frac{f_{\tilde{\Vec\lambda}+\Vec\Delta}+f_{\tilde{\Vec\lambda}}}{2}+f_{\tilde{\Vec\lambda}+\Vec\Delta/2}\right\}\left(1+\frac{d^2}{6 s}+O\left(s^{-2}\right)\right)
\end{align}
for $\Vec\lambda\in\settt{S}_{\rm cent}$, where $f_{\tilde{\Vec\lambda}}:=\sum_i (\tilde{\lambda}_i+1/2)\ln  \tilde{\lambda}_i$.
\end{widetext}
Straightforward calculation shows that
\begin{align}
\frac{f_{\tilde{\Vec\lambda}}+f_{\tilde{\Vec\lambda}+\Vec\Delta}}{2}-f_{\tilde{\Vec\lambda}+\Vec\Delta/2}&=\sum_i \frac{(\Delta_i)^2}{8\tilde{\lambda}_i}+O\left(\frac{(n')^3}{s^2}\right).
\end{align}
Since $n'=O(n^\gamma)$ and $s=\Theta(n)$, we have
\begin{align}
\frac{f_{\tilde{\Vec\lambda}}+f_{\tilde{\Vec\lambda}+\Vec\Delta}}{2}-f_{\tilde{\Vec\lambda}+\Vec\Delta/2}&=\sum_i \frac{(\Delta_i)^2}{8\tilde{\lambda}_i}+O\left(n^{-2+3\gamma}\right),
\end{align}
which, plus $\sum_i(\Delta_i)^2\le (n')^2$ and $|\lambda_i- s/d|=O\left(s^{\frac{1+x}{2}}\right)$, implies that
\begin{align}
\sqrt{b_{\tilde{\Vec\lambda}+\Vec\Delta}b_{\tilde{\Vec\lambda}}}&\ge b_{\tilde{\Vec\lambda}+\Vec\Delta/2}\left(1-\frac{d(n')^2}{8s}-\frac{d^2}{6 s}-O\left(n^{-\frac{3-x-4\gamma}{2}}\right)\right)\\
&= b_{\tilde{\Vec\lambda}+\Vec\Delta/2}\left(1-O(n^{-1+2\gamma})-O\left(n^{-\frac{3-x-4\gamma}{2}}\right)\right).
\end{align}
Combining the above bound with Eqs.\ (\ref{app-S-Delta}) and (\ref{Schur-weyl1}), we get
\begin{widetext}
\begin{align}
\frac{\sqrt{p_{\Vec\lambda+\Vec\Delta,s}p_{\Vec\lambda,s}}}{p_{\Vec\lambda+\Vec\Delta/2,s}}&\ge \left(1-O(n^{-1+2\gamma})-O\left(n^{-\frac{3-x-4\gamma}{2}}\right)\right)\cdot\prod_{i<j}\left(1-\frac{(n')^2}{(\tilde{\lambda}_i-\tilde{\lambda}_j+\Delta_i/2-\Delta_j/2)^2}\right)\nonumber\\
&\ge 1-\sum_{i<j}\frac{(n')^2}{(\tilde{\lambda}_i-\tilde{\lambda}_j+\Delta_i/2-\Delta_j/2)^2}-O(n^{-1+2\gamma})-O\left(n^{-\frac{3-x-4\gamma}{2}}\right)
\label{strong-error-inter1}
\end{align}
and we have $\sqrt{p_{\Vec\lambda+\Vec\Delta,s}p_{\Vec\lambda,s}}\approx p_{\Vec\lambda+\Vec\Delta/2,s}$ with the approximation error given by Eq.~(\ref{strong-error-inter1}).

What remains is to evaluate the summation $\sum_{\Vec\lambda\in\settt{S}_{\rm cent}}p_{\Vec\lambda+\Vec\Delta/2,s}$.
Using the bounds (\ref{app-S-cent}) and (\ref{Schur-weyl-alt}), the error terms can be bounded as follows:
\begin{align}
\frac{\sum_{\Vec\lambda\in\settt{S}_{\rm cent}}p_{\Vec\lambda+\Vec\Delta/2,s}}{(\tilde{\lambda}_i-\tilde{\lambda}_j+\Delta_i/2-\Delta_j/2)^2}&\le\sum_{\Vec\lambda\in\settt{S}_{\rm cent}}\frac{\Gamma\,b_{\tilde{\Vec\lambda}+\Vec\Delta/2}}{s}\cdot\prod_{k<l,k\not=i,l\not=j}\left(\frac{\tilde{\lambda}_k-\tilde{\lambda}_l+\Delta_k/2-\Delta_l/2}{\sqrt{s}}\right)^2\nonumber\\
&\le \frac{\Gamma}{s}\cdot\prod_{k<l,k\not=i,l\not=j}\left(\frac{s^{\frac{1+x}{2}}+l-k+n'}{\sqrt{s}}\right)^2\sum_{\Vec\lambda\in\settt{S}_{\rm cent}}b_{\tilde{\Vec\lambda}+\Vec\Delta/2}\nonumber\\
&\le \frac{\Gamma}{s}\cdot\prod_{k<l,k\not=i,l\not=j}\left(\frac{s^{\frac{1+x}{2}}+l-k+n'}{\sqrt{s}}\right)^2\cdot 1\nonumber\\
\end{align}
\end{widetext}
Assuming $(\sqrt{2}-1)s^{\frac{1+x}{2}}\ge d-1+n'\ge l-k+n'$ for any $l,k$ (which always holds for large enough $n$), we can simplify the above bound to
\begin{align}
\frac{\sum_{\Vec\lambda\in\settt{S}_{\rm cent}}p_{\Vec\lambda+\Vec\Delta/2,s}}{(\tilde{\lambda}_i-\tilde{\lambda}_j+\Delta_i/2-\Delta_j/2)^2}&\le 2\Gamma \left(\frac{1}{2s}\right)^{1-x\cdot\frac{d^2-d-2}{2}}.\label{strong-error-inter2}
\end{align}
In the same manner, we can show that 
\begin{align}\label{strong-error-inter3}
\sum_{\Vec\lambda\in\settt{S}'_{\rm cent}\setminus\settt{S}_{\rm cent}}p_{\Vec\lambda+\Vec\Delta/2,s}\le 2\Gamma(4n'+d-1)^2\,\left(\frac{1}{2s}\right)^{1-x\cdot\frac{d^2-d-2}{2}}
\end{align}
where
\begin{align}\label{S-cent-prime}
\settt{S}'_{{\rm cent}}:=\left\{\Vec\lambda\in\settt{S}_{s}~:~\left|\lambda_i-\frac{s}{d}\right|\le \frac{s^{\frac{1+x}{2}}}{2},\ \forall\,i\right\}
\end{align}
is just $\settt{S}_{\rm cent}$ with the restriction $|\lambda_i-\lambda_j|>n'$ lifted. Indeed, since for $\Vec\lambda\in\settt{S}'_{\rm cent}\setminus\settt{S}_{\rm cent}$ there exists a pair $(i,j)$ such that $|\lambda_i-\lambda_j|\le 3n'$ and $|\lambda_k-\lambda_l|\le s^{\frac{1+x}2}$ for the rest, we have
\begin{align}
&\sum_{\Vec\lambda\in\settt{S}'_{\rm cent}\setminus\settt{S}_{\rm cent}}p_{\Vec\lambda+\Vec\Delta/2,s}\nonumber\\
\le&\sum_{\Vec\lambda\in\settt{S}'_{\rm cent}\setminus\settt{S}_{\rm cent}}\Gamma\,b_{\tilde{\Vec\lambda}+\Vec\Delta/2}\cdot\prod_{k<l}\left(\frac{\tilde{\lambda}_k-\tilde{\lambda}_l+\Delta_k/2-\Delta_l/2}{\sqrt{s}}\right)^2\nonumber\\
\le&\sum_{\Vec\lambda\in\settt{S}'_{\rm cent}\setminus\settt{S}_{\rm cent}}\Gamma\,b_{\tilde{\Vec\lambda}+\Delta/2}\cdot\left(\frac{3n'+d-1+n'}{\sqrt{s}}\right)^2\cdot\left(2s\right)^{x\cdot\frac{d^2-d-2}{2}} \nonumber\\
\le &2\Gamma\left(4n'+d-1\right)^2\cdot\left(\frac{1}{2s}\right)^{1-x\cdot\frac{d^2-d-2}{2}}. \nonumber
\end{align}

The probability that $\Vec\lambda\not\in\settt{S}'_{\rm cent}$ vanishes exponentially due to large deviation theory. Precisely, the following bound holds \cite{christandl2006spectra,o2015quantum}:
\begin{align}
&\sum_{\Vec\lambda\in\settt{S}'_{\rm cent}}p_{\Vec\lambda+\Vec\Delta/2,s}\nonumber\\
\ge& 1-\max_{\Vec\lambda'-\Vec\Delta/2\not\in\settt{S}_{\rm cent}}(s+1)^{\frac{d(d-1)}{2}}\exp\left\{-\frac{2d_{\rm Young}(\Vec\lambda', \Vec{s/d})^2}{s}\right\}
\end{align}
where $\Vec{s/d}$ here refers to the Young diagram $(s/d,\dots,s/d)$. Substituting in Eq.\ (\ref{S-cent-prime}) and using $s=\Theta(n)$ we get
\begin{align}\label{strong-error-inter4}
\sum_{\Vec\lambda\in\settt{S}'_{\rm cent}}p_{\Vec\lambda+\Vec\Delta/2,s}\ge 1-O\left(e^{-n^x}\right).
\end{align}
Finally, summarising Eqs.\ (\ref{strong-error-inter1}), (\ref{strong-error-inter2}), (\ref{strong-error-inter3}), and (\ref{strong-error-inter4}), we get 
\begin{align}
&1-F_{s}\nonumber\\
\le&1-\min_{\Vec\Delta\in\settt{S}_{\rm diff}}\sum_{\Vec\lambda\in\settt{S}_{\rm cent}}\sqrt{p_{\Vec\lambda,s} p_{\Vec\lambda+\Vec\Delta,s}}\\
\le& \left(\frac{d(d-1)(n')^2}{2}+(4n'+d-1)^2\right)2\Gamma\cdot\left(\frac{1}{2s}\right)^{1-x\cdot\frac{d^2-d-2}{2}}\nonumber\\
&\qquad+O(n^{-1+2\gamma})+O\left(n^{-\frac{3-x-4\gamma}{2}}\right)+O\left(e^{-n^x}\right).
\end{align}
Picking any $x\in(0,\min\{2\alpha/(d^2-d-2),1\})$ we get 
\begin{align}
1-F_{s}&\le \Gamma\left(d^2-d+32\right)\cdot(n')^2\left(\frac{1}{2s}\right)^{1-\alpha}+O(n^{-1+2\gamma}).
\end{align}

Recalling that $n'= n^\gamma+d-1$ and $s=\beta\cdot n$,  we get Eq.\ (\ref{F_m}).
Since the above bound holds for any $x>0$, the scaling of $1-F_{s}$ approaches $s^{-1+2\gamma}$ in the asymptotic limit of large $s$.

\section{Resource requirements and implementation}

\subsection{Compression of quantum reference frames}\label{sec-dim-prob}
In both models, the reference frame states are constructed on $n_{\rm R}$ qudits, which is a $(d^{n_{\rm R}})$-dimensional system. However, here we show that the reference frame states can be compressed into a much smaller system. Indeed, the dimension of the state is given by the effective dimension 
\begin{align}
d_{\rm R}:=\dim\left(\mathsf{Span}\{\map{U}_{\rm R}(\Psi)\}_{U\in\grp{SU}(d)}\right),
\end{align}
which, as we show in the following, grows only polynomially in $n_{\rm R}$.

Since we always use entangled reference frames, $U_{\rm R}=(U\otimes I)^{\otimes\frac{n_{\rm R}}{2}}$.
By Schur-Weyl duality, we have the decomposition
\begin{align}
U_{\rm R}\simeq\bigoplus_{\Vec\lambda\in\settt{Y}_{n_{\rm R}/2}}(U_\Vec\lambda\otimes I_{m_\Vec\lambda})\otimes I^{\otimes\frac{n_{\rm R}}{2}}.
\end{align}
Therefore, $d_{\rm R}$, which is equal to the rank of the twirled state $\int\d U~\map{U}_{\rm R}(\Psi)$, satisfies the upper bound
\begin{align}
d_{\rm R}&\le \sum_{\Vec\lambda\in\settt{Y}_{n_{\rm R}/2}}d_\lambda^2\nonumber\\
&\le |\settt{Y}_{n_{\rm R}/2}|\max_{\Vec\lambda\in\settt{Y}_{n_{\rm R}/2}}d_\Vec\lambda^2\nonumber\\
&\le \left(\frac{n_{\rm R}}{2}+1\right)^{(d^2-1)}.
\end{align}
The last bound is quite straightforward to obtain; see, for instance, Ref.\ \cite[Eqs.\ (6.16) and (6.18)]{hayashi2017group}. Therefore, we can compress the reference frame state to a system of much smaller dimension, which reduces exponentially the cost of quantum memory during idle time. The cost can be further reduced at the price of a small recovery error using the compression protocols in Refs.\ \cite{yang2016efficient,yang2016optimal,yang2018compression}.

\subsection{Computational efficiency of implementation}\label{sec:Computational efficiency of implementation}
At first sight, our encoding and decoding may appear to necessitate sampling from the Haar measure on $\grp{SU}(d)$ in order to be implemented [see Eqs.~\eqref{E-cov} and \eqref{D-cov}]. From a computational complexity standpoint, this would be a problem, since to implement a random Haar unitary one needs an exponential number of two-qubit gates and random bits \cite{Knill1995Haar}. Luckily, our encoding and decoding do not require sampling from the Haar measure, but rather only from a distribution which agrees with it upto the $t^\textup{th}$ moment, for appropriate $t$. Such equivalent distributions are known as unitary $t$-designs~\cite{PhysRevLett.98.130502,Dahlsten_2007,PhysRevA.80.012304,CMPHarrow,Diniz2011,PhysRevA.78.062329,CMPBrandao,PhysRevX.7.021006}, and are more efficient to implement. 

Specifically, let $U$ be a unitary representation on $\C^{d\times d}$ and $P_{t,t}(U)$ be a matrix whose entries are polynomials of order $t$ in the coefficients of $U$ and of order $t$ in the coefficients of $U^*$, and let $\mathbb{E}_{U\sim \nu}[f(U)]$ be the expectation value of a function $f$ according to measure $\nu$. We then say that $ \mathbb{E}_{U\sim \nu_\textup{Haar}}[P_{t,t}(U)]$, where $\nu_\textup{Haar}$ denotes the Haar measure, \emph{admits a unitary $t$-design}.

\begin{prop}[Encoder and decoder as designs]\label{prop:encoder decode designs}
	The encoder $\map{E}_\cov(X_{\rm L})$ and decoder $\map{D}_\cov(X_\Cspace)$ admit unitary $(n_\Cspace+n_{\rm R}/2+1)$-designs for all $X_{\rm L}\in \Lin(\mathcal{H}_{\rm L})$, $X_\Cspace\in \Lin(\mathcal{H}_\Cspace^{\otimes n})$.
\end{prop}

\proof
	We start with the encoder. Expanding the definition Eq. \eqref{E-cov}, one can write 
	\begin{widetext}
	\begin{align}\label{encoder 2 proof}
	\map{E}_\cov(\cdot) &= \mathbb{E}_{U_\ph \sim {\rm Haar}} \!\! \left[\left( U_\ph^{\otimes n_\Cspace} {\map{E}}\left( U_{\rm L}^\dag  (\cdot) U_{\rm L}\right) {U_\ph^\dag}^{\otimes n_\Cspace}\right) \otimes\left( \Big({\id_\ph\otimes U_\ph}\Big) ^{\!\otimes n_{\rm R}/2} \psi \left({\id_\ph\otimes U_\ph^\dag}\right) ^{\!\otimes n_{\rm R}/2}\right) \right].
	\end{align}	
	\end{widetext}
	Recall from section \ref{sec:preliminary} that the representation of $U$ used on all the individual qudits is the same faithful representation of $\grp{SU}(d)$. As such, the logical and computational representations are related via a linear isomorphism, $V_{\Cspace \mapsto \rm L}$, satisfying  $ U_{\rm L}= V_{\Cspace \mapsto \rm L} U_{\ph \,} V_{\Cspace \mapsto \rm L}^\dag$ for all logical and computational unitary representations $U_{\rm L}$ and $U_{\ph\,} $. 
	Since the Hilbert spaces involved are finite dimensional, and $V_{\Cspace \mapsto \rm L}$, $\map{E}$ are linear maps, it follows that the matrix entries of the terms in square brackets in Eq. \eqref{encoder 2 proof}, are polynomials of order $(n_\Cspace+n_{\rm R}/2+1)$ in the coefficients of $U_\ph$ and of the same order in the matrix entries of $U_\ph^*$. Hence it is a unitary $(n_\Cspace+n_{\rm R}/2+1)$-design.
	
	In the case of the decoder, we have
	\begin{align}\label{decoder 2 proof}
	\map{D}_\cov(\cdot)=\mathbb{E}_{U_\ph \sim {\rm Haar}} \!\! \left[{\map{U}}_{\rm L}\circ\map{D}\circ{\map{U}}_\Cspace^{-1}\otimes\map{M}_{{U}_{\ph}}\right] (\cdot),
	\end{align}
	where 
	\begin{align}\label{measure map for proof}
	\map{M}_{{U}_{\ph\,}}(\cdot):=\sum_{\settt{s}\subset\{1,\dots, s_{\rm R}\}}\Tr\left[(\cdot)|\eta_{{U}}\>\<\eta_{{U}}|_{\settt{s}}\otimes P_{{\rm err},\settt{s}^c}\right],
	\end{align}
	with $|{\eta_{{U}}}\>=(U_\ph\otimes\id_\ph)^m |{\eta_{0}}\>$, and $|{\eta_{0}}\>:=\bigoplus_{\Vec\lambda}d_{\Vec\lambda}|{\Phi^+_{\Vec\lambda}}\>\otimes |{\Phi^+_{m_\Vec\lambda}}\>$,\quad  $s_{\rm R}\cdot 2m=n_{\rm R}$.
	Similarly to as in Eq.~\eqref{encoder 2 proof}, the term ${\map{U}}_{\rm L}\circ\map{D}\circ{\map{U}}_\Cspace^{-1}(\cdot)$ in Eq.~\eqref{decoder 2 proof}, when evaluated on any input, has matrix entries which are polynomials of order $n_{\ph\,}+1$ in the coefficients of $U_{\ph\,}$ and also of the same order in the coefficients of $U_{\ph\,}^*$. Hence observing the form of Eq.~\eqref{measure map for proof}, we conclude the proof.
\qed

Ref.\ \cite{CMPBrandao} devises and quantifies a method to approximate unitary $t$-designs. We start with a strategy to construct a random unitary $U$  over $N$ qubits on sites labelled 1 to $N$: Pick an index $l$ uniformly from $[N-1]$ and a unitary denoted $U_{l,l+1}$, drawn from the Haar measure on $\grp{SU}(4)$, which acts on the two neighbouring qubits, $l$ and $l+1$. Repeat the above $k$ times and multiply the unitaries together. The resultant unitary is $U$ and it is sampled from a distribution which we denote $\nu(k)$.

In our case if the qudits each consist of $N$ qubits, we can use this procedure to construct an approximate covariant encoding as follows: We record the random sequence of $k$ nearest neighbour unitaries $U_{l,l+1}$. We first apply it 
to the logical qubits and encode via $\map{E}$, and then apply the recorded random sequence $n_\Cspace+ n_{\rm R}/2$ times to the computational and reference frame qubits. The resultant encoder is

\begin{align}
\map{E}_\cov^{\nu(k)}(\cdot)&:=\mathbb{E}_{U\sim \nu(k)}\left[\map{U}_\Cspace\circ\map{E}\circ\map{U}_{\rm L}^{-1}(\cdot)\otimes\map{U}_{\rm R}(\Psi)\right].\label{E-cov approx}
\end{align}
Similarly, one can use the procedure to produce the approximate decoder,
\begin{align}
\map{D}_\cov^{\nu(k)}(\cdot)&:= \mathbb{E}_{\hat{U}\sim \nu(k)}\left[\left(\hat{\map{U}}_{\rm L}\circ\map{D}\circ\hat{\map{U}}_\Cspace^{-1}\otimes\map{M}_{\hat{U}}\right) (\cdot)\right].\label{D-cov approx}
\end{align}
From~\cite{CMPBrandao} it follows that
\begin{align}
\| \map{E}^{\nu(k)}_\cov - \map{E}_\cov \|_\diamond &\leq \epsilon_{\cov},\label{approx code}\\
\| \map{D}^{\nu(k)}_\cov - \map{D}_\cov \|_\diamond &\leq \epsilon_{\cov},\label{approx decode}
\end{align} 
if 
\begin{align}
k=& 170,000 \, N \lceil \log(4 (n_\Cspace+n_{\rm R}/2+1))\rceil^2 (n_\Cspace+n_{\rm R}/2+1) ^{8.1} \\
&\times \Big(  2 N (n_\Cspace+n_{\rm R}/2+1)+1 +\log(1/\epsilon_\cov) \Big)\notag,
\end{align}
which scales polynomially in both $n_\Cspace$ and $n_{\rm R}$ (recall lower bounds on $\epsilon_{\cov}$, in Propositions~\ref{prop-converse} and \ref{prop-converse-strong-error}
).
Since sampling a polynomial number of times from $\grp{SU}(4)$ can be performed efficiently,  both the approximate encoder $\map{E}^{\nu(k)}_\cov$ and decoder $\map{D}^{\nu(k)}_\cov$ can be efficiently implemented in $n_\Cspace$ and $n_{\rm R}$ so long as the reference frame state $\Psi$ and measurement $\map{M}_{\hat{U}}$ can be efficiently constructed. It is also important to note that, under the application of an arbitrary number of transversal gates, the errors in Eqs. \eqref{approx code}, \eqref{approx decode} do not grow \cite{PhysRevX.7.021006}.

\section{Details of the numerical calculations}\label{sec-numerical}
In this section, we introduce the numerical experiments we implemented.

We use the ``5-qubit code" \cite{laflamme1996perfect} (5 computational qubits; one logical qubit) as the subroutine encoding and decoding pair, and calculate the performance of our $\grp{SU}(2)$ covariant code. 
The ``5-qubit code" can correct arbitrary single qubit errors on its code space. It is realized by the following encoding:
\begin{align*}
	|0_{\rm L}\> \rightarrow 1/4 \, &(
	|00000\> + |10010\> + |01001\> + |10100\> \\
	&+ |01010\> - |11011\> - |00110\> - |11000\> \\
	&- |11101\> - |00011\> - |11110\> - |01111\> \\
	&- |10001\> - |01100\> - |10111\> + |00101\>), \\
	|1_{\rm L}\> \rightarrow 1/4 \, &(
	|11111\> + |01101\> + |10110\> + |01011\> \\
	&+ |10101\> - |00100\> - |11001\> - |00111\> \\
	&- |00010\> - |11100\> - |00001\> - |10000\> \\
	&- |01110\> - |10011\> - |01000\> + |11010\>) \, .
\end{align*}
After this encoding, arbitrary single qubit errors can be detected with stabilizer measurements with 4 ancilla qubits. Depending on the measurement outcome, a correction will be performed, and then the original logical qubit will be obtained by doing the inverse of the encoding map.

In our experiment, not only erasure errors but also other common error types like dephasing errors and depolarising errors are considered.
For erasure errors, we use the generalised sine states as the reference frame state. This is given by 
$|\phi\> \propto \sum_{j=j_{\rm min}}^{J} \frac{\sin (\pi j/J)}{\sqrt{2j+1}}|I^{(j)}\kk$. 
After encoding with $U_g$, it becomes 
$|\phi_g\> \propto \sum_{j=j_{\rm min}}^{J} \frac{\sin (\pi j/J)}{\sqrt{2j+1}}|U_g^{(j)}\kk$.
The probability density function of the outcome $h$ with covariant POVM $\big\{d U_h, |\eta_h\>\<\eta_h|  \big\}$, $|\eta_h\> = \oplus_j |U_h^{(j)}\kk$ is then given by:
\begin{align}
p(h|g) = \dfrac{\Big|\sum_j \frac{ \sin(\pi j/J)}{\sqrt{2j+1}} \Tr\Big[U_{gh^{-1}}^{(j)}\Big] \Big|^2}{\Big|\sum_j \sin(\pi j/J) \cdot \sqrt{2j+1}  \Big|^2} \, .
\end{align}

For i.i.d.\null{} depolarizing and dephasing errors, we use $n_{\rm R}/2$ Bell states $|\phi\> = |\Phi^{+}\>^{\otimes n_{\rm R}/2}$ as the reference frame state. 
After encoding with $U_g$, it becomes $|\phi_g\> = |\Phi^{+}_g\>^{\otimes n_{\rm R}/2}$. 
When measuring these states, we will get a set of $U_h$ with $n_{\rm R}/2$ elements: $\{U_{h,1}, ...,  U_{h, n_{\rm R}/2}\}$. 
From this set we will get a final $U_h$, which maximises $\<\Phi^{+}_h| \Big( \sum_{i=1}^{n_{\rm R}/2} |\Phi^{+}_{h,i}\>\<\Phi^{+}_{h,i}| \Big) |\Phi^{+}_h\> $, and can be regarded as the best representation of these measurement outcomes.

We also tested our protocol in the situation where one fifth of the reference frame qubits go through the completely depolarising/dephasing error channel. 
For this case, an algorithm similar as majority voting is first used. 
For each $U_{h,i}$, we calculate $\<\Phi^{+}_{h,i}| \Big( \sum_{i=1}^{n_{\rm R}/2} |\Phi^{+}_{h,i}\>\<\Phi^{+}_{h,i}| \Big) |\Phi^{+}_{h,i}\> $. The smallest one fifth outcomes will be regarded as affected by noise. 
The remaining set will be used to get the final $U_h$ which is the best representation of these outcomes.

After getting $U_h$, we perform the decoding according to it and then calculate the performance. We analyse how the performance changes as a function of the size of the reference frame state.

To calculate the performance, we first calculate the entanglement fidelity as defined in Eq. (\ref{F-ent}). For the 5-qubit code, we get:
\begin{align}
	F_{\rm ent, 5-code} = \int dg \, p(h|g) \, F_{\rm ent, 5-code}(g,h) \, ,
\end{align}
where $p(h|g)$ is the probability to get measurement outcome $U_h$ with encoding $U_g$, and $F_{\rm ent, 5-code}(g,h)$ is the entanglement fidelity using $U_g$ in encoding and $U_h$ in decoding. 
For every reference frame size considered ($2m$ for the sine state and $n_{\rm R}$ for the Bell states), we randomly generated $400$ unitaries $U_g$ from the Haar measure, and calculated their average entanglement fidelity, giving us the final $F_{\rm ent, 5-code}$. 

To calculate the worst-case error $\epsilon_{\rm cov, 5-code}$, we use the relationship between the worst-case error and the entanglement error in Eq. (\ref{relation_wc_ent}). 
Notice that the whole process of encoding, noisy evolution, and decoding in the considered case is a qubit covariant channel. Therefore, its Choi state can always be decomposed as $\lambda |\Phi^{+}\>\<\Phi^{+}| + \frac{(1-\lambda)}{3} (I - |\Phi^{+}\>\<\Phi^{+}|)$, for a $\lambda \in [0,1]$. Thus its worst-case error is equal to one minus the entanglement fidelity. 
Thus we can bound the worst-case error by  $\epsilon_{\rm cov, 5-code} \leq 2 \cdot (1-F_{\rm ent, 5-code})$.

 
\bibliographystyle{unsrt}
\bibliography{CovQEC}

\begin{thebibliography}{10}

\bibitem{Campbell2017}
Earl~T. Campbell, Barbara~M. Terhal, and Christophe Vuillot.
\newblock Roads towards fault-tolerant universal quantum computation.
\newblock {\em Nature}, 549(7671):172--179, September 2017.

\bibitem{Shor1996}
Peter~W. Shor.
\newblock Fault-tolerant quantum computation.
\newblock In {\em Proceedings of the 37th {{Annual Symposium}} on
  {{Foundations}} of {{Computer Science}}}, pages 56--65, 1996.

\bibitem{Gottesman2006}
D.~Gottesman.
\newblock Quantum {{Error Correction}} and {{Fault Tolerance}}.
\newblock In Jean-Pierre Fran{\c c}oise, Gregory~L. Naber, and Tsou~Sheung
  Tsun, editors, {\em Encyclopedia of {Mathematical Physics}}, pages 196--201.
  {Academic Press}, {Oxford}, January 2006.

\bibitem{eastin2009restrictions}
Bryan Eastin and Emanuel Knill.
\newblock Restrictions on {{Transversal Encoded Quantum Gate Sets}}.
\newblock {\em Physical Review Letters}, 102(11):110502, March 2009.

\bibitem{woods2020continuousgroups}
Mischa~P. Woods and {\'A}lvaro~M. Alhambra.
\newblock Continuous groups of transversal gates for quantum error correcting
  codes from finite clock reference frames.
\newblock {\em Quantum}, 4:245, March 2020.

\bibitem{faist2019continuous}
Philippe Faist, Sepehr Nezami, Victor~V. Albert, Grant Salton, Fernando
  Pastawski, Patrick Hayden, and John Preskill.
\newblock Continuous symmetries and approximate quantum error correction.
\newblock {\em Phys. Rev. X}, 10:041018, Oct 2020.

\bibitem{bravyi2005universal}
Sergey Bravyi and Alexei Kitaev.
\newblock Universal quantum computation with ideal {{Clifford}} gates and noisy
  ancillas.
\newblock {\em Physical Review A}, 71(2):022316, February 2005.

\bibitem{knill1996threshold}
E.~Knill, R.~Laflamme, and W.~Zurek.
\newblock Threshold {{Accuracy}} for {{Quantum Computation}}.
\newblock {\em arXiv:quant-ph/9610011}, October 1996.

\bibitem{bombin2007topological}
H.~Bombin and M.~A. {Martin-Delgado}.
\newblock Topological {{Computation}} without {{Braiding}}.
\newblock {\em Physical Review Letters}, 98(16):160502--4, April 2007.

\bibitem{paetznick2013universal}
Adam Paetznick and Ben~W. Reichardt.
\newblock Universal {{Fault}}-{{Tolerant Quantum Computation}} with {{Only
  Transversal Gates}} and {{Error Correction}}.
\newblock {\em Physical Review Letters}, 111(9):090505, August 2013.

\bibitem{jochym2014using}
Tomas {Jochym-O'Connor} and Raymond Laflamme.
\newblock Using {{Concatenated Quantum Codes}} for {{Universal
  Fault}}-{{Tolerant Quantum Gates}}.
\newblock {\em Physical Review Letters}, 112(1):010505, January 2014.

\bibitem{bombin2015gauge}
H{\'e}ctor Bomb{\'i}n.
\newblock Gauge color codes: Optimal transversal gates and gauge fixing in
  topological stabilizer codes.
\newblock {\em New Journal of Physics}, 17(8):083002, August 2015.

\bibitem{yoder2016universal}
Theodore~J. Yoder, Ryuji Takagi, and Isaac~L. Chuang.
\newblock Universal {{Fault}}-{{Tolerant Gates}} on {{Concatenated Stabilizer
  Codes}}.
\newblock {\em Physical Review X}, 6(3):031039, September 2016.

\bibitem{Browneaay4929}
Benjamin~J. Brown.
\newblock A fault-tolerant non-{{Clifford}} gate for the surface code in two
  dimensions.
\newblock {\em Science Advances}, 6(21):eaay4929, May 2020.

\bibitem{peres2001transmission}
Asher Peres and Petra~F Scudo.
\newblock Transmission of a cartesian frame by a quantum system.
\newblock {\em Physical Review Letters}, 87(16):167901, Sep 2001.

\bibitem{bagan2001aligning}
E.~Bagan, M.~Baig, and R.~{Mu{\~n}oz-Tapia}.
\newblock Aligning {{Reference Frames}} with {{Quantum States}}.
\newblock {\em Physical Review Letters}, 87(25):257903, November 2001.

\bibitem{chiribella2004efficient}
G.~Chiribella, G.~M. D'Ariano, P.~Perinotti, and M.~F. Sacchi.
\newblock Efficient {{Use}} of {{Quantum Resources}} for the {{Transmission}}
  of a {{Reference Frame}}.
\newblock {\em Physical Review Letters}, 93(18):180503, October 2004.

\bibitem{bartlett2007reference}
Stephen~D. Bartlett, Terry Rudolph, and Robert~W. Spekkens.
\newblock Reference frames, superselection rules, and quantum information.
\newblock {\em Reviews of Modern Physics}, 79(2):555--55, April 2007.

\bibitem{PhysRevA.69.052326}
Alexei Kitaev, Dominic Mayers, and John Preskill.
\newblock Superselection rules and quantum protocols.
\newblock {\em Phys. Rev. A}, 69:052326, May 2004.

\bibitem{2013Marvian}
Iman Marvian and Robert~W Spekkens.
\newblock {The theory of manipulations of pure state asymmetry: I. Basic tools,
  equivalence classes and single copy transformations}.
\newblock 15(3):033001, mar 2013.

\bibitem{PhysRevA.78.022304}
Iman Marvian and Robert~B. Mann.
\newblock {Building all time evolutions with rotationally invariant
  Hamiltonians}.
\newblock {\em Phys. Rev. A}, 78:022304, Aug 2008.

\bibitem{Bartlett2006}
Stephen~D. Bartlett, Terry Rudolph, and Robert~W. Spekkens.
\newblock Dialogue concerning two views on quantum coherence: factist and
  fictionist.
\newblock {\em International Journal of Quantum Information}, 04(01):17--43,
  2006.

\bibitem{Angelo_2011}
Renato~M. Angelo, Nicolas Brunner, Sandu Popescu, Anthony~J. Short, and Paul
  Skrzypczyk.
\newblock Physics within a quantum reference frame.
\newblock {\em Journal of Physics A: Mathematical and Theoretical},
  44(14):145304, mar 2011.

\bibitem{Angelo_2012}
R.~M. Angelo and A~D Ribeiro.
\newblock Kinematics and dynamics in noninertial quantum frames of reference.
\newblock {\em Journal of Physics A: Mathematical and Theoretical},
  45(46):465306, October 2012.

\bibitem{2011.01951}
Marius Krumm, Philipp~A. Hoehn, and Markus~P. Mueller.
\newblock Quantum reference frame transformations as symmetries and the paradox
  of the third particle, 2020.

\bibitem{1912.00033}
Philipp~A. Hoehn, Alexander R.~H. Smith, and Maximilian P.~E. Lock.
\newblock The trinity of relational quantum dynamics, 2019.

\bibitem{Giacomini2019}
Flaminia Giacomini, Esteban Castro-Ruiz, and {\v{C}}aslav Brukner.
\newblock {Quantum mechanics and the covariance of physical laws in quantum
  reference frames}.
\newblock {\em Nature Communications}, 10(1):494, dec 2019.

\bibitem{1809.05093}
Augustin Vanrietvelde, Philipp~A Hoehn, and Flaminia Giacomini.
\newblock {Switching quantum reference frames in the $N$-body problem and the
  absence of global relational perspectives}, 2018.

\bibitem{PhysRevLett.123.090404}
Flaminia Giacomini, Esteban Castro-Ruiz, and {\v{C}}aslav Brukner.
\newblock Relativistic quantum reference frames: The operational meaning of
  spin.
\newblock {\em Physical Review Letters}, 123:090404, Aug 2019.

\bibitem{Vanrietvelde2020changeof}
Augustin Vanrietvelde, Philipp~A. Hoehn, Flaminia Giacomini, and Esteban
  Castro-Ruiz.
\newblock A change of perspective: switching quantum reference frames via a
  perspective-neutral framework.
\newblock {\em {Quantum}}, 4:225, January 2020.

\bibitem{delahamette2021perspectiveneutral}
Anne-Catherine de~la Hamette, Thomas~D. Galley, Philipp~A. Hoehn, Leon
  Loveridge, and Markus~P. Mueller.
\newblock Perspective-neutral approach to quantum frame covariance for general
  symmetry groups, 2021.

\bibitem{hoehn2021quantum}
Philipp~A. Hoehn, Maximilian P.~E. Lock, Shadi~Ali Ahmad, Alexander R.~H.
  Smith, and Thomas~D. Galley.
\newblock Quantum relativity of subsystems, 2021.

\bibitem{delaHamette2020quantumreference}
Anne-Catherine de~la Hamette and Thomas~D. Galley.
\newblock Quantum reference frames for general symmetry groups.
\newblock {\em {Quantum}}, 4:367, November 2020.

\bibitem{hayden2017error}
Patrick Hayden, Sepehr Nezami, Sandu Popescu, and Grant Salton.
\newblock Error {{Correction}} of {{Quantum Reference Frame Information}}.
\newblock {\em arXiv:1709.04471 [quant-ph]}, September 2017.

\bibitem{acin2001optimal}
A~Acin, E~Jan{\'e}, and G~Vidal.
\newblock Optimal estimation of quantum dynamics.
\newblock {\em Physical Review A}, 64(5):050302, 2001.

\bibitem{chiribella2005optimal}
G.~Chiribella, G.~M. D'Ariano, and M.~F. Sacchi.
\newblock Optimal estimation of group transformations using entanglement.
\newblock {\em Physical Review A}, 72(4):042338, October 2005.

\bibitem{PhysRevLett.98.130502}
R.~Oliveira, O.~C.~O. Dahlsten, and M.~B. Plenio.
\newblock Generic {{Entanglement Can Be Generated Efficiently}}.
\newblock {\em Physical Review Letters}, 98(13):130502, March 2007.

\bibitem{Dahlsten_2007}
O~C~O Dahlsten, R~Oliveira, and M~B Plenio.
\newblock The emergence of typical entanglement in two-party random processes.
\newblock {\em Journal of Physics A: Mathematical and Theoretical},
  40(28):8081--8108, July 2007.

\bibitem{PhysRevA.80.012304}
Christoph Dankert, Richard Cleve, Joseph Emerson, and Etera Livine.
\newblock Exact and approximate unitary 2-designs and their application to
  fidelity estimation.
\newblock {\em Physical Review A}, 80(1):012304, July 2009.

\bibitem{CMPHarrow}
Aram~W. Harrow and Richard~A. Low.
\newblock Random {{Quantum Circuits}} are {{Approximate}} 2-designs.
\newblock {\em Communications in Mathematical Physics}, 291(1):257--302,
  October 2009.

\bibitem{Diniz2011}
Igor~Tuche Diniz and Daniel Jonathan.
\newblock Comment on ``{{Random Quantum Circuits}} are {{Approximate}}
  2-designs'' by {{A}}.{{W}}. {{Harrow}} and {{R}}.{{A}}. {{Low}} ({{Commun}}.
  {{Math}}. {{Phys}}. 291, 257\textendash 302 (2009)).
\newblock {\em Communications in Mathematical Physics}, 304(1):281--293, May
  2011.

\bibitem{PhysRevA.78.062329}
Ludovic Arnaud and Daniel Braun.
\newblock Efficiency of {{Producing Random Unitary Matrices}} with {{Quantum
  Circuits}}.
\newblock {\em Physical Review A}, 78(6):062329, December 2008.

\bibitem{CMPBrandao}
Fernando G. S.~L. Brandao, Aram~W. Harrow, and Michal Horodecki.
\newblock Local random quantum circuits are approximate polynomial-designs.
\newblock {\em Communications in Mathematical Physics}, 346(2):397--434,
  September 2016.

\bibitem{PhysRevX.7.021006}
Yoshifumi Nakata, Christoph Hirche, Masato Koashi, and Andreas Winter.
\newblock Efficient {{Quantum Pseudorandomness}} with {{Nearly
  Time}}-{{Independent Hamiltonian Dynamics}}.
\newblock {\em Physical Review X}, 7(2):021006, April 2017.

\bibitem{aharonov_fault-tolerant_1997}
D.~Aharonov and M.~{Ben-Or}.
\newblock Fault-tolerant quantum computation with constant error.
\newblock In {\em Proceedings of the Twenty-Ninth Annual {{ACM}} Symposium on
  {{Theory}} of Computing}, {{STOC}} '97, pages 176--188, {New York}, 1997.
  {ACM}.

\bibitem{buvzek1999optimal}
Vladim{\'\i}r Bu{\v{z}}ek, Radoslav Derka, and Serge Massar.
\newblock Optimal quantum clocks.
\newblock {\em Physical Review Letters}, 82(10):2207, 1999.

\bibitem{kubica2020using}
Aleksander Kubica and Rafa\l{} Demkowicz-Dobrza\ifmmode~\acute{n}\else
  \'{n}\fi{}ski.
\newblock Using quantum metrological bounds in quantum error correction: A
  simple proof of the approximate eastin-knill theorem.
\newblock {\em Physical Review Letters}, 126:150503, Apr 2021.

\bibitem{zhou2021new}
Sisi Zhou, Zi-Wen Liu, and Liang Jiang.
\newblock New perspectives on covariant quantum error correction.
\newblock {\em Quantum}, 5:521, 2021.

\bibitem{bennett1997capacities}
Charles~H. Bennett, David~P. DiVincenzo, and John~A. Smolin.
\newblock Capacities of {{Quantum Erasure Channels}}.
\newblock {\em Physical Review Letters}, 78(16):3217--3220, April 1997.

\bibitem{gottesman1997stabilizer}
Daniel Gottesman.
\newblock {\em Stabilizer {{Codes}} and {{Quantum Error Correction}}}.
\newblock PhD thesis, California Institute of Technology, May 1997.

\bibitem{renes_efficient_2012}
Joseph~M. Renes, Fr{\'e}d{\'e}ric Dupuis, and Renato Renner.
\newblock Efficient {{Polar Coding}} of {{Quantum Information}}.
\newblock {\em Physical Review Letters}, 109(5):050504, August 2012.

\bibitem{renes_polar_2014}
Joseph~M. Renes and Mark~M. Wilde.
\newblock Polar {{Codes}} for {{Private}} and {{Quantum Communication Over
  Arbitrary Channels}}.
\newblock {\em IEEE Transactions on Information Theory}, 60(6):3090--3103, June
  2014.

\bibitem{kumar_reed-muller_2016}
S.~Kumar, R.~Calderbank, and H.~D. Pfister.
\newblock Reed-muller codes achieve capacity on the quantum erasure channel.
\newblock In {\em 2016 {{IEEE International Symposium}} on {{Information
  Theory}} ({{ISIT}})}, pages 1750--1754, July 2016.

\bibitem{laflamme1996perfect}
Raymond Laflamme, Cesar Miquel, Juan~Pablo Paz, and Wojciech~Hubert Zurek.
\newblock Perfect quantum error correcting code.
\newblock {\em Physical Review Letters}, 77(1):198, 1996.

\bibitem{moyin2021github}
Yin Mo.
\newblock Code for optimal universal quantum error correction via bounded
  reference frames: http://github.com/moyin23/covariantqec, 2021.

\bibitem{yang2016efficient}
Yuxiang Yang, Giulio Chiribella, and Daniel Ebler.
\newblock Efficient quantum compression for ensembles of identically prepared
  mixed states.
\newblock {\em Physical Review Letters}, 116(8):080501, Feb 2016.

\bibitem{yang2016optimal}
Yuxiang Yang, Giulio Chiribella, and Masahito Hayashi.
\newblock Optimal compression for identically prepared qubit states.
\newblock {\em Physical Review Letters}, 117(9):090502, Aug 2016.

\bibitem{yang2018compression}
Yuxiang Yang, Ge~Bai, Giulio Chiribella, and Masahito Hayashi.
\newblock Compression for quantum population coding.
\newblock {\em IEEE Transactions on Information Theory}, 64(7):4766--4783,
  2018.

\bibitem{Knill1995Haar}
E.~Knill.
\newblock Approximation by {{Quantum Circuits}}.
\newblock {\em arXiv:quant-ph/9508006}, August 1995.

\bibitem{oszmaniec2020epsilon}
Micha{\l} Oszmaniec, Adam Sawicki, and Micha{\l} Horodecki.
\newblock Epsilon-nets, unitary designs and random quantum circuits.
\newblock 2020.

\bibitem{PhysRevLett.119.180509}
Kristan Temme, Sergey Bravyi, and Jay~M. Gambetta.
\newblock Error mitigation for short-depth quantum circuits.
\newblock {\em Phys. Rev. Lett.}, 119:180509, Nov 2017.

\bibitem{PhysRevX.8.031027}
Suguru Endo, Simon~C. Benjamin, and Ying Li.
\newblock Practical quantum error mitigation for near-future applications.
\newblock {\em Phys. Rev. X}, 8:031027, Jul 2018.

\bibitem{wang2020quasi}
Dong-Sheng Wang, Guanyu Zhu, Cihan Okay, and Raymond Laflamme.
\newblock Quasi-exact quantum computation.
\newblock {\em Physical Review Research}, 2(3):033116, 2020.

\bibitem{albarelli2020perspective}
Francesco Albarelli, Marco Barbieri, Marco~G Genoni, and Ilaria Gianani.
\newblock A perspective on multiparameter quantum metrology: From theoretical
  tools to applications in quantum imaging.
\newblock {\em Physics Letters A}, 384(12):126311, 2020.

\bibitem{gorecki2020optimal}
Wojciech Gorecki, Sisi Zhou, Liang Jiang, and Rafa{\l}
  Demkowicz-Dobrza{\'n}ski.
\newblock Optimal probes and error-correction schemes in multi-parameter
  quantum metrology.
\newblock {\em Quantum}, 4:288, 2020.

\bibitem{hou2020minimal}
Zhibo Hou, Zhao Zhang, Guo-Yong Xiang, Chuan-Feng Li, Guang-Can Guo, Hongzhen
  Chen, Liqiang Liu, and Haidong Yuan.
\newblock Minimal tradeoff and ultimate precision limit of multiparameter
  quantum magnetometry under the parallel scheme.
\newblock {\em Physical Review Letters}, 125(2):020501, 2020.

\bibitem{PhysRevLett.127.110501}
Aaron~Z. Goldberg, Luis~L. S\'anchez-Soto, and Hugo Ferretti.
\newblock Intrinsic sensitivity limits for multiparameter quantum metrology.
\newblock {\em Physical Review Letters}, 127:110501, Sep 2021.

\bibitem{meyer2021variational}
Johannes~Jakob Meyer, Johannes Borregaard, and Jens Eisert.
\newblock A variational toolbox for quantum multi-parameter estimation.
\newblock {\em NPJ Quantum Information}, 7(1):1--5, 2021.

\bibitem{Yuxiang2}
Yuxiang Yang, Renato Renner, and Giulio Chiribella.
\newblock Optimal universal programming of unitary gates.
\newblock {\em Physical Review Letters}, 125:210501, Nov 2020.

\bibitem{almheiri2015bulk}
Ahmed Almheiri, Xi~Dong, and Daniel Harlow.
\newblock Bulk locality and quantum error correction in {{AdS}}/{{CFT}}.
\newblock {\em Journal of High Energy Physics}, 2015(4):163, April 2015.

\bibitem{Harlow2018a_short}
Daniel Harlow and Hirosi Ooguri.
\newblock Constraints on {{Symmetries}} from {{Holography}}.
\newblock {\em Physical Review Letters}, 122(19):191601, May 2018.

\bibitem{Harlow2018}
Daniel Harlow and Hirosi Ooguri.
\newblock Symmetries in quantum field theory and quantum gravity.
\newblock {\em arXiv:1810.05338 [hep-th]}, June 2019.

\bibitem{TamaToby}
Tamara Kohler and Toby Cubitt.
\newblock Toy models of holographic duality between local hamiltonians.
\newblock {\em Journal of High Energy Physics}, 2019(8):17, Aug 2019.

\bibitem{milekhin2021quantum}
Alexey Milekhin.
\newblock {Quantum error correction and large $N$}, August 2020.

\bibitem{RevModPhys.84.621}
Christian Weedbrook, Stefano Pirandola, Ra\'ul Garc\'{\i}a-Patr\'on, Nicolas~J.
  Cerf, Timothy~C. Ralph, Jeffrey~H. Shapiro, and Seth Lloyd.
\newblock Gaussian quantum information.
\newblock {\em Rev. Mod. Phys.}, 84:621--669, May 2012.

\bibitem{fulton2013representation}
William Fulton and Joe Harris.
\newblock {\em Representation {{Theory}}: {{A First Course}}}, volume 129 of
  {\em Graduate {{Texts}} in {{Mathematics}}}.
\newblock {Springer}, {New York, NY}, 2004.

\bibitem{raginsky2001fidelity}
Maxim Raginsky.
\newblock A fidelity measure for quantum channels.
\newblock {\em Physics Letters A}, 290(1\textendash 2):11--18, November 2001.

\bibitem{kitaev1997quantum}
A~Yu Kitaev.
\newblock Quantum computations: {{Algorithms}} and error correction.
\newblock {\em Russian Mathematical Surveys}, 52(6):1191--1249, December 1997.

\bibitem{gilchrist2005distance}
Alexei Gilchrist, Nathan~K. Langford, and Michael~A. Nielsen.
\newblock Distance measures to compare real and ideal quantum processes.
\newblock {\em Physical Review A}, 71(6):062310, June 2005.

\bibitem{fuchs1999cryptographic}
C.A. Fuchs and J.~{van de Graaf}.
\newblock Cryptographic distinguishability measures for quantum-mechanical
  states.
\newblock {\em IEEE Transactions on Information Theory}, 45(4):1216--1227,
  1999.

\bibitem{watrous2018theory}
John Watrous.
\newblock {\em The {{Theory}} of {{Quantum Information}}}.
\newblock {Cambridge University Press}, April 2018.

\bibitem{bacon2006efficient}
Dave Bacon, Isaac~L Chuang, and Aram~W Harrow.
\newblock Efficient quantum circuits for schur and clebsch-gordan transforms.
\newblock {\em Physical Review Letters}, 97(17):170502, 2006.

\bibitem{krovi2019efficient}
Hari Krovi.
\newblock An efficient high dimensional quantum schur transform.
\newblock {\em Quantum}, 3:122, 2019.

\bibitem{nielsen2000quantum}
Michael~A. Nielsen and Isaac~L. Chuang.
\newblock {\em Quantum {{Computation}} and {{Quantum Information}}}.
\newblock {Cambridge University Press}, 2010.

\bibitem{watrous2013}
John Watrous.
\newblock Simpler semidefinite programs for completely bounded norms.
\newblock {\em Chicago Journal of Theoretical Computer Science}, 2013:8, 2013.

\bibitem{killoran2012}
Nathan Killoran.
\newblock {\em Entanglement Quantification and Quantum Benchmarking of Optical
  Communication Devices}.
\newblock PhD thesis, University of Waterloo, April 2012.

\bibitem{faist2019practical}
Le~Phuc Thinh, Philippe Faist, Jonas Helsen, David Elkouss, and Stephanie
  Wehner.
\newblock Practical and reliable error bars for quantum process tomography.
\newblock {\em Physical Review A}, 99(5):052311, May 2019.

\bibitem{katariya_geometric_2020}
Vishal Katariya and Mark~M. Wilde.
\newblock Geometric distinguishability measures limit quantum channel
  estimation and discrimination.
\newblock {\em arXiv:2004.10708 [quant-ph]}, 2020.

\bibitem{alicki1988symmetry}
Robert Alicki, S\l{}awomir Rudnicki, and S\l{}awomir Sadowski.
\newblock Symmetry properties of product states for the system of {{$N$}}
  $n$-level atoms.
\newblock {\em Journal of Mathematical Physics}, 29(5):1158--1162, May 1988.

\bibitem{christandl2006spectra}
Matthias Christandl and Graeme Mitchison.
\newblock The {{Spectra}} of {{Quantum States}} and the {{Kronecker
  Coefficients}} of the {{Symmetric Group}}.
\newblock {\em Communications in Mathematical Physics}, 261(3):789--797,
  February 2006.

\bibitem{o2015quantum}
Ryan O'Donnell and John Wright.
\newblock Quantum {{Spectrum Testing}}.
\newblock In {\em Proceedings of the {{Forty}}-{{Seventh Annual ACM}} on
  {{Symposium}} on {{Theory}} of {{Computing}} - {{STOC}} '15}, pages 529--538,
  {Portland, Oregon, USA}, 2015. {ACM Press}.

\bibitem{hayashi2017group}
Masahito Hayashi.
\newblock {\em A {Group Theoretic Approach} to Quantum Information}.
\newblock {Springer International Publishing}, Cham, 2017.

\end{thebibliography}
\end{document}